\newcommand{\iftechreport}[2]{\iftoggle{isTechReport}{#1}{#2}}
\keywords{software fault isolation, sandboxing, WebAssembly, verification}
\let\svthefootnote\thefootnote
\newcommand\blankfootnote[1]{%
  \let\thefootnote\relax\footnotetext{#1}%
  \let\thefootnote\svthefootnote%
}
\let\svfootnote\footnote
\renewcommand\footnote[2][?]{%
  \if\relax#1\relax%
  \blankfootnote{#2}%
  \else%
  \if?#1\svfootnote{#2}\else\svfootnote[#1]{#2}\fi%
  \fi
}
\newtheorem{theorem}{Theorem}
\newtheorem{lemma}{Lemma}
\newtheorem{definition}{Definition}
\newtheorem{proposition}{Proposition}
\newenvironment{subproof}[1][\proofname]{%
  \begin{proof}[#1]%
  }{%
  \end{proof}%
}
\newcommand{\pfcase}[1]{\item[Case] #1}
\newcommand\StrongNI{Disjoint Noninterference}
\newcommand\strongni{disjoint noninterference}
\newcommand\sectionref[1]{\hyperref[#1]{Section~\ref*{#1}}}
\newcommand\secref[1]{\hyperref[#1]{\S\ref*{#1}}}
\newcommand\figref[1]{\hyperref[#1]{Figure~\ref*{#1}}}
\newcommand\lemref[1]{\hyperref[#1]{Lemma~\ref*{#1}}}
\newcommand\thmref[1]{\hyperref[#1]{Theorem~\ref*{#1}}}
\newcommand\appref[1]{\hyperref[#1]{Appendix~\ref*{#1}}}
\newcommand\coderef[1]{\hyperref[#1]{Line~\ref*{#1}}}
\newcommand\defref[1]{\hyperref[#1]{Definition~\ref*{#1}}}
\newcommand\tabref[1]{\hyperref[#1]{Table~\ref*{#1}}}
\newcommand\listref[1]{\hyperref[#1]{Listing~\ref*{#1}}}
\newcommand\Red[1]{{\color{red} #1}}
\newcommand\ignore[1]{}
\newcommand\tocite[1]{\Red{[??]}}
\newcommand\para[1]{\smallskip\noindent\textbf{{#1.}\xspace}}
\newcommand\langname{SFIasm\xspace}
\newcommand\olangname{oSFIasm\xspace}
\newcommand\verifname{VeriZero\xspace}
\def\C++{
  C\kern-.1667em\raise.30ex\hbox{\smaller{++}}%
  \spacefactor1000
}
\newenvironment{CompactItemize}%
{\begin{list}{$\blacktriangleright$}%
    {\leftmargin=\parindent \itemsep=2pt \topsep=2pt
      \parsep=0pt \partopsep=0pt}}%
  {\end{list}}
\newenvironment{CompactEnumerate}{
  \begin{enumerate}[leftmargin=*]
  }{\end{enumerate}}
\newcommand\ntuple[1]{%
  ({#1}\@tuplehelpcheck
}
\newcommand\@tuplehelpcheck{%
  \@ifnextchar\bgroup{\@tuplehelpnext}{)}
}
\newcommand{\@tuplehelpnext}[1]{%
  ,\, {#1} \@ifnextchar\bgroup{\@tuplehelpnext}{)}
}
\newcommand\nats{\ensuremath{\mathbb{N}}}
\newcommand\prop{\ensuremath{\mathbb{P}}}
\newcommand\powerset[1]{\ensuremath{\wp({#1})}}
\newcommand\oname[1]{\operatorname{#1}}
\newcommand\concat{\mathbin{{+}\mspace{-8mu}{+}}}
\newcommand\lesstrusted{\sqsubseteq}
\newcommand\nlesstrusted{\not\sqsubseteq}
\newcommand\dom[1]{\oname{dom}({#1})}
\newcommand\cod[1]{\oname{cod}({#1})}
\newcommand\assign{\mathbin{\vcentcolon=}}
\newcommand\relDescription[1]{#1}
\newcommand\judgmentHead[2]{\relDescription{#1}\hfill\fbox{\ensuremath{#2}}}
\newcommand\bnfdef{\mathrel{\vcentcolon\vcentcolon=}}
\newcommand\bnfalt{\mid}
\newcommand\bnftypes{\mathrel{\vcentcolon}}
\newcommand\cinst[1]{\ensuremath{\mathtt{#1}}}
\newcommand\cpush[2]{\ensuremath{\cinst{push}_{#1}\ {#2}}}
\newcommand\cpop[2]{\ensuremath{{#1} \leftarrow \cinst{pop}_{#2}}}
\newcommand\cload[3]{\ensuremath{{#1} \leftarrow \cinst{load}_{#2}\ {#3}}}
\newcommand\cstore[3]{\ensuremath{\cinst{store}_{#1}\ {#2} \assign {#3}}}
\newcommand\ccall[2]{\cinst{call}_{{#1}}\ {#2}}
\newcommand\cgatecall[2]{\ensuremath{\cinst{gatecall}_{#1}\ {#2}}}
\newcommand\cgateret{\ensuremath{\cinst{gateret}}}
\newcommand\cret[1]{\ensuremath{\cinst{ret}_{{#1}}}}
\newcommand\cjmp[2]{\ensuremath{\cinst{jmp}_{{#1}}\ {#2}}}
\newcommand\cmov[2]{\ensuremath{{#1} \leftarrow \cinst{mov}\ {#2}}}
\newcommand\cmovlabel[2]{\ensuremath{{#1} \leftarrow \cinst{movlabel}_{#2}}}
\newcommand\cstorelabel[2]{\ensuremath{\cinst{storelabel}_{#1}\ {#2}}}
\newcommand\error{\ensuremath{\mathtt{error}}}
\newcommand\vcinit{\ensuremath{\mathtt{Initialized}}}
\newcommand\vcuninit{\ensuremath{\mathtt{Uninitialized}}}
\newcommand\vccallee[1]{\ensuremath{\mathtt{UninitializedCallee}({#1})}}
\newcommand\immval[2]{\ensuremath{\mathcal{V}_{#1}({#2})}}
\newcommand\oimmval[2]{\ensuremath{\oc{\mathcal{V}}_{#1}({#2})}}
\newcommand\trusted{\ensuremath{\mathtt{app}}}
\newcommand\untrusted{\ensuremath{\mathtt{lib}}}
\newcommand\val[2]{\ensuremath{\oc{\langle} {#1}, {#2} \oc{\rangle}}}
\newcommand\natval[1]{\ensuremath{\oc{\langle} {#1} \oc{\rangle}}}
\newcommand\SF{\ensuremath{\mathit{SF}}}
\newcommand\privs{\ensuremath{\mathit{Priv}}}
\newcommand\codes{\ensuremath{\mathit{Code}}}
\newcommand\vals{\ensuremath{\mathit{Val}}}
\newcommand\regs{\ensuremath{\mathit{Reg}}}
\newcommand\commands{\ensuremath{\mathit{Command}}}
\newcommand\memories{\ensuremath{\mathit{Memory}}}
\newcommand\regions{\ensuremath{\mathit{Region}}}
\newcommand\checks{\ensuremath{\mathit{Check}}}
\newcommand\regvals{\ensuremath{\mathit{RegVals}}}
\newcommand\frames{\ensuremath{\mathit{Frame}}}
\newcommand\programs{\ensuremath{\mathit{Program}}}
\newcommand\expressions{\ensuremath{\mathit{Expr}}}
\newcommand\functions{\ensuremath{\mathit{Function}}}
\newcommand\states{\ensuremath{\mathit{State}}}
\newcommand\ostates{\ensuremath{\mathit{oState}}}
\newcommand\currentop[2]{\ensuremath{{#1}\llparenthesis{#2}\rrparenthesis}}
\newcommand\currentcom[3]{\ensuremath{{#1}\llparenthesis{#2}\rrparenthesis_{#3}}}
\newcommand\pcinc[1]{\ensuremath{{#1}^{++}}}
\newcommand\step{\mathrel{\rightarrow}}
\newcommand\stepstar{\mathrel{\step^{\ast}}}
\newcommand\stepn[1]{\mathrel{\step^{#1}}}
\newcommand\ostep{\mathrel{\leadsto}}
\newcommand\ostepstar{\mathrel{\ostep^{\ast}}}
\newcommand\ostepn[1]{\mathrel{\ostep^{#1}}}
\newcommand\overstep[1]{\mathrel{\xrightarrow{#1}}}
\newcommand\overstepstar[1]{\mathrel{\xrightarrow{#1}\mathrel{\vphantom{\to}^{\ast}}}}
\newcommand\overstepn[2]{\mathrel{\xrightarrow{#1}\mathrel{\vphantom{\to}^{#2}}}}
\newcommand\osteprhon[2]{\mathrel{\accentset{#1}{\leadsto}^{#2}}}
\newcommand\stepp[1]{\overstep{#1}}
\newcommand\steppstar[1]{\overstepstar{#1}}
\newcommand\steppn[2]{\overstepn{#1}{#2}}
\newcommand\stepbox{\overstep{\!\square}}
\newcommand\stepboxstar{\overstepstar{\!\square}}
\newcommand\steplowstar{\steppstar{\untrusted}}
\newcommand\steplown[1]{\steppn{\untrusted}{#1}}
\newcommand\stephigh{\stepp{\trusted}}
\newcommand\stepwb{\overstep{\!\text{wb}}}
\newcommand\evalsto{\mathrel{\Downarrow}}
\newcommand\makerulename[2]{%
  \expandafter\newcommand\csname def#1\endcsname{%
    \Hy@raisedlink{\hypertarget{rule:def:#1}{}}\hyperlink{rule:explanation:#1}{#2}%
  }%
  \expandafter\newcommand\csname explain#1\endcsname{%
    \Hy@raisedlink{\hypertarget{rule:explanation:#1}{}}\hyperlink{rule:def:#1}{\LabTirName{#2}}\xspace%
  }%
  \expandafter\newcommand\csname #1\endcsname{%
    \hyperlink{rule:def:#1}{\LabTirName{#2}}\xspace%
  }%
}
\newcommand\oc[1]{{\color{blue}{#1}}}
\newcommand\oPhi{\oc{\Phi}}
\newcommand\oerror{\oc{\mathtt{oerror}}}
\newcommand\Lrel{\mathcal{L}}
\newcommand\Frel{\mathcal{F}}
\newcommand\later{\mathord{\triangleright}}
\newcommand\K[1]{\ensuremath{\textsf{\sf#1}}}
\newcommand\KK[1]{\ensuremath{\K{\textbf{#1}}}}
\newcommand{\Wfunc}[2]{\ensuremath{(\KK{func}~\f{#1}~{#2}}\xspace}
\newcommand{\Wblock}[1]{\ensuremath{(\KK{block}~{#1}}\xspace}
\newcommand{\Wloop}[1]{\ensuremath{(\KK{loop}~{#1}}\xspace}
\newcommand{\Wif}{\ensuremath{(\KK{if}}\xspace}
\newcommand{\Wframe}[1]{\ensuremath{(\KK{frame}~{#1}}\xspace}
\newcommand{\Wend}{\ensuremath{\KK{end}})\xspace}
\newcommand{\trnative}{\texttt{Vanilla}\xspace}
\newcommand{\trlucet}{\texttt{WasmLucet}\xspace}
\newcommand{\trfullswitch}{\texttt{WasmHeavy}\xspace}
\newcommand{\trregsave}{\texttt{WasmReg}\xspace}
\newcommand{\trfast}{\texttt{WasmZero}\xspace}
\newcommand{\tridealheavy}{\texttt{IdealHeavy32}\xspace}
\newcommand{\tridealheavysixfour}{\texttt{IdealHeavy64}\xspace}
\newcommand{\trnacl}{\texttt{NaCl32}\xspace}
\newcommand{\trsegmentsfi}{\texttt{SegmentZero32}\xspace}
\newcommand{\libjpeg}{\texttt{libjpeg}\xspace}
\newcommand{\libgraphite}{\texttt{libgraphite}\xspace}
\newcommand{\libogg}{\texttt{libogg}\xspace}
\newcommand{\hunspell}{\texttt{hunspell}\xspace}
\newcommand{\libexpat}{\texttt{libexpat}\xspace}
\newcommand{\libsoundtouch}{\texttt{libsoundtouch}\xspace}
\newcommand{\simplejpeg}{\texttt{SimpleImage}\xspace}
\newcommand{\randomjpeg}{\texttt{RandomImage}\xspace}
\newcommand{\stockjpeg}{\texttt{StockImage}\xspace}
\newcommand{\gettimeofday}{\texttt{gettimeofday}\xspace}
\def\dash---{\kern.16667em---\penalty\exhyphenpenalty\hskip.16667em\relax}
\newcommand{\tranSegzeroNativeFuncDiff}{23ns\xspace}
\newcommand{\ffMaxFontSpeedupWasmZeroHeavy}{10\%\xspace}
\newcommand{\ffMaxFontSlowdownWasmHeavyZero}{10\%\xspace}
\newcommand{\ffMaxFontSlowdownWasmLucetZero}{4$\times$\xspace}
\newcommand{\ffMaxFontStackSwitchWasmOverhead}{0.8\%\xspace}
\newcommand{\ffMaxImgSpeedupWasmZeroHeavy}{29.7\%\xspace}
\newcommand{\ffMaxImgSlowdownWasmLucetZero}{9.2$\times$\xspace}
\newcommand{\ffMaxImgSimpleSlowdownWasmHeavyZero}{29.7\%\xspace}
\newcommand{\ffMaxImgSimpleSlowdownWasmLucetZero}{9.2$\times$\xspace}
\newcommand{\ffMaxImgStockRandomSlowdownWasmHeavyZero}{4.5\%\xspace}
\newcommand{\ffMaxFontOverheadNaClNative}{92\%\xspace}
\newcommand{\ffMaxFontOverheadIdealNative}{66\%\xspace}
\newcommand{\ffMaxFontOverheadSegzeroNative}{22.5\%\xspace}
\newcommand{\ffMaxImgOverheadSegzeroNative}{24\%\xspace}
\newcommand{\ffMaxImgSimpleOverheadSegzeroNative}{24\%\xspace}
\newcommand{\ffMaxImgStockOverheadSegzeroNative}{1\%\xspace}
\newcommand{\ffMaxImgRandomOverheadSegzeroNative}{6.5\%\xspace}
\newcommand{\ffMaxImgSimpleOverheadNaClNative}{312\%\xspace}
\newcommand{\ffMaxImgStockOverheadNaClNative}{29\%\xspace}
\newcommand{\ffMaxImgRandomOverheadNaClNative}{66\%\xspace}
\newcommand{\ffMaxImgSimpleOverheadIdealNative}{208\%\xspace}
\newcommand{\ffMaxImgStockOverheadIdealNative}{28\%\xspace}
\newcommand{\ffMaxImgRandomOverheadIdealNative}{45\%\xspace}
\newcommand\Intel{Intel\textsuperscript{\textregistered}\xspace}
\newcommand\ARM{ARM\textsuperscript{\textregistered}\xspace}
\newcommand\SPARC{SPARC\textsuperscript{\textregistered}\xspace}
\newcommand\SPECOhSix{SPEC CPU\textsuperscript{\textregistered} 2006\xspace}
\def\C++{C\nolinebreak[4]\hspace{-.05em}\raisebox{.4ex}{\tiny\bf ++}}
\definecolor{Ckeywordblue}{RGB}{36, 75, 133}
\definecolor{Ccommentbrown}{RGB}{142, 89, 19}
\definecolor{Csymbolorange}{RGB}{205, 92, 25}
\definecolor{Cconstantblue}{RGB}{7, 27, 203}
\lstdefinestyle{C}{%
  numbers=left,
  numberstyle=\tiny,
  basicstyle=\footnotesize\ttfamily,
  columns=fullflexible,
  keepspaces=true,
  otherkeywords={*, =},
  keywords=[1]{int, void},
  keywordstyle=[1]\color{Ckeywordblue}\bfseries,
  keywords=[2]{*, =},
  keywordstyle=[2]\color{Csymbolorange}\bfseries,
  comment=[l]{//},
  commentstyle=\color{Ccommentbrown},
  xleftmargin=2.5em
}
\newcommand\Cinline[1]{\lstinline[style=C, basicstyle=\small\ttfamily]{#1}}
\definecolor{ASMdefinitionorange}{RGB}{243, 120, 33}
\definecolor{ASMcommentbrown}{RGB}{142, 89, 19}
\definecolor{ASMsymbolorange}{RGB}{205, 92, 25}
\definecolor{ASMregisterblue}{RGB}{36, 75, 133}
\definecolor{ASMconstantblue}{RGB}{7, 27, 203}
\lstdefinestyle{asm}{%
  numbers=left,
  numberstyle=\tiny,
  basicstyle=\footnotesize\ttfamily,
  columns=fullflexible,
  keepspaces=true,
  alsoletter={:},
  otherkeywords={+, -, :=, <-},
  keywordstyle=[1]\color{ASMdefinitionorange},
  keywords=[2]{+, -, :=},
  keywordstyle=[2]\color{ASMsymbolorange}\bfseries,
  keywords=[3]{r1, r2, r3, r4, r5, r6, r7, r8, r9, r10, r11, r12, sp},
  keywordstyle=[3]\color{ASMregisterblue},
  comment=[l]{;},
  commentstyle=\color{ASMcommentbrown},
  literate={
    {<-}{{\color{ASMsymbolorange}$\leftarrow$}}2
    {>>}{{{\color{ASMsymbolorange}$\rightarrow$}}}2
  },
  xleftmargin=2.5em
}
\newcommand\Asminline[1]{\lstinline[style=asm, basicstyle=\small\ttfamily]{#1}}
\begin{document}

\title[Isolation Without Taxation]{Isolation Without Taxation}
\subtitle{Near-Zero-Cost Transitions for WebAssembly and SFI}

\author{Matthew Kolosick}
\affiliation{\institution{UC San Diego} \country{USA}}
\email{mkolosick@eng.ucsd.edu}

\author{Shravan Narayan}
\affiliation{\institution{UC San Diego} \country{USA}}
\email{srn002@eng.ucsd.edu}

\author{Evan Johnson}
\affiliation{\institution{UC San Diego} \country{USA}}
\email{e5johnso@eng.ucsd.edu}

\author{Conrad Watt}
\affiliation{\institution{University of Cambridge} \country{UK}}
\email{conrad.watt@cl.cam.ac.uk}

\author{Michael LeMay}
\orcid{0000-0001-6206-9642}
\affiliation{\institution{Intel Labs} \country{USA}}
\email{michael.lemay@intel.com}

\author{Deepak Garg}
\affiliation{\institution{Max Planck Institute for Software Systems} \country{Germany}}
\email{dg@mpi-sws.org}

\author{Ranjit Jhala}
\affiliation{\institution{UC San Diego} \country{USA}}
\email{rjhala@eng.ucsd.edu}

\author{Deian Stefan}
\affiliation{\institution{UC San Diego} \country{USA}}
\email{deian@cs.ucsd.edu}

\begin{abstract}
Software sandboxing or software-based fault isolation (SFI) is a lightweight
approach to building secure systems out of untrusted components.
Mozilla, for example, uses SFI to harden the Firefox browser by sandboxing
third-party libraries, and companies like Fastly and Cloudflare use SFI to
safely co-locate untrusted tenants on their edge clouds.
While there have been significant efforts to optimize and verify SFI
enforcement, context switching in SFI systems remains largely unexplored:
almost all SFI systems use \emph{heavyweight transitions} that are not only
error-prone but incur significant performance overhead from saving, clearing,
and restoring registers when context switching.
We identify a set of \emph{zero-cost conditions} that characterize when
sandboxed code has sufficient structured to guarantee security via lightweight
\emph{zero-cost} transitions (simple function calls).
We modify the Lucet Wasm compiler and its runtime to use zero-cost transitions,
eliminating the undue performance tax on systems that rely on Lucet for
sandboxing (e.g., we speed up image and font rendering in Firefox by up to
\ffMaxImgSpeedupWasmZeroHeavy and \ffMaxFontSpeedupWasmZeroHeavy respectively).
To remove the Lucet compiler and its correct implementation of the Wasm
specification from the trusted computing base, we (1) develop a \emph{static
binary verifier}, \verifname{}, which (in seconds) checks that binaries produced
by Lucet satisfy our zero-cost conditions, and (2)
prove the soundness of \verifname{} by developing a logical relation that
captures when a compiled Wasm function is semantically well-behaved with respect
to our zero-cost conditions.
Finally, we show that our model is useful beyond Wasm by describing a new,
purpose-built SFI system, \trsegmentsfi, that uses x86 segmentation and LLVM
with mostly off-the-shelf passes to enforce our zero-cost conditions; our
prototype performs on-par with the state-of-the-art Native Client SFI system.
\end{abstract}

\maketitle
\renewcommand{\shortauthors}{M. Kolosick, S. Narayan, E. Johnson, C. Watt, M. LeMay, D. Garg, R. Jhala, D. Stefan}

\section{Introduction}
\label{sec:intro}

Memory safety bugs are the single largest source of critical vulnerabilities 
in modern software. Recent studies found that roughly 70\% of all critical 
vulnerabilities were caused by memory safety bugs \cite{msf-bugs,chr-bugs} and 
that malicious attackers are exploiting these bugs before they can be 
patched~\cite{p0:in-the-wild:21, fireeye-study}.
Software sandboxing\dash---or software-based fault isolation
(SFI)\dash---promises to reduce the impact of such memory safety
bugs~\cite{gang-sfi, wahbe_efficient_1993}.
SFI toolkits like Native Client (NaCl)~\cite{yee_native_2009} and WebAssembly (Wasm) allow
developers to restrict untrusted components to their own \emph{sandboxed}
regions of memory thereby isolating the damage that can be caused by bugs in
these components.
Mozilla, for example, uses Wasm to sandbox third-party C libraries in
Firefox~\cite{rlbox, rlbox-blog}; SFI allows the browser to use libraries
like \libgraphite (font rendering), \libexpat (XML parsing), \libsoundtouch
(audio processing), and \hunspell (spell checking) without risking
whole-browser compromise due to library vulnerabilities.
Others have used SFI to isolate code in 
OS kernels~\cite{xfi, bgi,herder2009fault, vino}, 
databases~\cite{vx32, vxa, wahbe_efficient_1993}, 
browsers~\cite{omniware,yee_native_2009,haas_bringing_2017}, 
language runtimes~\cite{robusta, rockjit}, and 
serverless clouds~\cite{lucet-talk, cloudflare, sledge}.

SFI toolkits enforce memory isolation by placing untrusted 
code into a sandboxed environment within which every memory access 
is dynamically checked to be safe.
For example, NaCl and Wasm toolkits (e.g., Lucet~\cite{lucet} and
WAMR~\cite{wamr}) instrument memory accesses to ensure they are 
within the sandbox region and add runtime checks to ensure that 
all control flow is confined to the sandboxed paths with 
instrumented memory accesses.
There is a large body of work that ensures 
the runtime checks are \emph{fast} on different 
architectures, e.g., x86~\cite{mccamant_evaluating_2006, yee_native_2009, vx32, payer2011fine}, 
x86-64~\cite{sehr_adapting_2010}, \SPARC~\cite{omniware-pldi},
and \ARM~\cite{armor, armlock, sehr_adapting_2010}, as otherwise 
they incur unacceptable overheads on the code executing in the sandbox.
Similarly, there is a considerable literature 
that establishes that the checks are \emph{correct}~\cite{rocksalt, compcert-sfi, veriwasm, sfi-as-ai,besson2019compiling},
as even a single missing check can let 
the attacker escape the sandbox.

However, the security and overhead of software sandboxing also
crucially depends on the correctness and cost of context switching 
\dash---the \emph{trampolines} and \emph{springboards} used 
to transition into and out of sandboxes.
Almost all SFI systems, from \cite{wahbe_efficient_1993}'s original
SFI implementation to recent Wasm SFI toolkits~\cite{lucet,
wamr}, use \emph{heavyweight transitions} for context switching.\footnote{The
one exception is WasmBoxC~\cite{wasmboxc}, discussed in
Section~\ref{sec:related}.}
These transitions (1)~switch protection domains 
by tying into the underlying memory isolation mechanism 
(e.g., by setting segment registers~\cite{yee_native_2009}, 
memory protection keys~\cite{vahldiek-oberwagner_erim_2019, hodor}, 
or sandbox context registers~\cite{lucet, wamr}), and 
(2)~save, scrub, and restore machine state (e.g. 
the stack pointer, program counter, and callee-save registers) 
across the boundary.
This code is complicated and hard to get right, as it has to 
account for the particular quirks of different architectures 
and operating system platforms~\cite{alder2020faulty}.
Consequently, bugs in transition code have led to
vulnerabilities in both NaCl and Wasm\dash---from sandbox
breakouts~\cite{nacl-bug-1607, nacl-bug-1633}, to information
leaks~\cite{nacl-bug-775, nacl-bug-2919}, and application state
corruption~\cite{cranelift-bug-1177}.
Furthermore, in applications with high application-sandbox 
context switching rates, the cost of transitions
dominates the overall sandboxing overhead.
For example, heavyweight transitions prohibitively slowed 
down font rendering in Firefox, preventing Mozilla from 
shipping a sandboxed \libgraphite~\cite{rlbox}.

In this paper, we develop the principles and pragmatics needed to implement
SFI systems with near-zero-cost transitions, thereby realizing the
three-decade-old vision of reducing the cost of SFI context switches to
(almost) that of a function call.
%
We do this via five contributions:

\para{1. Formal model of secure transitions (\secref{sec:model})}
Simply eliminating heavyweight transitions is unsafe, potentially allowing an
attacker to escape the SFI sandbox.
To understand this threat, our first contribution is the first formal,
declarative, and high-level model that elucidates the role of transitions in
making SFI secure.
Intuitively, our model shows how secure transitions 
protect the integrity and confidentiality of machine 
state across the domain transition by providing 
\emph{well-bracketed} control flow, i.e., ensuring 
that returns actually return to their call sites.
 
\para{2. Zero-cost conditions for isolation (\secref{sec:formal-conditions})}
Heavyweight transitions provide security by wrapping 
cross-domain calls and returns to ensure that sandboxed code cannot, 
for example, read secret registers or tamper with the 
stack pointer.
While this wrapping is necessary when sandboxing arbitrary code, our insight is
these wrappers can be made redundant when the code enjoys additional structure,
not dissimilar to the additional structure typically imposed by most
SFI systems to ensure memory isolation.
For example, NaCl uses \emph{coarse-grained} control-flow integrity
(CFI) to restrict the sandbox's control flow to its own code
region~\cite{gang-sfi, yee_native_2009, haas_bringing_2017}.

We concretize this insight via our second contribution, a precise definition of
\emph{zero-cost conditions} that guarantee that sandboxed code can safely use
zero-cost transitions.
In particular, we show that transitions can be eliminated 
when sandboxed code follows a \emph{type-directed} CFI discipline, 
has well-bracketed control flow, enforces local state 
(stack and register) encapsulation, and ensures registers 
and stack slots are initialized before use.
Our notion of zero-cost conditions is inspired, in part, by techniques that use
type- and memory-safe languages to isolate code via language-level enforcement
of well-bracketed control flow and local state encapsulation~\cite{ocap, joe,
trufflec, caja, hunt2007singularity, morrisett_talx86_1999}.
However, instead of requiring developers to rewrite millions of lines of code
in high-level languages~\cite{gang-sfi}, our zero-cost conditions distill the
semantic guarantees provided by high-level languages to allow retrofitting
zero-cost transitions in the SFI setting.
In other part, our work is inspired by \citet{sfi-as-ai}, who define a
defensive semantics for SFI that captures a notion of sandboxing via simple
function calls with a stack shared between the sandbox and host application.
Our work builds on this work by addressing two shortcomings:
First, their definition does not account for confidentiality of application
data, and implementations based on their system would thus need heavyweight
transitions to prevent such attacks.
Second, their defensive semantics makes fundamental use of guard zones, which
limits the flexibility of the framework.
Our definitions of zero-cost transitions have no such limitations and fully
realize their goal of defining flexible, secure SFI with zero-cost transitions
between application and sandbox.

\para{3. Instantiating the zero-cost model (\secref{sec:implementation-security})}
We demonstrate the retrofitting of zero-cost transitions via our third
contribution, an instantiation of our zero-cost model to two SFI systems: Wasm
and \trsegmentsfi.
Previous work has shown how Wasm can provide SFI by compiling untrusted C/C++
libraries to native code using Wasm as an IR~\cite{wasm-il,wasmboxc,gobi,rlbox}.
We show that Wasm satisfies our zero-cost conditions, and replace the
heavyweight transitions used by the industrial Lucet Wasm SFI toolkit with
zero-cost transitions.
Wasm imposes more structure than required by our zero-cost conditions (and
Wasm compilers are still relatively new and slow~\cite{not-so-fast}), so, in order to
compare the overhead of our zero-cost model to the still fastest SFI
implementation\dash---NaCl~\cite{yee_native_2009}\dash---we design a new
prototype SFI system (\trsegmentsfi) that: (1) enforces our zero-cost conditions
through LLVM-level transformations, and (2) enforces memory isolation in
hardware, using 32-bit x86 segmentation.\footnote{While the prevalence of 32-bit
x86 systems is declining, it nevertheless still constitutes over 20\% of the
Firefox web browser's user base (over forty million users)~\cite{ff-stats};
\trsegmentsfi would allow for high performance library sandboxing on these
machines.}

\para{4. Verifying security at the binary level (\secref{sec:wasm-verifier})}
Our fourth contribution is a \emph{static verifier}, \verifname, that checks
whether a potentially malicious binary produced by the Lucet toolkit
satisfies our zero-cost conditions.
This removes the need to trust the Lucet compiler when, for example,
compiling third-party Firefox libraries~\cite{rlbox} or untrusted tenant code
running on Fastly's serverless cloud~\cite{lucet-talk}.
To prove the soundness of \verifname, we develop a logical relation that
captures when a compiled Wasm function is well-behaved with respect
to our zero-cost conditions and use it to prove that the checks of
\verifname{} guarantee that the zero-cost conditions are met.
We implement \verifname by extending VeriWasm~\cite{veriwasm} 
and show that in just a few seconds, it can
(1)~verify sandboxed libraries that ship 
    (or are in the process of being shipped) with Firefox, 
    Wasm-compiled \SPECOhSix benchmarks, and 100,000 programs
    randomly generated by Csmith~\cite{csmith}, and 
(2)~catch previous NaCl and Wasm vulnerabilities (\secref{subsec:verifier-eval}).
\verifname is being integrated into the Lucet industrial Wasm
compiler~\cite{verizero-integration}.

\para{5. Implementation and evaluation (\secref{sec:eval})}
Our last contribution is an implementation of our zero-cost 
sandboxing toolkits, and an evaluation of how they improve 
the performance of a transition micro-benchmark and two 
macro-benchmarks\dash---image decoding (\libjpeg) 
and font rendering (\libgraphite) in Firefox.
First, we demonstrate the potential performance 
of a purpose-built zero-cost SFI system, by evaluating 
\trsegmentsfi on \SPECOhSix and our macro-benchmarks.
We find that \trsegmentsfi imposes at most 25\%
overhead on \SPECOhSix (nc), and at most \ffMaxImgOverheadSegzeroNative 
on image decoding and \ffMaxFontOverheadSegzeroNative on font rendering.
These overheads are lower than the state-of-art NaCl SFI system.
On the macro-benchmarks, \trsegmentsfi even outperforms an idealized SFI system
that enforces memory isolation for free but requires heavyweight transitions.
Second, we find that zero-cost transitions speed up 
Wasm-sandboxed image decoding by (up to) \ffMaxImgSpeedupWasmZeroHeavy 
and font rendering by \ffMaxFontSpeedupWasmZeroHeavy.
The speedup resulting from our zero-cost transitions allowed 
Mozilla to ship the Wasm-sandboxed \libgraphite library 
in production.

\para{Open source and data}
Our code and data will be made available under an open source license.

\section{Overview} \label{sec:isolation}

In this section we describe the role of transitions in making SFI secure, give
an overview of existing heavyweight transitions, and introduce our zero-cost
model, which makes it possible for SFI systems to replace heavyweight
transitions with simple function calls.

\subsection{The need for secure transitions}
\label{subsec:overview-secure}

As an example, consider sandboxing an untrusted font rendering library (e.g.,
\libgraphite) as used in a browser like Firefox:
\begin{lstlisting}[style=C, escapeinside=||]
void onPageLoad(int* text) {
  ...
  int* screen = ...; // stored in r12
  int* temp_buf = ...;
  gr_get_pixel_buffer(text, temp_buf);
  memcpy(screen, temp_buf, 100); |\label{code:memcpy}|
  ...
}
\end{lstlisting}
This code calls the \libgraphite \Cinline{gr_get_pixel_buffer} function to
render text into a temporary buffer and then copies the temporary buffer to the
variable \Cinline{screen} to be rendered.

\sloppy
Using SFI to sandbox this library ensures that the browser's memory is isolated
from \texttt{libgraphite}\dash---memory isolation ensures that
\Cinline{gr_get_pixel_buffer} cannot access the memory of \Cinline{onPageLoad}
or any other parts of the browser stack and heap.
Unfortunately, memory isolation alone is not enough:
if transitions are simply function calls, attackers can violate the calling
convention at the application-library boundary (e.g., the
\Cinline{gr_get_pixel_buffer} call and its return) to break isolation.
Below, we describe the different ways a compromised \libgraphite can do this.

\para{Clobbering callee-save registers}
Suppose the \Cinline{screen} variable in the above \Cinline{onPageLoad} snippet
is compiled down to the register \Asminline{r12}.
In the System V calling convention \Asminline{r12}
is a \emph{callee-saved} register~\cite{system-v},
so if \Cinline{gr_get_pixel_buffer} clobbers
\Asminline{r12}, then it is also supposed to restore it to
its original value before returning to \Cinline{onPageLoad}.
A compromised \libgraphite doesn't have to do this; instead, the attacker can
poison the register:
\begin{lstlisting}[style=asm, escapeinside=||]
mov r12, |\Biohazard|
ret
\end{lstlisting}
\noindent Since \Asminline{r12} (\Cinline{screen}) in our hypothetical example
is then used on \coderef{code:memcpy} to \Cinline{memcpy} the
\Cinline{temp_buf} from the sandbox memory, this gives the attacker a write
gadget that they can use to hijack Firefox's control flow.
To prevent such attacks, we need \emph{callee-save register integrity}, i.e.,
we must ensure that sandboxed code restores callee-save registers upon
returning to the application.

\para{Leaking scratch registers}
Dually, \emph{scratch registers} can potentially leak sensitive information
into the sandbox.
Suppose that Firefox keeps a secret (e.g., an encryption key) in a
scratch register.
Memory isolation alone would not prevent an attacker-controlled \libgraphite
from using uninitialized registers, thereby reading this secret.
To prevent such leaks, we need \emph{scratch register confidentiality}.

\para{Reading and corrupting stack frames}
Finally, if the application and sandboxed library share a stack, the attacker
could potentially read and corrupt data (and pointers) stored on the stack.
To prevent such attacks, we need \emph{stack frame encapsulation}, i.e., we
need to ensure that sandboxed code cannot access application stack frames.

\subsection{Heavyweight transitions}
\label{sec:background-heavyweight}

SFI toolchains\dash---from NaCl~\cite{yee_native_2009} to Wasm native compilers
like Lucet~\cite{lucet} and WAMR~\cite{wamr}\dash---use \emph{heavyweight
transitions} to wrap calls and returns and address the aforementioned attacks.
These heavyweight transitions are secure transitions.
They provide:

\para{1. Callee-save register integrity}
The \emph{springboard}\dash---the transition code which wraps calls\dash---saves
callee-save registers to a separate stack stored in protected application
memory.
When returning from the library to the application,
the \emph{trampoline}\dash---the code which wraps returns\dash---restores the
registers.

\para{2. Scratch register confidentiality}
Since any scratch register may contain secrets, the springboard clears
\emph{all} scratch registers before transitioning into the sandbox.

\para{3. Stack frame encapsulation}
Most (but not all) SFI systems provision separate stacks for trusted and
sandboxed code and ensure that the trusted stack is not accessible from the
sandbox.
The springboard and trampoline account for this in three ways.
First, they track the separate stack pointers at each transition in order to
switch stacks.
Second, the springboard copies arguments passed on the stack to the sandbox
stack, since sandboxed code cannot access arguments stored on the
application stack.
Finally, the trampoline tracks the actual return address on transition by
keeping it in the protected memory, so that the sandboxed library cannot tamper
with it.

\para{The cost of wrappers}
Heavyweight springboards and trampolines guarantee secure transitions
but have two significant drawbacks.
First, they impose an overhead on SFI\dash---calls into the sandboxed library become significantly more expensive than
simple application function calls  (\secref{sec:eval}).
Heavyweight transitions conservatively save and clear more state than might be necessary, essentially
reimplementing aspects of an OS process switch and duplicating work done by
well-behaved libraries.
Second, springboards and trampolines must be customized to different platforms,
i.e., different processors and calling conventions, and, in extreme cases such
as in \citet{vahldiek-oberwagner_erim_2019}, even different applications.
Implementation mistakes
can\dash---and have~\cite{nacl-bugs, nacl-bug-2919, nacl-bug-775, nacl-bug-1607,
nacl-bug-1633, bartel2018twenty}\dash---resulted in sandbox escape attacks.

\subsection{Zero-cost transitions}
\label{subsec:overview-zero}

Heavyweight transitions are conservative because they make few assumptions
about the structure (or possible behavior) of the code running in the sandbox.
SFI systems like NaCl and Wasm \emph{do}, however, impose structure on
sandboxed code to enforce memory isolation.
In this section we show that by imposing structure on sandboxed code we can
make transitions less conservative.
Specifically, we describe a set of \emph{zero-cost conditions}
that impose \emph{just enough} internal structure on sandboxed code to ensure
that it will behave like a high-level, compositional language while maintaining
SFI's high performance.
SFI systems that meet these conditions can safely elide almost all the extra
work done by heavyweight springboards and trampolines, thus moving toward the
ideal of SFI transitions as simple, fast, and portable function calls.

\para{Zero-cost conditions}
We assume that the sandboxed library code is split into functions and that each
function has an expected number of arguments.
We \emph{formalize} the internal structure required of library code via a
\emph{safety monitor} that checks the zero-cost conditions, i.e., the local
requirements necessary to ensure that calls-into and returns-from the untrusted
library functions are ``well-behaved'' and, hence, that they satisfy the secure
transition requirements.

\para{1. Callee-save register restoration}
First, our monitor enforces function-call level adherence to callee-save
register conventions: our monitor tracks callee-save state and checks that it
has been correctly restored upon returning.
Importantly, satisfying the monitor means that application calls to a
well-behaved library function do not require a transition which separately saves
and restores callee-save registers, since the function is known to obey the
standard calling convention.

\para{2. Well-bracketed control-flow}
Second, our monitor requires that the library code adheres to well-bracketed
return edges.
Abstractly, calls and returns should be well-bracketed: when \verb+f+ calls
\verb+g+ and then \verb+g+ calls \verb+h+, \verb+h+ ought to return to \verb+g+
and then \verb+g+ ought to return to \verb+f+.
However, untrusted functions may subvert the control stack to implement
arbitrary control flow between functions.
This unrestricted control flow is at odds with compositional reasoning,
preventing \emph{local} verification of functions.
Further, subverting well-bracketing could enable an attacker to cause \verb+h+
to return directly to \verb+f+.
Then, even if \verb+h+ and \verb+f+ both restore their callee-save registers,
those of \verb+g+ would be left unrestored.
Accordingly, we require two properties of the library to ensure that calls and
returns are well-bracketed.
First, each jump must stay within the same function.
This limits inter-function control flow to function calls and returns.
Second, the (specification) monitor maintains a ``logical'' call stack, 
which is used to ensure that returns go only to the preceding caller.

\para{3. Type-directed forward-edge CFI}
Our monitor also requires that library code obeys type-directed forward-edge
CFI.
That is, for every call instruction encountered during execution, the jump
target address is the start of a library function and the arguments
passed match those expected by the called function.
This ensures that each function starts from a (statically) known stack
shape, preventing a class of attacks where a benign function can be tricked into
overwriting other stack frames or hijacking control flow because it is passed
too few (or too many) arguments.
If this were not the case, a locally well-behaved function that was passed too
few arguments could write to a saved register or the saved return address,
expecting that stack slot to be the location of an argument.

\para{4. Local state encapsulation}
Our monitor establishes \emph{local state encapsulation} by checking
that all stack reads and writes are within the current stack frame.
This check allows us to \emph{locally}, i.e., by checking each function in
isolation, ensure that a library function correctly saves and restores
callee-save registers upon entry and exit.
To see why local state encapsulation is needed, consider
the following idealized assembly function \Asminline{library_func}:
\begin{lstlisting}[style=asm, escapeinside=||, morekeywords={library_func:, library_helper:}]
library_func:         library_helper:
  push r12              store sp - 1 := |\Biohazard|
  mov r12 <- 1          ret
  load r1 <- sp - 1
  add r1 <- r12
  call library_helper
  pop r12
  ret
\end{lstlisting}
If \Asminline{library_helper} is called it will overwrite the stack slot where
\Asminline{library_func} saved \Asminline{r12}, and \Asminline{library_func}
will then ``restore'' \Asminline{r12} to the attacker's desired value.
Our monitor prohibits such cross-function tampering, thus ensuring that
all subsequent reasoning about callee-save integrity can be carried out
locally in each function.

\para{5. Confidentiality}
Finally, our monitor uses dynamic information flow control (IFC) tracking to
define the confidentiality of scratch registers.
The monitor tracks how (secret application) values stored in scratch registers
flow through the sandboxed code, and checks that the library code does not leak
this information.
Concretely, our implementations enforce this by ensuring that, within each
function's localized control flow, all register and local stack
variables are initialized before use.

The individual properties making up our zero-cost conditions are well-known to be beneficial to software security, and their enforcement in low-level code has been extensively studied~(\secref{sec:related}): our insight\dash---made manifest in the monitor
soundness proofs of \sectionref{sec:overlay:secure}\dash---is that in
conjunction these conditions are \emph{sufficient} to eliminate heavyweight
transitions in SFI systems, which can currently be a source of significant overhead when sandboxing arbitrary code.
Indeed, in \sectionref{sec:web-assembly-secure} we show that the Wasm
type system is strict enough to ensure that a Wasm compiler generates native
code that already meets these conditions.
To increase the trustworthiness of this zero-cost compatible Wasm, in
\sectionref{sec:wasm-verifier} we describe a verifier that statically checks
that compiled Wasm code meets the zero-cost conditions.
In \sectionref{sec:wasm-proof} we describe our proof of soundness for the
verifier, proving that the verifier's checks ensure monitor safety and therefore
zero-cost security.
Further, in \sectionref{sec:segments-secure} we demonstrate how the zero-cost
conditions can be used to design a new SFI scheme by combining hardware-backed
memory isolation with existing LLVM compiler passes.

\section{A Gated Assembly Language}
\label{sec:model}

\begin{figure}[t]
  \begin{small}
  \begin{tabular}{@{}>{$}r<{$} >{$}c<{$} >{$}r<{$} >{$}c<{$} >{$}l<{$}}
    & & n & \in & \nats \\
    \privs & \ni & p & \bnfdef & \trusted \bnfalt \untrusted \\
    \vals & \ni & v & \bnfdef & n \\
    \regs & \ni & r & \bnfdef & \mathtt{r}_{n} \bnfalt sp \bnfalt pc \\
    \regions & \ni & k & \in & \nats \rightharpoonup \nats \\

    \expressions & \ni & e & \bnfdef & r \bnfalt v \bnfalt e \oplus e \\

    \commands & \ni & c & \bnfdef & \cpop{r}{p} \bnfalt \cpush{p}{e} \bnfalt \cjmp{k}{e} \bnfalt \cload{r}{k}{e} \bnfalt \cstore{k}{e}{e} \bnfalt \\
              &     &   &         & \cgatecall{n}{e} \bnfalt \cgateret \bnfalt \cmov{r}{e} \bnfalt \ccall{k}{e} \bnfalt \cret{k} \bnfalt \\
              &     &   &         & \cmovlabel{r}{p} \bnfalt \cstorelabel{p}{e} \\
    \codes    & \ni & C & \bnfdef & \nats \rightharpoonup \privs \times \commands \\
    \regvals  & \ni & R & \bnfdef & \regs \rightarrow \vals \\
    \memories & \ni & M & \bnfdef & \nats \rightarrow \vals \\
    \states   & \ni & \Psi & \bnfdef & \error \bnfalt \{ pc \bnftypes \nats, sp \bnftypes \nats, R \bnftypes \regvals, M \bnftypes \memories, C \bnftypes \codes \}
  \end{tabular}
  \end{small}
  \caption{Syntax}
  \label{fig:formalism:syntax:lang}
\end{figure}

We formalize zero-cost transitions via an assembly language, \langname{}, that
captures key notions of an application interacting with a sandboxed library,
focusing on capturing properties of the transitions between the application and
sandboxed library.

\para{Code}
\figref{fig:formalism:syntax:lang}
summarizes the syntax of \langname{}:
a \textsc{Risc}-style language with natural
numbers ($\nats$) as the sole data type.
Code ($C$) and data ($M$) memory are separated, and, to capture the separation
of application code from sandboxed library code, $C$ is an (immutable) partial
map from $\nats$ to pairs of a privilege ($p$) ($\trusted$ or $\untrusted$) and
a command ($c$), where $\trusted$ and $\untrusted$ are our \emph{security
domains}.

\para{States}
Memory is a (total) map from $\nats$
to values ($v$).
We assume that the memory is subdivided
into disjoint regions ($M_p$) so that
the application and library have separate memory.
Each of these regions is further divided
into a disjoint heap $H_p$ and stack $S_p$.
We write $\Psi$ to denote the states
or machine configurations, which comprise
code, memory, and a fixed, finite set of
registers mapping register names $(r_n)$
to values, with a distinguished stack
pointer ($sp$) and program counter ($pc$)
register.
We write $\currentcom{\Psi}{c}{p}$
for $\Psi.C(\Psi.pc) = (p, c)$,
that is that the current instruction
is $c$ in security domain $p$.
We write $\Psi_0 \in \programs$
to mean that $\Psi_0$ is a valid
initial program state.
The definition of validity varies between different SFI techniques (e.g.,
heavyweight transitions make assumptions about the initial state of the
separate stack).

\para{Gated calls and returns}
We capture the transitions between the application and the library by defining a
pair of instructions $\cgatecall{n}{e}$ and $\cgateret$, that serve as the
\emph{only} way to switch between the two security domains (that is,
$\cinst{call}$ and $\cinst{ret}$ check that the target is in the same security
domain).
The first, $\cgatecall{n}{e}$,
represents a call from the
application into the sandbox
or a callback from the sandbox
to the application with the $n$
annotation representing the number
of arguments to be passed.
The second, $\cgateret$, represents
the corresponding return from sandbox
to application or vice-versa.
We leave the reduction rule for both
\emph{implementation specific} in order
to capture the details of a given SFI
system's trampolines and springboards.

\para{Memory isolation}
\langname{} provides abstract mechanisms
for enforcing SFI memory isolation by
equipping the standard $\mathtt{load}$,
$\mathtt{store}$, $\mathtt{push}$, and
$\mathtt{pop}$ with (optional) statically
annotated checks.
To capture different styles of enforcement we model these checks as partial
functions that map a pointer to its new value or are undefined when a particular
address is invalid.
This lets us, for instance, capture NaCl's
coarse grained, dynamically enforced
isolation (sandboxed code may read
and write anywhere in the sandbox memory)
by requiring that all loads and stores
are annotated with the check $k(n)|_{n \in M_{\untrusted}} = n$.
This captures that NaCl's memory isolation does not remap addresses but traps
when an address is outside the sandbox memory region
($M_{\untrusted}$).\footnote{NaCl implements memory protection differently on
different platforms. The 32-bit implementation traps whereas the 64-bit
implementation masks addresses. We focus on the former.}
The rule for $\cinst{load}$ below demonstrates the use of these region
annotations in the semantics.

\para{Control-flow integrity}
\langname{} also provides abstract
control-flow integrity enforcement
via annotations on $\mathtt{jmp}$,
$\mathtt{call}$, and $\mathtt{ret}$.
These are also enforced dynamically.
However, we require that the standard
control flow operations remain within
their own security domain so that
$\mathtt{gatecall}$ and $\mathtt{gateret}$
remain the only way to switch
security domains.

\para{Operational semantics}
We capture the dynamic behavior via 
a deterministic small-step operational
semantics ($\Psi \step \Psi'$).
The rules are standard; we show the
rule for $\mathtt{load}$ here:
\begin{small}
\begin{mathpar}
  \inferrule
  {
    addr = \immval{\Psi}{e}
    \\ addr' = k(addr)
    \\\\ v = \Psi.M(addr')
    \\ R' = \Psi.R[r \mapsto v]
  }
  {\currentop{\Psi}{\cload{r}{k}{e}} \step \pcinc{\Psi}[R \assign R']}
\end{mathpar}
\end{small}
$\immval{\Psi}{e}$ evaluates the expression based on the register file and
$\pcinc{\Psi}$ increments $pc$, checking that it remains within the same
security domain and returning an error otherwise.
If the function $k(addr)$ is undefined ($addr$ is not within bounds), the
program will step to a distinguished, terminal state $\error$.
$\currentop{\Psi}{c}$ is simply shorthand for $\currentcom{\Psi}{c}{p}$ when we
do not care about the security domain.
Lastly, we do not include a specific halt command, instead halting when $pc$ is
not in the domain of $C$.

\subsection{Secure transitions}
\label{sec:assembly:security}

Next, we use \langname{} to \emph{declaratively} specify high-level properties
that capture the intended security goals of transition systems.
This lets us use \langname{} both as a setting to study zero-cost
transitions and to explore the correctness of implementations of
springboards and trampolines.
As a demonstrative example we prove that NaCl-style heavyweight transitions
satisfy the high-level properties (\iftechreport{\secref{appendix:nacl}}{see the
technical appendix~\cite{kolosick2021isolation}}).

\begin{figure}[t]
  \begin{small}
  \begin{mathpar}
    \inferrule
    {
      \Psi_1 \step \Psi_2
      \\ \currentcom{\Psi_1}{c_1}{p_1}
      \\\\ \currentcom{\Psi_2}{c_2}{p_2}
      \\ p_1 = p_2 = p
    }
    {\Psi_1 \stepp{p} \Psi_2}

    \inferrule
    {\Psi \stepp{p} \Psi'}
    {\Psi \stepbox \Psi'}

    \inferrule
    {\Psi \stepwb \Psi'}
    {\Psi \stepbox \Psi'}

    \inferrule
    {
      \Psi \step \Psi_1 \stepboxstar \Psi_2 \step \Psi'
      \\\\ \currentop{\Psi}{\cgatecall{n}{i}}
      \\ \currentop{\Psi_2}{\cgateret}
    }
    {\Psi \stepwb \Psi'}
  \end{mathpar}
  \end{small}
  \caption{Well-Bracketed Transitions}
  \label{fig:well-bracketed-transition}
\end{figure}

\para{Well-bracketed gated calls}
SFI systems may allow arbitrary \emph{nesting}
of calls into and callbacks out of the sandbox.
Thus, it is insufficient to define that callee-save registers have been properly
restored by simply equating register state upon entry to the sandbox and the
following exit.
Instead we make the notion of an entry and its \emph{corresponding} exit
precise, by using \langname{}'s $\mathtt{gatecall}$ and $\mathtt{gateret}$ to
define a notion of \emph{well-bracketed gated calls} that serve as the backbone
of transition integrity properties.
A well-bracketed gated call, which we write $\Psi \stepwb \Psi'$
(\figref{fig:well-bracketed-transition}), captures the idea that $\Psi$ is a
gated call from one security domain to another, followed by running in the new
security domain, and then $\Psi'$ is the result of a gated return that balances
the gated call from $\Psi$.
This can include potentially recursive but properly bracketed gated calls.
Well-bracketed gated calls let us relate the state before a gated call with the
state after the \emph{corresponding} gated return, capturing when the library
has fully returned to the application.

\para{Integrity}
Relations between the states before calling into the sandbox and then after the
corresponding return capture SFI transition system \emph{integrity} properties.
We identify two key integrity properties that SFI transitions must maintain:

\emph{1. Callee-save register integrity}
requires that callee-save registers are restored after returning from a gated
call into the library.
This ensures that an attacker cannot unexpectedly modify the private state of
an application function.

\emph{2. Return address integrity}
requires that the sandbox
\begin{enumerate*}
\item returns to the instruction after the $\mathtt{gatecall}$,
\item does not tamper with the stack pointer, and
\item does not modify the call stack itself.
\end{enumerate*}
Together these ensure that an attacker cannot tamper with the application
control flow.

These integrity properties are crucial to ensure that the sandboxed library cannot
break application invariants.
To capture them formally, we first define an abstract notion of integrity
across a well-bracketed gated call.
This not only allows us to cleanly define the above properties, but also
provides a general framework that can capture integrity properties for different
architectures.

Specifically, we define an integrity property by a predicate $\mathcal{I} :
\mathit{Trace} \times \mathit{State} \times \mathit{State} \rightarrow \prop$
that captures when integrity is preserved across a call ($\prop$ is the type of
propositions).
The first argument is a trace, a sequence of steps that our program has taken
before making the gated call.
The next two arguments are the states before and after the well-bracketed gated
call.
$\mathcal{I}$ defines when these two states are properly related.
This leads to the following definition of $\mathcal{I}$-Integrity:

\begin{definition}[$\mathcal{I}$-Integrity]
  Let $\mathcal{I} : \mathit{Trace} \times \mathit{State} \times \mathit{State} \rightarrow \prop$.
  We say that an SFI transition system has $\mathcal{I}$-integrity if
  $\Psi_0 \in \programs$, $\pi = \Psi_0 \stepstar \Psi_1$,
  $\currentcom{\Psi_1}{\_}{\trusted}$, and $\Psi_1 \stepwb \Psi_2$ imply that
  $\mathcal{I}(\pi, \Psi_1, \Psi_2)$.
\end{definition}

\noindent
We instantiate this to define our two integrity properties:

\para{Callee-save register integrity}
We define callee-save register integrity as an $\mathcal{I}$-integrity property
that requires the callee-save registers' values to be equal in both states:
\begin{definition}[Callee-Save Register Integrity]
  Let $\mathbb{CSR}$ be the callee-save registers and define
  $\mathcal{CSR}(\_, \Psi_1, \Psi_2) \triangleq \Psi_2.R(\mathbb{CSR}) = \Psi_1.R(\mathbb{CSR})$.
  If an SFI transition system has $\mathcal{CSR}$-integrity then we say it has callee-save register integrity.
\end{definition}

\para{Return address integrity}
We specify that the library returns to the expected instruction as a relation
between $\Psi_1$ and $\Psi_2$, namely that $\Psi_2.pc = \Psi_1.pc + 1$.
Restoration of the stack pointer is similarly specified as
$\Psi_2.sp = \Psi_1.sp$.
Specifying call stack integrity is more involved as $\Psi_1$ lacks information
on where return addresses are saved: they look like any other stack data.
Instead, return addresses are defined by the history of calls and returns
leading up to $\Psi_1$, which we capture with the trace argument $\pi$.
We thus define a function $\oname{return-address}(\pi)$ (\iftechreport{see
\figref{fig:appendix:return-address} in the appendix}{see the technical
appendix~\cite{kolosick2021isolation}}) that computes the locations of
return addresses from a trace.
The third clause of return address integrity
is then that these locations' values are
preserved from $\Psi_1$ to $\Psi_2$, yielding:

\begin{definition}[Return Address Integrity]
  \begin{align*}
    \mathcal{RA}(\pi, \Psi_1, \Psi_2) &\triangleq \Psi_2.pc = \Psi_1.pc + 1 \wedge \Psi_2.sp = \Psi_1.sp \\
    & \wedge \Psi_2.M(\oname{return-address}(\pi)) = \Psi_1.M(\oname{return-address}(\pi))
  \end{align*}
  If an SFI transition system has $\mathcal{RA}$-integrity then we say the system has return address integrity.
\end{definition}

\para{Confidentiality}
SFI systems must ensure that secrets cannot be leaked to the untrusted
library, i.e., they must provide \emph{confidentiality}.
We specify confidentiality as noninterference, which informally states that
``changing secret inputs should not affect public outputs.''
In the context of library sandboxing, application data is secret whereas
library data is non-secret (public).\footnote{This could also be extended to
a setting with mutually distrusting components.}
To capture this formally, we pair programs with a confidentiality policy,
$\mathbb{C} \in \states \rightharpoonup (\nats \mathrel{+} \regs \rightarrow
\privs)$, that labels all memory and registers as $\trusted$ or $\untrusted$ at
each gated call into the library.
These labels form a lattice:
$\untrusted \lesstrusted \trusted$ (non-secret \emph{can} ``flow to'' secret)
and $\trusted \nlesstrusted \untrusted$ (secret \emph{cannot} ``flow to''
non-secret).\footnote{Details can be found in
\iftechreport{\sectionref{sec:appendix:confidentiality}}{the technical
appendix~\cite{kolosick2021isolation}}.}

To prove noninterference, that changing secret data does not affect public (or
non-secret) outputs, we need to define public outputs.
We over-approximate public outputs as the set of values \emph{exposed} to the
application.
This includes all arguments to a $\mathtt{gatecall}$ callback, the return value
when returning to the application via $\mathtt{gateret}$, and all values
stored in the sandboxed library's heap ($H_{\untrusted}$) (which may be
referenced by other returned values).

Alas, this is not enough: in a callback, the application may choose to
declassify secret data.
For instance, a sandboxed image decoding library might, after parsing the file
header, make a callback requesting the data to decode the rest of the image.
This application callback will then transfer that data (which was previously
confidential) to the sandbox, declassifying it in the transfer.

To account for such intentional declassifications, we follow
\citet{matos_declassification_2005} and define confidentiality as
\emph{disjoint noninterference}.
We use $\Psi =_{\mathbb{C}} \Psi'$ to mean that $\Psi$ and $\Psi'$ agree on all values
labeled $\untrusted$ by the confidentiality policy, capturing varying secret inputs.
We further write $\Psi =_{\mathtt{call}\ m} \Psi'$ when $\Psi$ and $\Psi'$
agree on all sandboxed heap values, the program counter, and the $m$ arguments
passed to a callback and $\Psi =_{\mathtt{ret}} \Psi'$ when $\Psi$ and
$\Psi'$ agree on all sandboxed heap values, the program counter, and the value
in the return register (written $r_{ret}$).\footnote{Full definitions are in
\iftechreport{\appref{appendix:language-defs} \defref{appendix:call-equivalence} and
\defref{appendix:ret-equivalence}}{the technical appendix~\cite{kolosick2021isolation}}.}
This lets us formally define noninterference as follows:

\begin{definition}[\StrongNI{}]{~}

  We say that an SFI transition system has the \strongni{} property if,
  for all initial configurations and their confidentiality properties $(\Psi_0, \mathbb{C}) \in \programs$,
  traces $\Psi_0 \stepstar \Psi_1 \step \Psi_2 \steplowstar \Psi_3 \step \Psi_4$,
  where $\Psi_1$ is a gated call into the library
  ($\currentcom{\Psi_1}{\cgatecall{n}{e}}{\trusted}$),
  and $\Psi_3 \step \Psi_4$ leaves the library and reenters the application
  ($\currentcom{\Psi_4}{\_}{\trusted}$),
  and, for all $\Psi_1'$ such that $\Psi_1 =_{\mathbb{C}}
  \Psi_1'$, we have that $\Psi_1' \step \Psi_2' \steplowstar \Psi_3' \step \Psi_4'$,
  $\currentcom{\Psi_4'}{\_}{\trusted}$, $\Psi_4.pc = \Psi_4'.pc$, and either
  \begin{enumerate*}
  \item $\Psi_3$ is a gated call to the application ($\currentop{\Psi_3}{\cgatecall{m}{e}}$ and $\currentop{\Psi_3'}{\cgatecall{m}{e}}$) and $\Psi_4 =_{\mathtt{call}\ m} \Psi_4'$ or
  \item $\Psi_3$ is a gated return to the application ($\currentop{\Psi_3}{\cgateret}$ and $\currentop{\Psi_3'}{\cgateret}$) and $\Psi_4 =_{\mathtt{ret}} \Psi_4'$.
  \end{enumerate*}
\end{definition}
This definition captures that, for any sequence of executing
within the library then returning control to the application, varying
confidential inputs does not influence the public outputs and the library
returns control to the application in the same number of steps.
Thus, an SFI system that satisfies \StrongNI{} is guaranteed
to not leak data while running within the sandbox.

We formalize NaCl style heavyweight transitions in
\iftechreport{\appref{appendix:nacl}}{the technical
appendix~\cite{kolosick2021isolation}} and prove that they meet the above secure
transition properties.
We discuss our proof that zero-cost Wasm meets the above secure transition
properties in \sectionref{sec:wasm-proof}.

\section{Zero-Cost Transition Conditions}
\label{sec:formal-conditions}

\begin{figure}[t]
  \begin{small}
  \begingroup
  \setlength{\tabcolsep}{1mm}
  \setlength{\arraycolsep}{1mm}
  \begin{tabular}{>{$}r<{$} >{$}c<{$} >{$}r<{$} >{$}c<{$} >{$}l<{$}}
    \vals & \ni & \oc{v} & \bnfdef & \val{n}{p}
    \\

    \frames & \ni & \oc{\SF} & \bnfdef &
        \{
        \mathit{base} \bnftypes \nats;
        \mathit{ret\mbox{-}addr\mbox{-}loc} \bnftypes \nats;
        \mathit{csr\mbox{-}vals} \bnftypes \powerset{\regs \times \nats} \}
    \\

    \functions & \ni & \oc{F} & \bnfdef &
        \{
        \mathit{instrs} \bnftypes \nats \rightharpoonup \commands;
        \mathit{entry} \bnftypes \nats;
        \mathit{type} \bnftypes \nats \}
    \\

    \ostates & \ni & \oc{\Phi} & \bnfdef & \oerror \bnfalt
        \{
        \Psi \bnftypes \states;
        \mathit{funcs} \bnftypes \nats \rightharpoonup \functions;
        \mathit{stack} \bnftypes [\frames] \}
  \end{tabular}
  \endgroup
  \end{small}
  \caption{\olangname{} Extended Syntax}
  \label{fig:overlay:syntax}
\end{figure}

Having laid out the security properties required of an SFI transition
system, we turn to formally defining the set of zero-cost conditions on sandboxed
code such that they sufficiently capture when we may securely elide springboards
or trampolines.
To this end we define our zero-cost conditions as a safety monitor via the
language \olangname{} overlaid on top of \langname{}.
\olangname{} extends \langname{} with additional structure and dynamic type
checks that ensure the invariants needed for zero-cost transitions are
maintained upon returning from library functions, providing both an inductive
structure for proofs of security for zero-cost implementations and providing a
top-level guarantee that our integrity and confidentiality properties are
maintained.
In \sectionref{sec:overlay:secure} we outline the proofs of overlay soundness,
showing that \olangname{} captures when a system is zero-cost secure.

\para{Syntax of \olangname{}}
\figref{fig:overlay:syntax} shows the extended syntax of \olangname{}.
Values ($\oc{v}$) are extended with a security label $p$.
Overlay state, written $\oc{\Phi}$, wraps the state of \langname{}, extending it
with two extra pieces of data.
First, \olangname{} requires the sandboxed code be organized into functions
($\oc{\Phi}.\mathit{funcs}$).
$\oc{\Phi}.\mathit{funcs}$ maps each command in the sandboxed library to its
parent function.
Functions ($\oc{F}$) also store the code indices of their commands as the field
$\oc{F}.\mathit{instrs}$, store the entry point ($\oc{F}.\mathit{entry}$),
and track the number of arguments the function expects ($\oc{F}.\mathit{type}$).
This partitioning of sandboxed code into functions is static.
Second, the overlay state dynamically tracks a list of overlay stack frames
($\oc{\Phi}.\mathit{stack}$).
These stack frames ($\oc{\SF}$) are solely logical and inaccessible to instructions.
They instead serve as bookkeeping to implement the dynamic type checks of
\olangname{} by tracking the base address of each stack frame
($\oc{\SF}.\mathit{base}$), the stack location of the return address
($\oc{\SF}.\mathit{ret\mbox{-}addr}$), and the values of the callee
save registers upon entry to the function ($\oc{\SF}.\mathit{csr\mbox{-}vals}$).
We are concerned with the behavior of the untrusted library, so the logical
stack does not finely track application stack frames, but keeps a single large
``stack frame'' for all nested application stack frames.

When code fails the overlay's dynamic checks it will result in the state
$\oerror$.
Our definition of monitor safety, which will ensure that zero-cost transitions
are secure, is then simply that a program does not step to an $\oerror$.

\subsection{Overlay monitor} \label{subsec:overlay-monitor-def}

\begin{figure}[t]
  \begin{small}
  \begin{mathpar}
    \inferrule[\defredOcall]
    {
      \val{n}{\untrusted} = \oimmval{\oc{\Phi}}{e}
      \\ n' = k(n)
      \\ sp' = \oc{\Phi}.sp + 1
      \\\\ M' = \oc{\Phi}.M[sp' \mapsto \oc{\Phi}.pc + 1]
      \\ \mathit{stack}' = [\oc{\SF}] \concat \oc{\Phi}.\mathit{stack}
      \\ \oc{\SF} = \oname{new-frame}(\oc{\Phi}, n', sp')
      \\ \oname{typechecks}(\oc{\Phi}, n', sp')
      \\ \oc{\Phi'} = \oc{\Phi}[\mathit{stack} \assign \mathit{stack}', pc \assign n', sp \assign sp', M \assign M']
    }
    {\currentcom{\oc{\Phi}}{\ccall{k}{e}}{\untrusted} \ostep \oc{\Phi'}}

    \inferrule[\defredOret]
    {
      \oname{is-ret-addr}(\oc{\Phi}, \oc{\Phi}.sp)
      \\ \natval{n} = \oc{\Phi}.M(\oc{\Phi}.sp)
      \\ n' = k(n)
      \\\\ \oname{csr-restored}(\oc{\Phi})
      \\ \oc{\Phi'} = \oname{pop-frame}(\oc{\Phi})
    }
    {\currentcom{\oc{\Phi}}{\cret{k}}{\untrusted} \ostep \oc{\Phi'}[pc \assign n', sp \assign \oc{\Phi}.sp - 1]}

    \inferrule[\defredOjmp]
    {
      \val{n}{\untrusted} = \oimmval{\oc{\Phi}}{e}
      \\ n' = k(n)
      \\\\ \oname{in-same-func}(\oc{\Phi}, \oc{\Phi}.pc, n')
    }
    {\currentcom{\oc{\Phi}}{\cjmp{k}{e}}{\untrusted} \ostep \oc{\Phi}[pc \assign n']}

    \inferrule[\defredOstore]
    {
      \val{n}{\untrusted} = \oimmval{\oc{\Phi}}{e}
      \\ \oc{v} = \val{\_}{p_{e'}} = \oimmval{\oc{\Phi}}{e'}
      \\ M' = \oc{\Phi}.M[n' \mapsto \oc{v}]
      \\\\ \oname{writeable}(\oc{\Phi}, n')
      \\ n' = k(n)
      \\ p_{e'} = \trusted \Longrightarrow n' \notin H_{\untrusted}
    }
    {\currentcom{\oc{\Phi}}{\cstore{k}{e}{e'}}{\untrusted} \ostep \pcinc{\oc{\Phi}}[M \assign M']}
  \end{mathpar}
  \end{small}
  \caption{\olangname Operational Semantics Excerpt}
  \label{fig:overlay:operational-excerpt}
\end{figure}

\begin{figure}[t]
  \begin{small}
  \begin{mathpar}
    \inferrule
    {
      \oc{F} = \oc{\Phi}.\mathit{funcs}(\mathit{target})
      \\ \oc{F}.\mathit{entry} = \mathit{target}
      \\ sp \in S_p
      \\\\ [\oc{\SF}] \concat \_ = \oc{\Phi}.\mathit{stack}
      \\ sp \geq \oc{\SF}.\mathit{ret\mbox{-}addr} + \oc{F}.\mathit{type}
    }
    {\oname{typechecks}(\oc{\Phi}, \mathit{target}, sp)}

    \inferrule
    {
      [\oc{\SF}] \concat \_ = \oc{\Phi}.\mathit{stack}
      \\\\ \mathit{ret\mbox{-}addr} = \oc{\SF}.\mathit{ret\mbox{-}addr}
    }
    {\oname{is-ret-addr}(\oc{\Phi}, \mathit{ret\mbox{-}addr})}

    \inferrule
    {
      \oc{F} \in \cod{\oc{\Phi}.\mathit{funcs}}
      \\\\ n, n' \in \oc{F}.\mathit{instrs}
    }
    {\oname{in-same-func}(\oc{\Phi}, n, n')}

    \inferrule
    {
      [\oc{\SF}] \concat \_ = \oc{\Phi}.\mathit{stack}
      \\\\ \forall (r, n) \in \oc{\SF}.\mathit{csr\mbox{-}vals}.~ \oc{\Phi}.R(r) = n
    }
    {\oname{csr-restored}(\oc{\Phi})}

    \inferrule
    {
      [\oc{\SF}] \concat \_ = \oc{\Phi}.\mathit{stack}
      \\\\ n \in S_p \Longrightarrow
      n \geq \oc{\SF}.\mathit{base} \wedge n \neq \oc{\SF}.\mathit{ret\mbox{-}addr}
    }
    {\oname{writeable}(\oc{\Phi}, n)}
  \end{mathpar}
  \end{small}
  \caption{\olangname Semantics Auxiliary Predicates}
  \label{fig:overlay:aux-preds}
\end{figure}

\olangname{} enforces our zero-cost conditions by extending the operational
semantics of \langname{} with additional checks in the overlay's small
step operational semantics, written $\oc{\Phi} \ostep \oc{\Phi'}$.
Each of these steps is a refinement of the underlying \langname{} step, that is
$\oc{\Phi}.\Psi \step \oc{\Phi'}.\Psi$ whenever $\oc{\Phi'}$ is not $\oerror$.
\figref{fig:overlay:operational-excerpt} (with auxiliary definitions shown in
\figref{fig:overlay:aux-preds}) shows an excerpt of the checks, which we
describe below.
Full definitions can be found in \iftechreport{\appref{appendix:overlay}}{the
technical appendix~\cite{kolosick2021isolation}}.
The checks are similar in nature to the defensive semantics of \citet{sfi-as-ai}
though they account for confidentiality and define a more flexible notion of
protecting stack frames.

\para{Call}
In the overlay, the reduction rule for library $\mathtt{call}$ instructions
(\explainredOcall{}) checks type-safe execution with $\oname{typechecks}$, a
predicate over the state ($\oc{\Phi}$), call target ($\mathit{target}$), and
stack pointer ($sp$) that checks that
\begin{enumerate*}
\item the address we are jumping to is the entry instruction of one of the
functions,

\item the stack pointer remains within the stack ($sp \in S_p$), and

\item the number of arguments expected by the callee have been pushed
to the stack.
\end{enumerate*}
On top of this, $\mathtt{call}$ also creates a new logical stack frame
recording the base of the new frame, location of the return address, and the
current callee-save register values, pushing the new frame onto the overlay
stack.
To ensure IFC, we require that $i$ has the label $\untrusted$ to ensure that
control flow is not influenced by confidential values; a similar check is done
when jumping within library code, obviating the need for a program counter
label.
Further, because the overlay captures zero-cost transitions, $\mathtt{gatecall}$
behaves in the exact same way except for an additional IFC check that the
arguments are not influenced by confidential values.

\para{Jmp}
Our zero-cost conditions rely on preventing invariants internal to a function
from being interfered with by other functions.
A key protection enabling this is illustrated by the reduction
for $\mathtt{jmp}$ (\explainredOjmp), which enforces that the
only inter-function control flow is via $\mathtt{call}$
and $\mathtt{ret}$: the $\oname{in-same-func}$ predicate
checks that the current ($n$) and target ($n'$)
instructions are within the same overlay function.
The same check is added to the program counter increment operation,
$\pcinc{\oc{\Phi}}$.
These checks ensure that the logical call stack corresponds to the actual
control flow of the program, enabling the overlay stack's use in maintaining
invariants at the level of function calls.

\para{Store}
The reduction rule for $\mathtt{store}$ (\explainredOstore) demonstrates the
other key protection enabling function local reasoning, with the check that the
address ($n$) is $\oname{writeable}$ given the current state of the overlay stack.
The predicate $\oname{writeable}$ guarantees that, if the operation is writing
to the stack, then that write must be within the current frame and cannot be the
location of the stored return address.
This allows reasoning to be localized to each function: they do not need to
worry about their callees tampering with their local variables.
Protecting the stored return address is crucial for ensuring well-bracketing,
which guarantees that each function returns to its caller.

To guarantee IFC, \redOstore{} first requires that the pointer have the label
$\untrusted$, ensuring that the location we write to is not based on confidential
data.
Second, the check $p_{i'} = \trusted \Longrightarrow n' \notin H_{\untrusted}$
enforces that confidential values cannot be written to the library heap.
Similar checks, based on standard IFC techniques, are implemented for all other
instructions.

\para{Ret}
With control flow checks and memory write checks in place, we guarantee that,
when we reach a $\mathtt{ret}$ instruction, the logical call frame will
correspond to the ``actual'' call frame.
$\mathtt{ret}$ is then responsible for guaranteeing well-bracketing and ensuring
callee-save registers are restored.
This is handled by two extra conditions on $\mathtt{ret}$ instructions:
$\oname{is-ret-addr}$ and $\oname{csr-restored}$.
$\oname{csr-restored}$ checks that callee-save registers have been
properly restored by comparing against the values that were saved in the
logical stack frame by $\mathtt{call}$.
$\oname{is-ret-addr}$ checks that the value pointed to by the stack pointer
($\mathit{ret\mbox{-}addr}$) corresponds to the location of the return address
saved in the logical stack frame.
Memory writes were checked to enforce that the return address cannot be
overwritten, so this guarantees the function will return to the expected program
location.

\subsection{Overlay Semantics Enforce Security}
\label{sec:overlay:secure}

The goal of the overlay semantics and our zero-cost conditions is to capture the
essential behavior necessary to ensure that individual, well-behaved library
functions can be composed together into a sandboxed library call that enforces
SFI integrity and confidentiality properties.
Thus, library code that is well-behaved under the dynamic overlay type system
will behave equivalently to library code with springboard and trampoline
wrappers, and therefore well-behaved library code can safely elide those
wrappers and their overhead.
We prove that the overlay semantics is sound with respect to each of our
security properties:
\begin{theorem}[Overlay Integrity Soundness] \label{thm:overlay-integrity-soundness}
  If $\oc{\Phi_0} \in \programs$, $\oc{\Phi_0} \ostepn{n} \oc{\Phi_1}$,
  $\currentcom{\oc{\Phi_1}}{\_}{\trusted}$, and
  $\oc{\Phi_1} \ostepstar \oc{\Phi_2}$ such that $\oc{\Phi_1}.\Psi \stepwb
  \oc{\Phi_2}.\Psi$ with $\pi = \oc{\Phi_0}.\Psi \stepn{n} \oc{\Phi_1}.\Psi$, then
  \begin{enumerate*}
  \item $\mathcal{CSR}(\pi, \oc{\Phi_1}.\Psi, \oc{\Phi_2}.\Psi)$ and
  \item $\mathcal{RA}(\pi, \oc{\Phi_1}.\Psi, \oc{\Phi_2}.\Psi)$.
  \end{enumerate*}
\end{theorem}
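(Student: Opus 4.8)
The idea is to mine the bookkeeping kept in the overlay's logical stack. First I would unfold the hypothesis $\oc{\Phi_1}.\Psi \stepwb \oc{\Phi_2}.\Psi$ via \figref{fig:well-bracketed-transition}: it gives $\oc{\Phi_1}.\Psi \step \Psi' \stepboxstar \Psi'' \step \oc{\Phi_2}.\Psi$ with $\currentop{\oc{\Phi_1}.\Psi}{\cgatecall{n}{e}}$ and $\currentop{\Psi''}{\cgateret}$. Since each overlay step refines the underlying \langname{} step and \langname{} is deterministic, the given run $\oc{\Phi_1} \ostepstar \oc{\Phi_2}$ mirrors this decomposition: it opens with the overlay $\mathtt{gatecall}$ rule, which --- as in \redOcall, plus an IFC check --- pushes a fresh frame $\oc{\SF} = \oname{new-frame}(\oc{\Phi_1}, \dots)$ onto $\Gamma$ (writing $\Gamma$ for $\oc{\Phi_1}.\mathit{stack}$), recording $\oc{\SF}.\mathit{csr\mbox{-}vals} = \{(r, \oc{\Phi_1}.R(r)) \mid r \in \mathbb{CSR}\}$ and $\oc{\SF}.\mathit{ret\mbox{-}addr} = \oc{\Phi_1}.sp + 1$ and writing $\oc{\Phi_1}.pc + 1$ to $M$ at $\oc{\Phi_1}.sp + 1$, yielding successor $\oc{\Phi_1'}$; and it closes at a state $\oc{\Phi_2'}$ (with $\oc{\Phi_2'}.\Psi = \Psi''$) by the overlay $\mathtt{gateret}$ rule (like \redOret), which resets $pc$ from the saved return slot, decrements $sp$, and alters neither $R$ nor data memory.

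Two inductions on the sub-run $\oc{\Phi_1'} \ostepstar \oc{\Phi_2'}$ then do the work. The first is an \emph{overlay well-bracketing lemma}: because $\mathtt{jmp}$ and the $pc$-increment stay inside a function ($\oname{in-same-func}$) while plain $\mathtt{call}/\mathtt{ret}$ may not cross security domains, each overlay $\mathtt{ret}$ (resp.\ $\mathtt{gateret}$) pops exactly the frame pushed by its matching $\mathtt{call}$ (resp.\ $\mathtt{gatecall}$), so $[\oc{\SF}] \concat \Gamma$ is a suffix of $\oc{\Phi}.\mathit{stack}$ throughout, with $\oc{\SF}$ on top precisely at $\oc{\Phi_2'}$. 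The second establishes the invariant that at every intermediate $\oc{\Phi}$: (i) reading the frames from $\oc{\SF}$ upward, the $\mathit{base}$ and $\mathit{ret\mbox{-}addr}$ fields strictly increase and are all $> \oc{\Phi_1}.sp$; (ii) for each live frame $\oc{\SF}'$ (at or above $\oc{\SF}$), $\oc{\Phi}.M(\oc{\SF}'.\mathit{ret\mbox{-}addr})$ still holds the value written when $\oc{\SF}'$ was pushed; and (iii) $\oc{\Phi}.M$ agrees with $\oc{\Phi_1}.M$ on every address $\leq \oc{\Phi_1}.sp$. The load-bearing cases: a library $\mathtt{store}$ (\redOstore) has its target guarded by $\oname{writeable}$ to lie at-or-above the top frame's $\mathit{base}$ and off its $\mathit{ret\mbox{-}addr}$, which with (i) keeps it $> \oc{\Phi_1}.sp$ and clear of every live return slot; a $\mathtt{call}$ (\redOcall) uses the $\oname{typechecks}$ bound $sp \geq \oc{\SF}.\mathit{ret\mbox{-}addr} + \oc{F}.\mathit{type}$ to place the new frame strictly above its caller's and records the fresh return address; a nested callback is itself a well-bracketed gated call entered at a stack pointer $> \oc{\Phi_1}.sp$, so the invariant applied recursively to it (a strictly shorter run, so the outer induction should be on length) shows it leaves memory $\leq \oc{\Phi_1}.sp$ alone; and $\mathtt{jmp}$, the $pc$-increment, and $\mathtt{ret}$ write no data memory and only shrink the frame list.

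Both conclusions are then read off the final step $\oc{\Phi_2'} \ostep \oc{\Phi_2}$. By the well-bracketing lemma the top frame at $\oc{\Phi_2'}$ is $\oc{\SF}$, so $\oname{csr-restored}(\oc{\Phi_2'})$ forces $\oc{\Phi_2'}.R(r) = \oc{\Phi_1}.R(r)$ for every $r \in \mathbb{CSR}$, and since $\mathtt{gateret}$ leaves $R$ unchanged this is $\mathcal{CSR}(\pi, \oc{\Phi_1}.\Psi, \oc{\Phi_2}.\Psi)$; $\oname{is-ret-addr}(\oc{\Phi_2'}, \oc{\Phi_2'}.sp)$ forces $\oc{\Phi_2'}.sp = \oc{\SF}.\mathit{ret\mbox{-}addr} = \oc{\Phi_1}.sp + 1$, hence $\oc{\Phi_2}.sp = \oc{\Phi_1}.sp$; invariant (ii) at $\oc{\SF}$ gives $\oc{\Phi_2'}.M(\oc{\Phi_1}.sp+1) = \oc{\Phi_1}.pc+1$, hence $\oc{\Phi_2}.pc = \oc{\Phi_1}.pc + 1$; and since every location in $\oname{return-address}(\pi)$ was recorded by a call that occurred in $\pi$ and so is $\leq \oc{\Phi_1}.sp$, invariant (iii) together with $\mathtt{gateret}$ writing no data memory gives $\oc{\Phi_2}.M(\oname{return-address}(\pi)) = \oc{\Phi_1}.M(\oname{return-address}(\pi))$. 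The last three facts are exactly $\mathcal{RA}(\pi, \oc{\Phi_1}.\Psi, \oc{\Phi_2}.\Psi)$.

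I expect the main obstacle to be pinning down invariant (i): it forces one to fix the exact definitions of $\oname{new-frame}$, $\oname{typechecks}$, and the direction of stack growth so that \emph{every} frame created during the gated call --- including those produced by callbacks into the application --- provably sits strictly above $\oc{\Phi_1}.sp$, and hence above all of $\oname{return-address}(\pi)$. Establishing the companion overlay well-bracketing lemma, and organizing the nested-callback case so the induction remains well-founded, are also real work, though largely orthogonal to the confidentiality reasoning handled elsewhere.
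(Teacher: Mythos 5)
Your plan coincides with the paper's own proof, which is only a two-line sketch of exactly this structure: induction over the definition of the well-bracketed step with a nested induction over the logical call stack, with the conclusion read off from the fact that $\oc{\Phi_2} \neq \oerror$, so the monitor's restoration checks ($\oname{csr-restored}$ and the return-address check) must have passed at the closing $\cgateret$. The well-bracketing lemma and the frame/memory invariants you spell out are precisely the content that induction needs, so this is the same approach, just worked out in more detail than the paper records.
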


\begin{theorem}[Overlay Confidentiality Soundness] \label{thm:overlay-confidentiality-soundness}
  If $\oc{\Phi_0} \in \programs$, $\currentcom{\oc{\Phi_1}}{\_}{\untrusted}$,
  $\currentcom{\oc{\Phi_3}}{\_}{\trusted}$, $\oc{\Phi_0}.\Psi \stepstar
  \oc{\Phi_1}.\Psi \steplown{n} \oc{\Phi_2}.\Psi \step \oc{\Phi_3}.\Psi$,
  $\oc{\Phi_1} \ostepn{n + 1} \oc{\Phi_3}$, and $\oc{\Phi_1} =_{\untrusted}
  \oc{\Phi_1'}$,
  then $\oc{\Phi_1'}.\Psi \steplown{n} \oc{\Phi_2'}.\Psi \step \oc{\Phi_3'}.\Psi$,
  $\oc{\Phi_1'} \ostepn{n + 1} \oc{\Phi_3'}$
  $\currentcom{\oc{\Phi_3'}}{\_}{\trusted}$, $\oc{\Phi_3}.pc = \oc{\Phi_3'}.pc$,
  and
  \begin{enumerate*}
  \item $\currentop{\oc{\Phi_2}}{\cgatecall{n'}{e}}$, $\currentop{\oc{\Phi_2'}}{\cgatecall{n'}{e}}$, and $\oc{\Phi_3} =_{\mathtt{call}\ n'} \oc{\Phi_3'}$ or
  \item $\currentop{\oc{\Phi_2}}{\cgateret}$, $\currentop{\oc{\Phi_2'}}{\cgateret}$, and $\oc{\Phi_3} =_{\mathtt{ret}} \oc{\Phi_3'}$.
  \end{enumerate*}
\end{theorem}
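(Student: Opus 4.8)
The plan is to prove \thmref{thm:overlay-confidentiality-soundness} by exhibiting a \emph{lockstep noninterference bisimulation} between the two overlay executions. The intuition is that the monitor's IFC checks force every control\=/flow decision and every memory address used by the library to be derived from $\untrusted$\=/labeled data, so the runs starting from $\oc{\Phi_1}$ and $\oc{\Phi_1'}$ cannot diverge: they visit the same program points, perform the same internal calls and returns, and preserve agreement on all $\untrusted$\=/labeled state. Concretely, I would first fix a low\=/equivalence relation $\oc{\Phi} \approx \oc{\Phi'}$ that bundles: (i) $\oc{\Phi}.\Psi =_{\untrusted} \oc{\Phi'}.\Psi$, identical $pc$, $sp$, $C$, and $\mathit{funcs}$, and an identical labelling of all registers and memory cells; (ii) logical stacks of the same length whose frames agree on $\mathit{base}$, $\mathit{ret\mbox{-}addr}$, and the domain of $\mathit{csr\mbox{-}vals}$ (always $\mathbb{CSR}$), with the recorded values agreeing on the $\untrusted$\=/labeled registers; and (iii) an auxiliary invariant (below) tying the $\trusted$\=/labeled callee\=/save registers to their recorded values. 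The hypothesis $\oc{\Phi_1} =_{\untrusted} \oc{\Phi_1'}$ is exactly the $\Psi$\=/component of this relation; the logical\=/bookkeeping part of $\approx$ is either built into the overlay\=/level reading of $=_{\untrusted}$ or else follows from $\oc{\Phi_1'}$ being itself reachable, so $\oc{\Phi_1} \approx \oc{\Phi_1'}$ holds at the outset.

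The core of the argument is a single\=/step preservation lemma: if $\oc{\Phi} \approx \oc{\Phi'}$, both are running $\untrusted$ code, and $\oc{\Phi} \ostep \oc{\Phi_2}$ with $\oc{\Phi_2} \neq \oerror$, then $\oc{\Phi'} \ostep \oc{\Phi_2'}$ for some $\oc{\Phi_2'} \neq \oerror$ with $\oc{\Phi_2} \approx \oc{\Phi_2'}$. I would prove this by cases on the current command. For \cjmp{k}{e}, \ccall{k}{e}, and the library callback \cgatecall{n}{e} (rules \redOjmp, \redOcall, and its $\mathtt{gatecall}$ analogue), the monitor requires the target expression to carry the label $\untrusted$, so $\approx$ gives the same evaluated target in both runs; the side conditions $\oname{in-same-func}$, $\oname{typechecks}$, and $\oname{new-frame}$ inspect only $sp$, the target, the static function table, and the top frame's $\mathit{ret\mbox{-}addr}$, which all agree, so they succeed or fail identically and record the new frame consistently. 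For \cload{r}{k}{e}, \cstore{k}{e}{e'} (rule \redOstore), $\mathtt{push}$, and $\mathtt{pop}$, the monitor forces the address expression to be $\untrusted$\=/labeled, so the same cell is touched; standard IFC label propagation keeps the labellings synchronised, and the condition $p_{e'} = \trusted \Rightarrow n' \notin H_{\untrusted}$ together with $\oname{writeable}$ guarantees that $H_{\untrusted}$ cells (and the saved return\=/address slot) keep their labels, re\=/establishing $=_{\untrusted}$. The arithmetic and \cmov{r}{e} cases are routine label joins, and determinism of \langname{} then pins $\oc{\Phi_2'}.\Psi$ down uniquely. Iterating this lemma along run~1's trace (induction on $n$) yields $\oc{\Phi_2} \approx \oc{\Phi_2'}$ with run~2 also staying in the library for $n$ steps; the $(n{+}1)$\=/st step executes the same instruction in both runs (by $\approx$), a callback \cgatecall{n'}{e} or a \cgateret, landing in $\trusted$ at the same $pc$ (the targets are $\untrusted$\=/determined). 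Since callback arguments and the return register $r_{ret}$ are themselves forced $\untrusted$\=/labeled and all of $H_{\untrusted}$ stays $\untrusted$\=/labeled, $=_{\untrusted}$ collapses to exactly the $=_{\mathtt{call}\ n'}$ (resp.\ $=_{\mathtt{ret}}$) agreement demanded, and the two step counts match by construction.

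The step I expect to be the genuine obstacle is the \cret{k} case (rule \redOret), and specifically the $\oname{csr-restored}$ check when some callee\=/save register $r \in \mathbb{CSR}$ is $\trusted$\=/labeled. Then $\oc{\SF}.\mathit{csr\mbox{-}vals}(r)$ is a truly secret value that can differ between the two runs, yet $\oname{csr-restored}$ must succeed in run~2 precisely because it does in run~1 --- and it is not \emph{a priori} clear that the library restored $r$ in a way that tracks its own per\=/run entry value rather than some other quantity. This is what clause~(iii) of $\approx$ is for: at every point, for each live frame $\oc{\SF}$ and each $\trusted$\=/labeled $r \in \mathbb{CSR}$, either $r$ already holds $\oc{\SF}.\mathit{csr\mbox{-}vals}(r)$ in \emph{both} runs, or there is a stack slot inside a frame at least as deep as $\oc{\SF}$ --- hence protected by $\oname{writeable}$ from cross\=/frame tampering --- holding precisely that recorded value in each run, and $r$'s only route back to it is a load from such a slot. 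I would show this invariant is preserved step by step; the interesting cases are $\mathtt{push}$, $\mathtt{pop}$, \cstore{k}{e}{e'}, \cload{r}{k}{e}, and \cmov{r}{e} touching a $\mathbb{CSR}$ register or one of these slots, where the argument splits on whether the value being moved is $\untrusted$\=/labeled --- forcing the two runs' ``secrets'' to actually coincide and making the $\oname{csr-restored}$ outcome trivially consistent --- or is a spilled copy, in which case the invariant propagates. Granted this, whenever run~1 passes $\oname{csr-restored}$ at a \cret{k} so does run~2, $\approx$ is re\=/established, and the induction closes. Reconciling the per\=/run, secret callee\=/save values with the single structural obligation the monitor imposes is the subtle heart of the proof; everything else is the routine IFC bookkeeping sketched above.
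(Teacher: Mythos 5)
Your overall strategy\dash---a lockstep unwinding with a low\=/equivalence relation preserved by every monitored step\dash---is exactly what the paper's proof gestures at (the paper's entire argument is the sentence ``Proof is standard for an IFC enforcement system''), and the routine cases you dispatch are handled correctly: control\=/flow targets, store and load addresses, $\oname{args\mbox{-}secure}$, the $r_{ret}$ label at $\cgateret$, and the invariant that $H_{\untrusted}$ stays $\untrusted$\=/labelled all depend only on $\untrusted$ data and so coincide across the two runs. The gap is in the case you yourself call the heart: $\oname{csr\mbox{-}restored}$ at \cret{k} (\redOret). Your auxiliary invariant~(iii) is not something the monitor enforces, and it is not preserved step by step. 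Nothing in the overlay rules restricts how a callee\=/save register regains a value: untrusted code may move \emph{any} $\trusted$\=/labelled value into $r \in \mathbb{CSR}$\dash---another register's lingering application secret, or a secret it previously spilled to its own stack (the store check only forbids $\trusted$ values in $H_{\untrusted}$, not in $S_{\untrusted}$). In that situation neither disjunct of~(iii) holds: $r$ need not hold $\oc{\SF}.\mathit{csr\mbox{-}vals}(r)$ in both runs, there need be no slot anywhere holding the recorded value, and ``$r$'s only route back to it is a load from such a slot'' is simply false. Your case split (``the moved value is $\untrusted$\=/labelled, hence equal across runs'' versus ``it is the spilled copy'') omits precisely this third possibility.

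The omission is not cosmetic, because in the omitted case the outcome of $\oname{csr\mbox{-}restored}$ genuinely depends on secrets. Take a run in which the application's secret in a callee\=/save register $r$ happens to coincide numerically with its secret in some scratch register at the gated call, and a low\=/equivalent $\oc{\Phi_1'}$ in which the two secrets differ; let the library make an internal call whose callee executes $\cmov{r}{r'}$ (with $r'$ the scratch register) and then returns. The first run passes $\oname{csr\mbox{-}restored}$\dash---the compared values are $\val{A}{\trusted}$ against the recorded $\val{A}{\trusted}$, so even label\=/sensitive equality succeeds\dash---while the second run steps to $\oerror$. So your single\=/step preservation lemma (``if run~1 does not error then neither does run~2, and $\approx$ is re\=/established'') fails at exactly the instruction the theorem needs it for, and the induction does not close. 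Since the paper's own proof supplies no detail here, you have at least isolated the one step that is not routine IFC bookkeeping; but as written, the invariant you propose does not discharge it, and a correct argument needs either a genuinely different relation tying the two runs' $\trusted$ callee\=/save data to the recorded $\mathit{csr\mbox{-}vals}$ (one that the rules above actually preserve for adversarial code) or a sharpening of the hypotheses under which such coincidences cannot arise.
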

\section{Instantiating Zero-Cost}
\label{sec:implementation-security}

We describe two isolation systems that securely support zero-cost transitions:
they meet the overlay monitor zero-cost conditions.
The first (\secref{sec:web-assembly-secure}) is an SFI system using
WebAssembly as an IR before compiling to native code using the Lucet
toolchain~\cite{lucet}.
Here we rely on the language-level invariants of Wasm to satisfy our
zero-cost requirements.
To ensure that these invariants are maintained, in
\sectionref{sec:wasm-verifier} we describe a verifier, \verifname, that checks
that compiled binaries meet the zero-cost conditions.
In \sectionref{sec:wasm-proof} we outline our proof that the verifier guarantees
that compiled Wasm can safely elide springboards and trampolines.

The second system, \trsegmentsfi, is our novel SFI system combining
the x86 segmented memory model for memory isolation with several
security-hardening LLVM compiler passes to enforce our zero-cost conditions.
While WebAssembly meets the zero-cost conditions, it imposes additional
restrictions that lead to unrelated slowdowns.
\trsegmentsfi thus serves as a platform for evaluating the potential cost of
enforcing the zero-cost conditions directly as well as a proof-of-concept
SFI implementation designed using the zero-cost framework.

\subsection{WebAssembly}
\label{sec:web-assembly-secure}

WebAssembly (Wasm) is a low-level bytecode with a sound, static type system.
Wasm's abstract state includes global variables 
and heap memory, which are zero-initialized at start-up.
All heap accesses are explicitly bounds checked, 
meaning that compiled Wasm programs inherently 
implement heap isolation.
Beyond this, Wasm programs enjoy several language-level properties, which ensure
compiled binaries satisfying the zero-cost conditions.
We describe these below.

\para{Control flow}
There are no arbitrary jump instructions in Wasm, only structured intra-function
control flow.
Functions may only be entered through a call instruction, and may only be exited
by executing a return instruction.
Functions also have an associated type; direct calls are type-checked at
compile time while indirect calls are subject to a runtime type check.
This ensures that compiled Wasm meets our type-directed forward-edge CFI condition.

\para{Protecting the stack}
A Wasm function's type precisely describes the space required to allocate the
function's stack frame (including spilled registers).
All accesses to local variables and arguments are performed through statically
known offsets from the current stack base.
It is therefore impossible for a Wasm 
operation to access other stack frames or alter the 
saved return address.
This ensures that compiled Wasm meets our local state encapsulation condition,
and, in combination with type-checking function calls, guarantees that Wasm's
control-flow is well-bracketed.
We therefore know that compiled Wasm functions will
always execute the register-saving preamble and, upon
termination, will execute the register-restoring epilogue.
Further, the function body will not alter the values of any registers saved to
the stack, thereby ensuring restoration of callee-save registers.

\para{Confidentiality}
Wasm code may store values into function-local variables or a function-local
``value stack'' similar to that of the Java Virtual Machine~\cite{jvm}.
The Wasm spec requires that compilers initialize function-local variables either
with a function argument or with a default value.
Further, accesses to the Wasm value stack are governed by a coarse-grained
data-flow type system, with explicit annotations at control flow joins.
These are used to check at compile-time that an instruction cannot pop a value
from the stack unless a corresponding value was pushed earlier in the same
function.
This guarantees that local variable and value stack accesses can be compiled to
register accesses or accesses to a statically-known offset in the stack frame.

When executing a compiled Wasm function without heavyweight transitions,
confidential values from prior computations may linger in these spilled registers or
parts of the stack.
However, the above checks ensure that these locations will only be read if they
have been previously overwritten during execution of the same function by a
low-confidentiality Wasm library value.

\subsection{SegmentZero32}
\label{sec:segments-secure}

To demonstrate that zero-cost conditions can be applied outside of highly structured languages such as Wasm, we demonstrate their enforcement in our novel SFI system for C code called \trsegmentsfi.
As we mention in \secref{subsec:overview-zero}, our zero-cost conditions amalgamate a number of individual conditions which separately have well-studied enforcement mechanisms, and so we are able to compose a series of off-the-shelf Clang/LLVM security-hardening passes to form the core of \trsegmentsfi.
The memory bounds checks are performed using the x86 segmented memory model~\cite{intel-manual} (Similar to NaCl~\cite{yee_native_2009}, however we use an additional segment to separate the sandboxed heap and stack).

Since \trsegmentsfi directly enforces the structure required for zero-cost transitions on C code (rather than relying on Wasm as an IR), it allows us to investigate the intrinsic cost of enforcing zero-cost (See Section \ref{subsec:eval-zerocostsfi}), without suffering from irrelevant Wasm overheads.
We additionally compare \trsegmentsfi against NaCl's 32-bit SFI scheme for the
x86 architecture, which we believe is the fastest production-quality SFI
toolchain currently available.
Below we discuss specific details \trsegmentsfi zero-cost condition enforcement.

\para{Protecting the stack}
We apply the SafeStack~\cite{kuznetsov_code-pointer_2014, safestack-llvm} compiler pass to further split the sandboxed stack into a safe and unsafe stack.
The safe stack contains only data that the compiler can statically verify is
always accessed safely, e.g., return addresses, spilled registers, and
allocations that are only accessed locally using verifiably safe offsets within
the function that allocates them.\footnote{We also use LLVM's
stack-heap clash detection (\textsf{-fstack-clash-protection}) to prevent
the stack growing into the heap.}
All other stack values are moved to the heap segment.
This ensures that pointer manipulation of unsafe stack references cannot be used to corrupt the return address and saved context of the current call.
We write a small LLVM pass to add additional support for tracking whether an access must be made through the heap segment or the stack segment, ensuring correct code generation.

These transformations ensure that malicious code cannot programmatically access
anything stored in the stack segment, except through offsets statically
determined to be safe by the SafeStack pass.
This protects the stored callee-save registers and return address, guaranteeing
the restoration of callee-save registers and well-bracketing \emph{iff forward
control flow is enforced}.

\para{Control flow}
Fortunately, enforcing forward-edge CFI has been widely studied~\cite{burow_control-flow_2017}.
We use a CFI pass as implemented in Clang/LLVM~\cite{cfi-llvm,
DBLP:conf/uss/TiceRCCELP14} including flags to dynamically protect indirect
function calls, ensuring forward control flow integrity.
Further, \trsegmentsfi conservatively bans non-local
control flow (e.g. \texttt{setjmp/longjmp}) in the C source code.
A more permissive approach is possible, but we leave this for future work.

\para{Confidentiality}
To guarantee confidentiality we implement a small change in Clang to zero initialize all stack variables.\footnote{We can't use Clang's existing pass
for variable initialization~\cite{stack-var-init-llvm} as it zero initializes data on the unsafe stack leading to poor performance}
This ensures that scratch registers cannot leak secrets as all sandbox values
are semantically written before use.
In practice, many of these writes are statically known to be dead and therefore optimised away.

\section{Verifying compiled WebAssembly}
\label{sec:wasm-verifier}

Instead of trusting the Wasm compiler, we build a \emph{zero-cost verifier},
\verifname{}, to check that the native, compiled output meets the zero-cost
conditions and is thus safe to run without springboards and trampolines.
\verifname{} is a static x86 assembly analyzer that takes as input potentially
untrusted native programs and verifies a series of local properties via abstract
interpretation.
Together these local properties guarantee that the monitor checks defined in
\olangname{} are met; we discuss the proof of soundness in
\sectionref{sec:wasm-proof}.

\verifname{} extends the VeriWasm SFI verifier~\cite{veriwasm}.
Both operate over WebAssembly modules compiled by the Lucet Wasm
compiler~\cite{lucet}, first disassembling the native x86 code before computing
a control-flow graph (CFG) for each function in the binary.
The disassembled code is then lifted to a subset of \langname{}, which serves as
the first abstract domain in our analysis.
Unfortunately, the properties checked by VeriWasm, while sufficient to guarantee
SFI security, are insufficient to guarantee zero-cost security.
Below we will describe how \verifname{} extends VeriWasm to guarantee the
stronger zero-cost conditions are met.

\begin{figure}[t]
\begin{lstlisting}[style=asm, escapeinside=||, morekeywords={bad_func:, good_func:}]
bad_func: [] >> rax                           good_func: [rdi] >> rax
  push r12                                      mov rax <- rdi |\label{line:verifier:sample:store}|
  ; TRACK: stack[0] = initial r12 value         ret
  mov r12 <- 1                                                 |\label{line:verifier:sample:init}|
  ; TRACK: r12 initialized
  mov r11 <- r13 + r12                                         |\label{line:verifier:sample:uninit}|
  ; TRACK: r11 uninitialized
  mov rdi <- 2                                                 |\label{line:verifier:sample:arginit}|
  ; TRACK: rdi initialized
  ; ASSERT: good_func arguments initialized
  call good_func                                               |\label{line:verifier:sample:call}|
  ; TRACK: good_func return value initialized
  pop r12                                                      |\label{line:verifier:sample:restore}|
  ; TRACK: r12 = initial r12 value
  ; ASSERT: callee-save registers restored
  gateret                                                      |\label{line:verifier:sample:ret}|
\end{lstlisting}
\caption{Disassembled and lifted WebAssembly functions}
\label{fig:verifier:sample}
\end{figure}

\para{The \verifname{} analyzers}
\verifname{} adds two new analyses to VeriWasm.
The first extends VeriWasm's CFI analysis, which only captures coarse
grained control-flow (i.e., that all calls target valid sandboxed functions),
to also extract type information.
Extracting type information from the binary code is possible without any complex
type inference because Lucet leaves the type signatures in the compiled output
(though we do not need to trust Lucet to get these type signatures correct 
since \verifname{} would catch any deviations at the binary level).
For direct calls, \verifname{} simply extracts the WebAssembly type stored in the
binary.
For indirect calls we extend the VeriWasm indirect call analysis to track the
type of each indirect call table entry, enabling us to resolve each indirect call to a
statically known type.
These types correspond to the input registers and stack slots, and the output registers
(if any) used by a function.
For example, in \figref{fig:verifier:sample} \Asminline{bad_func} takes no input
and outputs to \Asminline{rax} and \Asminline{good_func} takes \Asminline{rdi}
as input and outputs to \Asminline{rax}.

The second analysis tracks dataflow in local variables, i.e., in registers and stack slots.
Continuing with \Asminline{bad_func} as our example this analysis captures that:
in \coderef{line:verifier:sample:store} stack slot 0 now holds the initial value
of \Asminline{r12}, in \coderef{line:verifier:sample:init} \Asminline{r12}
holds an initialized (and therefore public) value, in
\coderef{line:verifier:sample:uninit} \Asminline{r13} has not been
initialized and therefore potentially contains confidential data so
\Asminline{r11} may also contain confidential data, etc.
This analysis is used to check confidentiality, callee-save register
restoration, local state encapsulation, and is combined with the previous
analysis to check type-directed CFI.

\para{The dataflow abstract domain}
To track local variable dataflow, \verifname{} uses an abstract domain with
three elements: $\vcuninit$ which represents an uninitialized, potentially
confidential value; $\vcinit$ which represents an initialized, public value; and
$\vccallee{r}$ which represents a potentially confidential value which
corresponds to the original value of the callee-save register $r$.
The domain forms a meet-semilattice with $\vcuninit$ the least element and all
other elements incomparable.

From here, analysis is straightforward, with a function's argument registers and
stack slots initialized to $\vcinit$, each callee-save register $r$ initialized
to $\vccallee{r}$, and everything else $\vcuninit$.
Instructions are interpreted as expected, e.g., $\cinst{mov}$ simply copies the
abstract value of its source into the target, operations return the meet of
their operands, and all constants and reads from the heap are treated as
initialized.
Across calls we assume that callee-save register conventions are followed (as we
will be checking this), preserving the value of all callee-save registers and
clearing all other registers' values.
We extract the type information from the extended CFI analysis to
determine the return register that is initialized after a function call.

\para{Checking the zero-cost conditions}
The above two analyses, along with additional information from VeriWasm's existing analyses enable us to check the zero-cost conditions.
\begin{CompactEnumerate}
\item

  \emph{Callee-save register restoration:}
  The $\vccallee{r}$ value enables straightforward checking that callee-save
  registers have been restored by checking that, at each $\cinst{ret}$
  instruction, each callee-save register $r$ has the abstract value
  $\vccallee{r}$.

\item

  \emph{Well-bracketed control-flow:}
  VeriWasm already implements a stack checker that guarantees that all writes to
  the stack are to local variables, ensuring that the saved return address on the stack
  cannot be tampered with.
  Further, it checks that the stack pointer is restored to its original
  location at the end of every function, ensuring the saved return address is used.

\item

  \emph{Type-directed forward-edge CFI:}
  The dataflow analysis gives us the registers that are initialized when we
  reach a \cinst{call} instruction, enabling us to check that the input arguments
  of the target have been initialized.
  For example, when we reach \coderef{line:verifier:sample:call} we know that
  \Asminline{rdi} has the value $\vcinit$.
  The type-based CFI analysis tells us that \Asminline{good_func}
  expects \Asminline{rdi} as an input, so this call is marked as safe.

\item

  \emph{Local state encapsulation:}
  To ensure SFI security, VeriWasm checks that no writes are
  below the current stack frame, ensuring that verified Wasm functions cannot
  tamper with other frames.

\item

  \emph{Confidentiality:}
  We check confidentiality using the information obtained in our dataflow
  analysis, where the value $\vcinit$ ensures that a value is initialized with a
  public, non-confidential value.
  This enables us to check each of the confidentiality checks encoded in
  \olangname{} are met: for instance the type-safe forward-edge CFI check
  described above already ensures each argument is initialized.
  In \figref{fig:verifier:sample}, the confidentiality checker will flag
  \coderef{line:verifier:sample:uninit} as unsafe because \Asminline{r13} still has
  the value $\vccallee{\mathtt{r13}}$, which potentially contains confidential information
  leaked from the application.
\end{CompactEnumerate}

\subsection{Proving Wasm secure}
\label{sec:wasm-proof}
We prove that compiled and verified Wasm libraries can safely elide springboards and
trampolines while maintaining integrity and confidentiality, by showing that the
verified code would not violate the safety monitor.
Formally, this amounts to showing that Wasm code verified by \verifname{} never
reaches an $\oerror$ state.
This allows us to apply \thmref{thm:overlay-integrity-soundness} and
\thmref{thm:overlay-confidentiality-soundness}.
It is relatively straightforward (with one exception) to prove that the abstract
interpretation as described guarantees the necessary safety conditions.

The crucial exception in the soundness proof is when a function calls to other Wasm
functions.
We must inductively assume that the called function is safe, i.e.,
doesn't change any variables in our stack frame, restores callee-save registers,
etc.
Unfortunately, a naive attempt does not lead to an inductively well-founded
argument.
Instead, we use the overlay monitor's notion of a well-behaved function to
define a step-indexed logical relation (detailed in
\iftechreport{\appref{appendix:lr}}{the technical
appendix~\cite{kolosick2021isolation}}) that captures a semantic notion of
well-behaved functions (as a relation $\Frel$), and then lift this to a relation
over an entire Wasm library (as a relation $\Lrel$).
This gives a basis for an inductively well-founded argument where we can prove
that, locally, the abstract interpretation gives that each Wasm function is
semantically well-behaved (is in $\Frel$) and then use this to prove the
standard fundamental theorem of a logical relation for a whole Wasm library:
\begin{theorem}[Fundamental Theorem for Wasm Libraries] \label{thm:wasm-in-lrel}
  For any number of steps $n \in \nats$ and compiled Wasm library $L$,
  $(n, L) \in \Lrel$.
\end{theorem}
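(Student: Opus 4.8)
The plan is to prove the theorem by induction on the step index $n$, relying on the standard feature of step-indexed logical relations that $(n,L)\in\Lrel$ only constrains executions of length at most $n$, and that the guarantee needed about a \emph{callee} of a function can be discharged at a strictly smaller index than the one at which we are reasoning about the caller. Unfolding definitions, $(n,L)\in\Lrel$ reduces to $(n,F)\in\Frel$ for every function $F$ of $L$, where $\Frel$ is the semantic notion of a well-behaved function distilled from the \olangname{} monitor: $F$ only jumps within itself, writes only inside its current stack frame and never over its saved return slot, restores the stack pointer and all callee-save registers before returning to the saved return address, obeys type-directed forward-edge CFI on its outgoing (direct and indirect) calls, and\dash---in the relational reading needed for confidentiality\dash---never stores a $\trusted$-labeled value into the sandbox heap $H_{\untrusted}$, never lets a confidential value drive a pointer or a branch, and keeps any two $=_{\untrusted}$-related runs in lockstep. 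The base case $n=0$ is immediate since no execution of length $0$ witnesses a monitor violation; for the inductive step we assume $(n-1,L)\in\Lrel$ (the ``later'' induction hypothesis) and establish $(n,F)\in\Frel$ for every $F$ of $L$.

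First I would discharge the purely local obligations, which is the ``relatively straightforward'' direction. Fix a function $F$ of a library $L$ accepted by \verifname{} and argue that the abstract states computed by \verifname{} over-approximate the reachable overlay configurations of $F$ and soundly summarize confidentiality via the $\vcuninit/\vcinit/\vccallee{r}$ meet-semilattice. From VeriWasm's existing stack checker, every store satisfies $\oname{writeable}$ and $sp$ is restored to its entry value at each $\cinst{ret}$, which yields local state encapsulation together with the stack-pointer and return-slot components of well-bracketing and of $\mathcal{RA}$. From the extended CFI-and-type analysis, every $\cinst{jmp}$ satisfies $\oname{in-same-func}$ and every call satisfies $\oname{typechecks}$, using the extracted indirect-call-table types to resolve indirect targets to a statically known function type. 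From the dataflow analysis, at every call the argument registers and stack slots demanded by the callee's type carry $\vcinit$ (the initialization side-condition of type-directed CFI), at every $\cinst{ret}$ each callee-save register $r$ carries $\vccallee{r}$ so $\oname{csr-restored}$ holds, and the IFC side-conditions of \redOstore{} and its analogues hold. Running this argument relationally over two $=_{\untrusted}$-related executions, the dataflow facts force agreement on every location marked $\vcinit$ and lockstep control flow, since branch conditions are $\untrusted$-labeled; this is exactly the function-granularity analogue of the hypotheses that feed \thmref{thm:overlay-confidentiality-soundness}.

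The one place this local reasoning does not close, and the main obstacle, is a concrete run of $F$ that descends through a call into another library function $G$ and then returns. In its transfer function for calls, \verifname{} merely \emph{assumes} that $G$ preserves callee-save registers, leaves $F$'s frame and saved return address untouched, returns to that address, and yields a public return value; proving this about $G$ while proving it about $F$ is circular once $F$ and $G$ are mutually recursive, which is why a naive induction on program structure fails. Step-indexing severs the circle: the call consumes at least one \langname{} step, so the nested run inside $G$ lives at a strictly smaller index, and monotonicity of step-indexed relations together with the induction hypothesis $(n-1,L)\in\Lrel$ supplies the needed $(m,G)\in\Frel$ for the appropriate $m\le n-1$. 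The delicate bookkeeping is then threefold: (i) matching what $(m,G)\in\Frel$ delivers against the abstract transfer function for calls, in particular that $G$'s return value is $\untrusted$-labeled so treating the return register as $\vcinit$ is sound, and that clearing scratch registers across the call is sound; (ii) for an indirect call, combining the call-table type extracted by the verifier with the fact that every reachable table entry points to some function of $L$, so the $\Frel$ fact applies to that function at index $m$; and (iii) for confidentiality, showing the relational guarantee composes\dash---the two runs re-enter $F$ at equal program counters with $=_{\untrusted}$-related states and equal remaining step budgets, so the lockstep invariant is re-established after the call. Combining the local argument with this call case gives $(n,F)\in\Frel$ for every $F$, hence $(n,L)\in\Lrel$, and the induction on $n$ concludes. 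I expect step (iii)\dash---threading the integrity guarantee about $G$ (its $\mathcal{RA}$-style well-bracketing) and the relational lockstep invariant through the call with the step budget correctly accounted for\dash---to be the subtlest part.
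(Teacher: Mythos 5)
Your overall strategy matches the paper's in spirit: discharge the local monitor obligations (\,$\oname{in\mbox{-}same\mbox{-}func}$ jumps, $\oname{writeable}$ stores, $\oname{typechecks}$ at calls, $\oname{csr\mbox{-}restored}$ and the return-slot check at $\cinst{ret}$, the IFC label side conditions\,) from \verifname's abstract interpretation, and break the caller/callee circularity with step-indexing. The organization differs, though, in a way that matters. The paper does \emph{not} induct on $n$ at the top level: its $\Frel$ is world-indexed, and the hypothesis $(C,\overline{\oc{F_i}},\overline{\oc{F_l}}) :_W$ already assumes that every linked function\dash---both the other \emph{library} functions and the \emph{application} interface functions $\pi_2(W)$\dash---is in $\Frel$ at $\later W$; the FTLR for a single function then simply instantiates that hypothesis when a call or callback is reached, and the library-level theorem follows by unrolling $\Lrel$. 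Your version instead makes $\Frel$ an unconditional property of the function and recovers callee facts from the outer induction hypothesis $(n-1,L)\in\Lrel$. That is a standard and workable alternative for calls to other functions \emph{of $L$}, but it leaves a genuine gap: when the library function performs a $\cgatecall{m}{e}$ callback into the application, your induction hypothesis says nothing, because $(n-1,L)\in\Lrel$ only speaks about library code. The paper's proof has an explicit case for exactly this (calls whose target is one of the interface functions), discharged from the world hypothesis at $\later W$; to repair your argument you must either parameterize your relation by assumed-well-behaved application entry points (which is precisely what the paper's world does) or add an explicit hypothesis about application behavior during callbacks\dash---you cannot get it from $L$ alone.

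A second, smaller divergence: you fold a relational, lockstep noninterference guarantee into $\Frel$ itself. The paper's $\Frel$ is unary\dash---it only asserts that no execution within the step budget reaches $\oerror$ and that the function returns with the logical stack restored\dash---and confidentiality is then obtained downstream by adequacy plus \thmref{thm:overlay-confidentiality-soundness}, since the monitor's IFC checks never fire. Your relational strengthening is not wrong, but it is unnecessary for the stated theorem and would substantially complicate the call case (you would have to prove a binary fundamental lemma), so you are proving more than $(n,L)\in\Lrel$ requires. The local portion of your argument (stack checker $\Rightarrow$ $\oname{writeable}$ and stack-pointer restoration, type-extraction $\Rightarrow$ $\oname{typechecks}$, the $\vcinit/\vccallee{r}$ lattice $\Rightarrow$ $\oname{csr\mbox{-}restored}$ and the initialization side conditions) tracks the paper's FTLR-for-functions proof closely and is fine at this level of detail.
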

\noindent
This theorem states that every function in a compiled Wasm library,
when making calls to other Wasm functions or application callbacks, is
well-behaved with respect to the zero-cost conditions.
The number of steps is a technical detail related to step-indexing.
Zero-cost security then follows by adequacy of the logical relation,
\thmref{thm:overlay-integrity-soundness}, and
\thmref{thm:overlay-confidentiality-soundness}:
\begin{theorem}[Adequacy of Wasm Logical Relation] \label{thm:lrel-adequacy}
  For any number of steps $n \in \nats$, library $L$ such that $(n, L) \in
  \Lrel$, program $\oc{\Phi_0} \in \programs$ using $L$, and $n' \leq n$, if
  $\oc{\Phi_0} \ostepn{n'} \oc{\Phi'}$ then $\oc{\Phi'} \neq \oerror$.
\end{theorem}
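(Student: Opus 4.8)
The plan is to prove adequacy by a single induction on how far the program has run, carried by a strengthened invariant that ties each reachable overlay configuration to a stack of pending obligations drawn from $\Lrel$. Concretely, I would define a predicate $\oname{consistent}(m, \oc{\Phi})$ asserting that $\oc{\Phi} \neq \oerror$ and that (i) the overlay logical stack $\oc{\Phi}.\mathit{stack}$ is well-formed and faithfully abstracts the underlying \langname{} state (each frame's recorded base, return-address slot, and callee-save snapshot match $\oc{\Phi}.\Psi$), (ii) the IFC labels and local-variable status recorded by the overlay soundly over-approximate the dataflow facts the logical relation relies on, and (iii) for the currently-executing library frame, if any, there is a residual $\Frel$ obligation at index $\geq m$ whose precondition is met by $\oc{\Phi}$, while every suspended library frame lower on the stack carries an $\Frel$ obligation that will be re-armed on return. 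Adequacy then reduces to two lemmas: $\oc{\Phi_0} \in \programs$ using $L$ implies $\oname{consistent}(n, \oc{\Phi_0})$ — immediate, since the overlay stack is empty (or the single coarse application frame), so (iii) is vacuous and the hypothesis $(n, L) \in \Lrel$ supplies whatever is needed — and the one-step preservation $\oname{consistent}(m, \oc{\Phi}) \wedge \oc{\Phi} \ostep \oc{\Phi'} \Longrightarrow \oname{consistent}(m - 1, \oc{\Phi'})$. From these, induction on $j$ gives $\oc{\Phi_0} \ostepn{j} \oc{\Phi} \wedge j \leq n \Longrightarrow \oname{consistent}(n - j, \oc{\Phi})$, and in particular $\oc{\Phi} \neq \oerror$, which is exactly the statement for $j = n'$.

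The one-step preservation lemma is a case analysis on $\currentcom{\oc{\Phi}}{c}{p}$. If $p = \trusted$, the step is a lifted \langname{} step with no monitor check that can fire (application code is trusted), so $\oc{\Phi}' \neq \oerror$ and consistency is preserved mechanically — the single coarse application ``frame'' absorbs any application-internal call or return. If $p = \untrusted$ and $c$ is neither a call nor a return, I unfold the active function's $\Frel$ obligation by one step: the logical relation is built so that a well-behaved function never trips $\oname{writeable}$, $\oname{in-same-func}$, or the IFC side-conditions of \redOstore{} and friends, giving $\oc{\Phi}' \neq \oerror$, and the obligation decremented by one still witnesses (iii). If $c$ is $\ccall{k}{e}$ or $\cgatecall{n}{e}$, I push a fresh frame: $\oname{typechecks}(\oc{\Phi}, n', sp')$ holds because the caller's $\Frel$ obligation guarantees the argument slots are initialized and in range, and the callee's obligation is obtained by instantiating the hypothesis $(n, L) \in \Lrel$, downward-closed to index $m - 1$, at the newly built frame $\oname{new-frame}(\oc{\Phi}, n', sp')$. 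If $c$ is $\cret{k}$ or $\cgateret$, the $\Frel$ postcondition for the returning frame delivers exactly $\oname{csr-restored}(\oc{\Phi})$ and $\oname{is-ret-addr}(\oc{\Phi}, \oc{\Phi}.sp)$ (hence $\oc{\Phi}' \neq \oerror$), and simultaneously re-establishes the precondition of the caller's suspended obligation, which is popped back into active position.

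I expect the main obstacle to be the step-index bookkeeping around callbacks, i.e., making clause (iii) and the downward-closure steps line up. When a library function makes an application callback via $\cgatecall$ and the application later re-enters the library, the suspended caller's $\Frel$ obligation must be re-armed at a strictly smaller index, and the new library activation's obligation must be drawn from $\Lrel$ at that smaller index; one must check the budget never underflows before the original call returns, which is where the quantification over all $n' \leq n$ and the $\later$/guardedness structure of $\Lrel$ do their work. A secondary, more routine but fiddly obstacle is showing that the overlay logical stack and IFC labelling remain a faithful abstraction of the concrete state under arbitrary trusted application behaviour between callbacks — this is precisely why the overlay keeps one coarse frame for all application activations rather than tracking them precisely, and the invariant must be stated so that this coarsening is stable under \redOcall{}, \redOret{}, and ordinary application steps. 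With these in place, the outer induction on $j$ discharges the theorem.
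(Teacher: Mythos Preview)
Your plan is a sound general strategy, but it takes a much longer route than the paper and does not quite match how $\Frel$ is actually set up. The paper's $\Frel$ is not a Kripke-style expression relation that one ``unfolds by one step''; its conclusion already \emph{is} the multi-step safety statement: from a function-entry configuration $\oc{\Phi}$ matching the precondition, for every $n' \leq W.n$ the reduction $\oc{\Phi} \osteprhon{\rho'}{n'} \oc{\Phi'}$ yields $\oc{\Phi'} \neq \oerror$ (or the function returns safely within $n'$ steps). Adequacy is therefore essentially immediate: application steps are safe by the assumption on $\programs$, and at each $\mathtt{gatecall}$ from the application into the library one instantiates $\Lrel$ (hence $\Frel$) at the resulting entry state, which directly delivers ``no $\oerror$ for the remaining budget''. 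Nested library calls and callbacks are already absorbed by $\Frel$ via the $(C, \overline{\oc{F_i}}, \overline{\oc{F_l}}) :_W$ hypothesis, and the work of discharging that hypothesis is done once in the FTLR, not here. No one-step preservation lemma, stack of suspended obligations, or re-arming of callers is needed for adequacy.

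This also explains why your clause~(iii) does not quite typecheck as written: the precondition of $\Frel$ pins $pc = \oc{F}.\mathit{entry}$ and constructs the top frame from scratch, so a mid-function state cannot itself ``meet the precondition'' of a residual $\Frel$ obligation. You could repair it by recording ``the entry state of the current activation was in $\Frel$ at index $\geq m + k$ and we have taken $k$ steps since'', but then the invariant is just the conclusion of $\Frel$ restated piecewise, and your one-step preservation collapses to the observation that $k \mapsto k+1$ stays within budget. The step-index bookkeeping around callbacks that you flag as the main obstacle is genuine work, but in the paper's decomposition it lives in the FTLR proof; the adequacy proof itself is the one-liner ``by assumption for application steps, and by unrolling $\Lrel$ and $\Frel$''.
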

\noindent
Details of the logical relation and proofs are in
\iftechreport{\appref{appendix:webassembly}}{the technical
appendix~\cite{kolosick2021isolation}}.

\section{Evaluation}
\label{sec:eval}
We evaluate our zero-cost model by asking four questions:
\begin{CompactItemize}
\item \textbf{Q1}: What is the cost of a context switch? (\secref{subsec:eval-transitions})
\item \textbf{Q2}: What is end-to-end performance gain of Wasm-based SFI due to zero-cost transitions? (\secref{subsec:eval-wasm})
\item \textbf{Q3}: What is the performance overhead of purpose-built zero-cost SFI enforcement? (\secref{subsec:eval-zerocostsfi})
\item \textbf{Q4}: Is the \verifname verifier effective? (\secref{subsec:verifier-eval})
\end{CompactItemize}

Since our zero-cost condition enforcement does incur some runtime overhead, \textbf{Q2} and \textbf{Q3} are heavily workload-dependent.
The benefit a workload receives from the zero-cost approach will be in direct proportion to the frequency with which it performs domain transitions.

\para{Systems}
To investigate the first three questions, we consider two groups of SFI systems.
The first group compares a number of different transition models for Wasm-based SFI for 64-bit binaries, built on top of the Lucet compiler~\cite{lucet}.
All of these will have identical runtime overhead, meaning that the only variance between them will be due to transition overhead.
The \trlucet build uses the original heavyweight springboards and trampolines
shipped with the Lucet runtime written in Rust.
\trfullswitch adopts techniques from NaCl and uses optimized
assembly to save and restore application context during transitions.
\trfast implements our zero-cost transition system, meaning transitions are
simple function calls.
To understand the overhead of register saving/restoring and stack
switching, we also evaluate a \trregsave build which saves/restores registers
like \trfullswitch, but shares the library and application stack like
\trfast.

The second group compares optimized SFI techniques for 32-bit binaries.
Wasm-based SFI imposes overheads far beyond what is strictly necessary to
enforce our zero-cost conditions, both because of the immaturity of the Lucet
compiler in comparison to more established compilers such as Clang, and because
Wasm inherently enforces additional restrictions on compiled code (e.g.,
structured intra-function control flow).
We design \trsegmentsfi~(\secref{sec:segments-secure}) to enforce only our zero-cost-conditions and nothing more, aiming to benchmark it against the Native Client 32-bit isolation scheme (\trnacl)~\cite{yee_native_2009}, arguably the fastest production SFI system available, which requires heavyweight transitions.
Both systems make use of memory segmentation, a 32-bit x86-only feature for fast memory isolation.
Unfortunately, we cannot make a uniform comparison between \trnacl, \trsegmentsfi, and \trfast since Lucet only supports a 64-bit target.

Each group additionally uses unsandboxed, insecure native execution (\texttt{Vanilla}) as a baseline.
To represent the best possible performance of schemes relying on heavyweight
transitions, we also benchmark \tridealheavy and \tridealheavysixfour,
ideal hardware isolation schemes, which incur no runtime
overhead but require heavyweight transitions.
To simulate the performance of these ideal schemes, we simply measure the performance
of native code with heavyweight trampolines.

We integrate all of the above SFI schemes into Firefox using the RLBox
framework~\cite{rlbox}.
Since RLBox already provides plugins for the \trlucet and \trnacl builds, we
only implement the plugins for the remaining system builds.

\para{Benchmarks}
We use a micro-benchmark to evaluate the cost of a single transition for our
different transition models, using unsandboxed native calls as a baseline
(\textbf{Q1}).

We answer questions \textbf{Q2}--\textbf{Q3} by measuring the end-to-end
performance of font and image rendering in Firefox, using a sandboxed
\libgraphite and \libjpeg, respectively.
We use these libraries because they have many cross-sandbox transitions, which
\citet{rlbox} previously observed to affect the overall browser performance.
To evaluate the performance of \libgraphite, we use Kew's
benchmark\footnote{Available at
\url{https://jfkthame.github.io/test/udhr_urd.html}}, which reflows the text on
a page ten times, adjusting the size of the fonts each time to negate the
effects of font caches.
When calling \libgraphite, Firefox makes a number of calls into the sandbox
roughly proportional to the number of glyphs on the page.
We run this benchmark 100 times and report the median execution
time below (all values have standard deviations within 1\%).

To evaluate the performance of \libjpeg, we measure the overhead of rendering
images of varying complexity and size.
Since the work done by the sandboxed \libjpeg, per call, is proportional to the
width of the image\dash---Firefox executes the library in \emph{streaming
mode}, one row at a time\dash---we consider images of different widths,
keeping the image height fixed.
This allows us to understand the benefits and limitations of zero-cost
transitions, since the proportion of execution time spent context-switching decreases
as the image width increases.
We do this for three images, of varying complexity: a simple image consisting
of a single color (\simplejpeg), a stock image from the Image Compression
benchmark suite\footnote{Online:
\url{https://imagecompression.info/test_images/}.  Visited Dec 9, 2020.}
(\stockjpeg), and an image of random pixels (\randomjpeg).
We render each image 500 times and report the median time (standard
deviations are all under 1\%).

Finally, we use \SPECOhSix to partly evaluate the sandboxing overhead of our
purpose-built \trsegmentsfi SFI system (\textbf{Q3}), and to measure
\verifname's verification speed (\textbf{Q4}).

\para{Machine and software setup}
We run all but the verification benchmarks on an \Intel
Core\textsuperscript{TM} i7-6700K machine with four 4GHz cores, 64GB RAM,
running Ubuntu 20.04.1 LTS (kernel version 5.4.0-58).
We run benchmarks with a shielded isolated cpuset~\cite{cpu-shielding}
consisting of one core with hyperthreading disabled and the clock frequency
pinned to 2.2GHz.
We generate Wasm sandboxed code in two steps: First, we compile C/\C++
to Wasm using Clang-11, and then compile Wasm to native code using the 
fork of the Lucet used by RLBox (snapshot from Dec 9, 2020).
We generate NaCl sandboxed code using a modified version of Clang-4.
We compile all other C/\C++ source code, including \trsegmentsfi sandboxed code and
benchmarks using Clang-11.
We implement our Firefox benchmarks on top of Firefox Nightly (from August 22,
2020).

\para{Summary of results}
We find that the performance of Wasm-based isolation
can be significantly improved by adopting zero-cost transitions, but that
Lucet-compiled WebAssembly's runtime overhead means that it does not outperform
more optimised isolation schemes in end-to-end tests.
The low performance overhead of \trsegmentsfi demonstrates that these runtime
overheads are not inherent to the zero-cost approach, and that an optimised
zero-cost SFI system can significantly outperform more traditional schemes,
especially for workloads with a large number of transitions.
Finally, we find that we can efficiently check zero-cost conditions at the
binary level, for Lucet compiled code, with no false positives.


\begin{figure}
\footnotesize

\begin{tabular}{p{2.5cm}|cccc}
    \toprule
    \textbf{Build}
  & \textbf{Direct call}
  & \textbf{Indirect call}
  & \textbf{Callback}
  & \textbf{Syscall}
  \\
  \toprule
  \trnative (in C) &
  1ns & 56ns & 56ns & 24ns
  \\
  \trlucet &
  --- & 1137ns & --- & ---
  \\
  \trfullswitch &
  120ns & 209ns & 172ns & 192ns
  \\
  \trregsave &
  120ns & 210ns & 172ns & 192ns
  \\
  \textbf{\trfast} &
  \bf 7ns & \bf 66ns & \bf 67ns & \bf 60ns
  \\
  \midrule
  \trnative (in C, 32-bit) &
  1ns & 74ns & 74ns & 37ns
  \\
  \trnacl &
  --- & 714ns & 373ns & 356ns
  \\
  \textbf{\trsegmentsfi} &
  \bf 24ns & \bf 108ns & \bf 80ns & \bf 88ns
  \\
  \bottomrule
\end{tabular}

\caption{
Costs of transitions in different isolation models.
Zero-cost transitions are shown in \textbf{boldface}.
\trnative is the performance of an unsandboxed C function call, to serve as a baseline.
}
\label{fig:transition-overheads}
\end{figure}


\subsection{The cost of transitions}
\label{subsec:eval-transitions}
We measure the cost of different cross-domain transitions\dash---direct and
indirect calls into the sandbox, callbacks from the sandbox, and syscall
invocations from the sandbox\dash---for the different system builds
described above.
To expose overheads fully, we choose extremely fast payloads---either a
function that just adds two numbers or the \gettimeofday syscall,
which relies on Linux's vDSO to avoid CPU ring changes.
The results are shown in \figref{fig:transition-overheads}.
All numbers are averages of one million repetitions, and repeated runs have
negligible standard deviation.\footnote{
Lucet and NaCl don't support direct sandbox
calls; Lucet further does not support custom callbacks or syscall invocations.
}

We make several observations.
First, among Wasm-based SFI schemes, zero-cost transitions (\trfast) are
significantly faster than even optimized heavyweight transitions
(\trfullswitch).
Lucet's existing indirect calls written in Rust (\trlucet) are significantly
slower than both.
Second, the cost of stack switching (the difference of \trfullswitch and
\trregsave) is surprisingly negligible.
Third, the performance of \trnative and \trfast should be identical but is not.
This is \emph{not} because our transitions have a hidden cost. Rather, it's
because we are comparing code produced by two different compilers:
\trnative is native code produced by Clang,  while \trfast is code produced by
Lucet, and Lucet's code generation is not yet highly
optimized~\cite{cranelift-speedup}.
For example, in the benchmark that adds two numbers, Clang eliminates
the function prologue and epilogue that save and restore the frame
pointer, while Lucet does not.
We observe similar trends for hardware-based isolation.
For example, we find that \trsegmentsfi transitions are much faster than
\tridealheavy and \trnacl transitions and only \tranSegzeroNativeFuncDiff
slower than \trnative for direct calls.
Finally, we observe that \trsegmentsfi is slower than \trfast: hardware
isolation schemes like \trsegmentsfi and \trnacl execute instructions to enable
or disable the hardware based memory isolation in their transitions.

\subsection{End-to-end performance improvements of zero-cost transitions for Wasm}
\label{subsec:eval-wasm}


\begin{figure*}
  
  \begin{subfigure}{0.32\textwidth}
    \includegraphics[width=4.5cm]{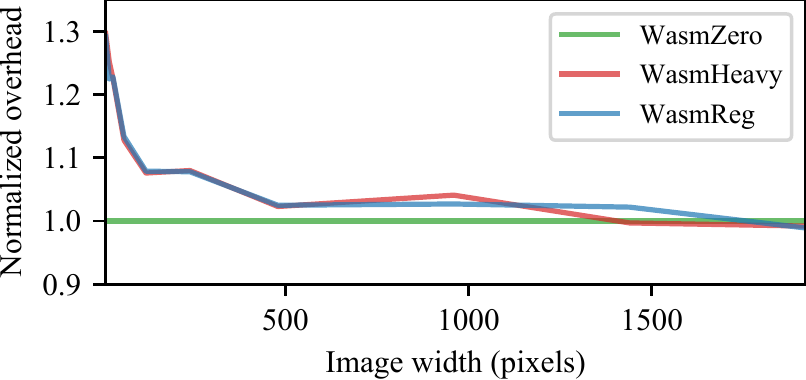}
    \caption{\simplejpeg}
    \label{fig:jpeg-simpleimg}
  \end{subfigure}
  \begin{subfigure}{0.32\textwidth}
    \includegraphics[width=4.5cm]{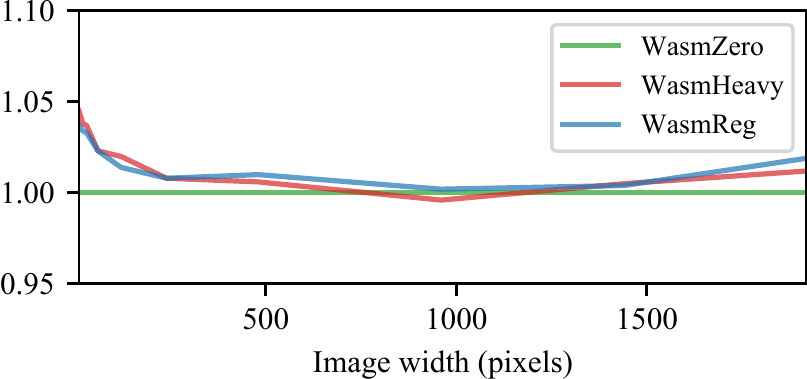}
    \caption{\stockjpeg}
    \label{fig:jpeg-stockimg}
  \end{subfigure}
  \begin{subfigure}{0.32\textwidth}
    \includegraphics[width=4.5cm]{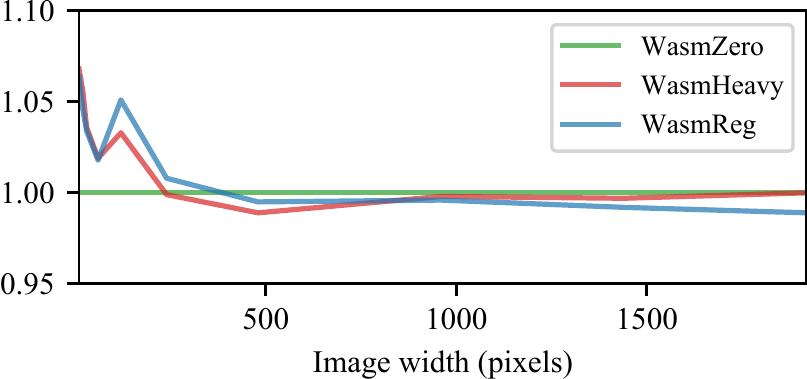}
    \caption{\randomjpeg}
    \label{fig:jpeg-randomimg}
  \end{subfigure}
  
  \caption{
    Performance of different Wasm transitions on rendering of (a)~a simple
    image with one color, (b)~a stock image, and (c)~a complex image with
    random pixels, normalized to \trfast.
    \trfast transitions outperform other transitions. The difference
    diminishes with width, but narrower images are more common on
    the web.
  }
  \label{fig:jpeg-img}
\end{figure*}


We evaluate the end-to-end performance impact of the different transition
models on Wasm-sandboxed font and image rendering as used in Firefox (see
\secref{sec:eval}).

\para{Font rendering}
We report the performance of \libgraphite isolated with Wasm-based schemes on
Kew's benchmark below:
 
\begin{center}
\footnotesize

\begin{tabular}{p{1.65cm}|p{1.3cm}p{1.4cm}p{1cm}p{1.4cm}p{1cm}p{1.6cm}}
      \toprule
    & \textbf{\trlucet}
    & \textbf{\trfullswitch}
    & \textbf{\trregsave}
    & \textbf{\trfast}
    & \textbf{\trnative}
    & \textbf{\tridealheavysixfour}
    \\
\toprule
    \textbf{Font render}
    & 8173ms & 2246ms & 2230ms & 2032ms & 1116ms & 1563ms  \\
\bottomrule
\end{tabular}
\end{center}

\noindent
As expected, Wasm with zero-cost transitions (\trfast) outperforms the
other Wasm-based SFI transition models.
Compared to \trfast, Lucet's existing transitions slow down rendering 
by over \ffMaxFontSlowdownWasmLucetZero.\footnote{
  This overhead is smaller than the 8$\times$ overhead reported by
  \citet{rlbox}; we attribute this difference to the different compilers\dash---we
  use a more recent, and faster, version of Lucet.
}
But, even the optimized heavyweight transitions (\trfullswitch) impose a
\ffMaxFontSlowdownWasmHeavyZero performance tax.
This overhead is due to register saving/restoring; stack switching
only accounts for \ffMaxFontStackSwitchWasmOverhead overhead.

While these results show that existing Wasm-based isolation schemes can benefit
from switching to zero-cost transitions\dash---and indeed the speed-up due to
zero-cost transitions allowed Mozilla to ship the Wasm-sandboxed
\libgraphite\dash---they also show that Lucet-compiled Wasm is slow
($\sim$80$\%$ slower than Vanilla).
This, unfortunately, means that the transition cost savings alone are not
enough to beat \tridealheavysixfour, even for a workload with many transitions.
To compete with this ideal SFI scheme with heavyweight transitions, we would
need to reduce the runtime overhead to $\sim$40$\%$.
\citet{not-so-fast} report the average runtime overhead of Mozilla SpiderMonkey JIT-compiled WebAssembly compared to native as $\sim$45$\%$ in a different set of benchmarks, while noting many correctable inefficiencies in the generated assembly code, suggesting that there is a lot of room for Lucet to be further optimised.

\para{Image rendering}
\figref{fig:jpeg-img} report the overhead of Wasm-based sandboxing on
image rendering, normalized to \trfast to highlight the relative overheads
of different transitions as compared to our zero-cost transitions.
We report results of \trlucet separately, in \iftechreport{\appref{appendix:img}
(\figref{fig:jpeg-img-lucet})}{the technical
appendix~\cite{kolosick2021isolation}} because the rendering times are up to
\ffMaxImgSlowdownWasmLucetZero longer than the other builds.
Here, we instead focus on evaluating the overheads of optimized
heavy transitions.

As expected, \trfast significantly outperforms other transitions when images 
are narrower and simpler.
On \simplejpeg, \trfullswitch and \trlucet can take as much as 
\ffMaxImgSimpleSlowdownWasmHeavyZero and \ffMaxImgSimpleSlowdownWasmLucetZero
longer to render the image as with \trfast transitions.
However, this performance gap diminishes as image width increases (and the
relative cost of context switching decreases).
For \stockjpeg and \randomjpeg, the \trfullswitch trends are similar, but
the rendering time differences start at about 
\ffMaxImgStockRandomSlowdownWasmHeavyZero.
Lucet's existing transitions (\trlucet) are still significantly slower 
than zero-cost transitions (\trfast) even on wide images.


\begin{figure}
  \includegraphics[width=6.5cm]{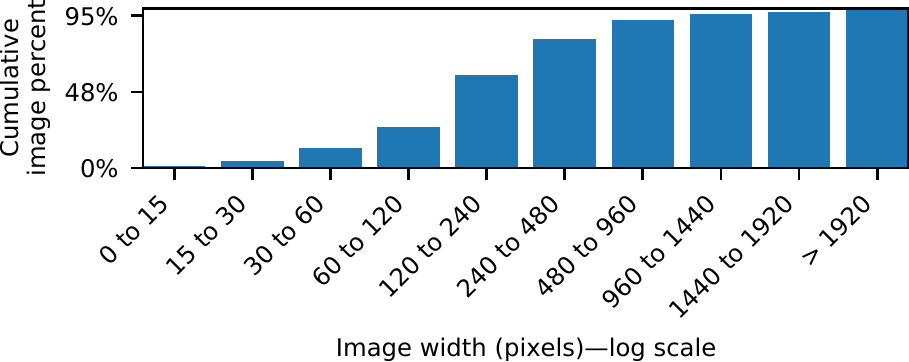}
  \caption{
    Cumulative distribution of image widths on the landing pages of the Alexa 
    top 500 websites.
    Over 80\% of the images have widths under 480 pixels.
    Narrower images have a higher transition rate, and thus higher relative
    overheads when using expensive transitions.
  }
  \label{fig:image-sizes}
\end{figure}


Though the differences between the transitions are smaller as the image width
increases, many images on the Web are narrow.
\figref{fig:image-sizes} shows the distribution of images on the landing
pages of the Alexa top 500 websites. Of the 10.6K images, 8.6K (over 80\%) have
widths between 1 and 480 pixels, a range in which zero-cost transitions
noticeably outperform the other kinds of transitions.

Like font rendering, we measure the target runtime overhead Lucet should
achieve to beat \tridealheavysixfour end-to-end for rendering images.
We report our results in \iftechreport{\figref{fig:jpeg-img-ideal64} in
\appref{appendix:img}}{the technical appendix~\cite{kolosick2021isolation}}.
For the small simple image, we observe this to be 94\%\dash---this is approximately the overhead of Lucet that we see already today.
For the small stock image, we observe this to be 15\%\dash---this is much smaller than the overhead of Lucet today, but lower overheads have been demonstrated on some benchmarks by the prototype Wasm compiler of~\citet{sledge}.

\subsection{Performance overhead of purpose-built zero-cost SFI enforcement}

\label{subsec:eval-zerocostsfi}

\begin{figure*}
  
  \begin{subfigure}{0.32\textwidth}
    \includegraphics[width=4.5cm]{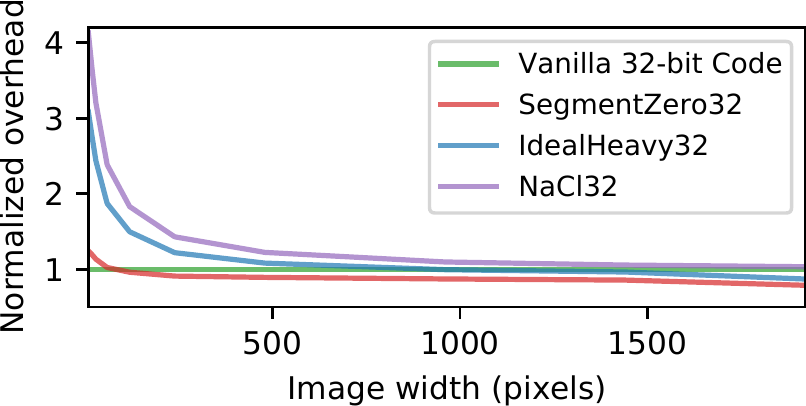}
    \caption{\simplejpeg}
    \label{fig:jpeg-simpleimg-hw}
  \end{subfigure}
  \begin{subfigure}{0.32\textwidth}
    \includegraphics[width=4.5cm]{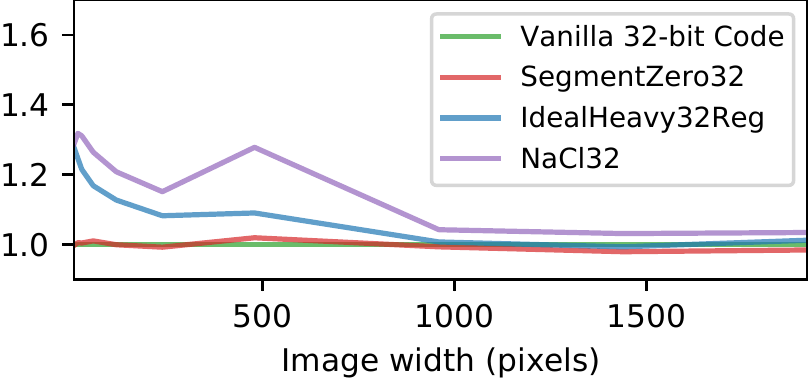}
    \caption{\stockjpeg}
    \label{fig:jpeg-stockimg-hw}
  \end{subfigure}
  \begin{subfigure}{0.32\textwidth}
    \includegraphics[width=4.5cm]{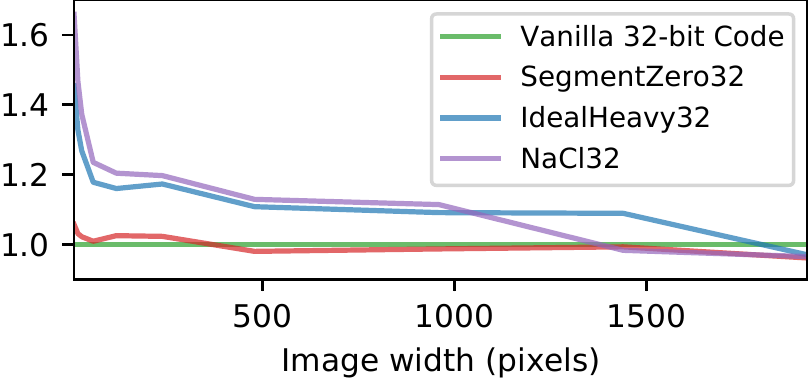}
    \caption{\randomjpeg}
    \label{fig:jpeg-randomimg-hw}
  \end{subfigure}
  
  \caption{
    Performance of image rendering with libjpeg sandboxed with
    \trsegmentsfi and \trnacl and \tridealheavy.
    Times are relative to unsandboxed code.
    \trnacl and \tridealheavy relative overheads are as high as 312\% and 208\% 
    respectively, while \trsegmentsfi relative overheads do not exceed 24\%.  }
  \label{fig:jpeg-img-hw}
  
\end{figure*}


In this section, we measure the performance overhead of \trsegmentsfi with
zero-cost transitions.
We compare \trsegmentsfi with NaCl (\trnacl) and \tridealheavy --- a
hypothetical SFI scheme with no isolation enforcement overhead, both of which
rely on heavyweight transitions.
We measure the overhead of these systems on the standard \SPECOhSix benchmark
suite, and the \libgraphite and \libjpeg font and image rendering benchmarks.
Since both \trsegmentsfi and \trnacl use segmentation which is supported only 
in 32-bit mode, we implement these three isolation builds in 32-bit mode and 
compare it to native 32-bit unsandboxed code.
We describe these benchmarks next.

\para{SPEC}
We report the impact of sandboxing on \SPECOhSix in \figref{fig:spec}.
Several benchmarks are not compatible with \trnacl; augmenting \trnacl runtime
and libraries to fix such compatibility issues (e.g., as done
in~\cite{yee_native_2009} for SPEC2000) is beyond the scope of this paper.
The \texttt{gcc} benchmark, on the other hand, is not compatible with
\trsegmentsfi ---
\texttt{gcc} fails (at runtime) because the CFI used by
\trsegmentsfi\dash---Clang's CFI\dash---incorrectly computes a target CFI label.
Clang's CFI implementation is more precise than necessary for our zero-cost
conditions; as with \trnacl, we leave the implementation of a coarse-grain and
more permissive CFI to future work.
On the overlapping benchmarks, \trsegmentsfi's overhead is comparable to
\trnacl's.

\begin{figure*}
\vspace{-1em}
\footnotesize
\begin{center}
\begin{adjustbox}{width=1\textwidth}
\begin{tabular}{c|c|c|c|c|c|c|c|c|c|c|c|c}
  \toprule
  \textbf{\texttt{System}}
  & \texttt{400.perlbench}
  & \texttt{401.bzip2}
  & \texttt{403.gcc}
  & \texttt{429.mcf}
  & \texttt{445.gobmk}
  & \texttt{456.hmmer}
  & \texttt{458.sjeng}
  & \texttt{462.libquantum}
  & \texttt{464.h264ref}
  & \texttt{471.omnetpp}
  & \texttt{473.astar}
  & \texttt{483.xalancbmk}
\\
\toprule
\textbf{\trnacl} & --- & --- & 1.10$\times$ & --- & 1.27$\times$ & 0.97$\times$ & 1.20$\times$ & 1.06$\times$ & 1.34$\times$ & 1.06$\times$ & 1.31$\times$ & --- \\
\textbf{\trsegmentsfi} & 1.20$\times$ & 1.08$\times$ & --- & 1.04$\times$ & 1.25$\times$ & 0.82$\times$ & 1.16$\times$ & 1.02$\times$ & 1.01$\times$ & 1.01$\times$ & 1.10$\times$ & 1.05$\times$ \\
\bottomrule
\end{tabular}
\end{adjustbox}
\end{center}

\caption{Overheads compared to native code on \SPECOhSix (nc), for \trnacl and
\trsegmentsfi.}
\label{fig:spec}
\end{figure*}

\para{Font rendering}
The impact of these isolation schemes on font rendering is
shown below:

\begin{center}
\footnotesize

\begin{tabular}{p{1.65cm}|cccc}
      \toprule
    & \textbf{\texttt{\trnative (32-bit)}}
    & \textbf{\tridealheavy}
    & \textbf{\trnacl}
    & \textbf{\trsegmentsfi}
    \\
\toprule
    \textbf{Font render}
    & 1441ms & 2399ms & 2769ms & 1765ms \\
\bottomrule
\end{tabular}
\end{center}

\noindent
We observe that \trnacl and \tridealheavy impose an overhead of 
\ffMaxFontOverheadNaClNative and \ffMaxFontOverheadIdealNative 
respectively.
In contrast, \trsegmentsfi has a smaller overhead
(\ffMaxFontOverheadSegzeroNative) as it does not have to save and restore
registers or switch stacks.
We attribute the overhead of \trsegmentsfi (over Vanilla) to three
factors: (1) changing segments to enable/disable isolation 
during function calls, (2) using indirect function calls for cross-domain calls
(a choice that simplifies engineering but is not fundamental), and (3)
the structure imposed by our zero-cost condition enforcement.

\para{Image rendering}
We report the impact of sandboxing on image rendering in
\figref{fig:jpeg-img-hw}.
For narrow images (10 pixel width), \trsegmentsfi overheads relative to the
native unsandboxed code are \ffMaxImgSimpleOverheadSegzeroNative, 
\ffMaxImgStockOverheadSegzeroNative, and \ffMaxImgRandomOverheadSegzeroNative 
for \simplejpeg,
\stockjpeg and \randomjpeg, respectively.
These overheads are lower than the corresponding overheads for \trnacl
(\ffMaxImgSimpleOverheadNaClNative, 
\ffMaxImgStockOverheadNaClNative, and 
\ffMaxImgRandomOverheadNaClNative) as well as \tridealheavy 
(\ffMaxImgSimpleOverheadIdealNative, 
\ffMaxImgStockOverheadIdealNative and 
\ffMaxImgRandomOverheadIdealNative).
As in the Wasm measurements, these overheads shrink as image width increases
and the complexity of the image increases (e.g., the overheads for images wider
than 480 pixels are negligible).

\subsection{Effectiveness of the \verifname verifier}
\label{subsec:verifier-eval}
We evaluate \verifname's effectiveness by using it to (1) verify 13
binaries\dash---five third-party libraries shipping (or about to ship) with
Firefox compiled across 3 binaries, and 10 binaries from the \SPECOhSix
benchmarks\dash---and (2) find nine manually introduced bugs, inspired by real
calling convention bugs in previous SFI toolchains~\cite{cranelift-bug-1177,
nacl-bug-775, nacl-bug-2919}.
We measure \verifname's performance verifying the aforementioned 13 binaries.
Finally, we stress test \verifname by running it on random binaries generated
by Csmith~\cite{csmith}.

\para{Experimental setup}
We run all \verifname{} experiments on a 2.1GHz \Intel Xeon® Platinum 8160
machine with 96 cores and 1 TB of RAM running Arch Linux 5.11.12.
All experiments run on a single core and use no more than 2GB of RAM.
We compile the SPEC binaries used using the Lucet toolkit used in
\sectionref{subsec:eval-wasm}.
We verify three of the Firefox libraries from Firefox Nightly;  we compile the
other two from the patches that are in the process of being integrated into
Firefox.

\para{Effectiveness and performance results}
\verifname{} successfully verifies the 13 Firefox and \SPECOhSix binaries.
These binaries vary in size from 150 functions (the \texttt{lbm} benchmark from
\SPECOhSix) to 4094 functions (the binary consisting of the Firefox Nightly
libraries \libogg, \libgraphite, and \hunspell).
It took \verifname between 1.77 seconds and 19.28 seconds to verify these
binaries, with an average of 9.2 seconds and median of 5.93 seconds.
\verifname{}'s performance is on par with the original VeriWasm's performance:
on the 10 \SPECOhSix benchmarks evaluated in the VeriWasm
paper~\cite{veriwasm} \verifname{} is slightly (15\%) faster, despite
checking zero-cost conditions in addition to all of VeriWasm's original checks.
This is due to various engineering improvements that were made to VeriWasm in
the course of developing \verifname{}.

\verifname also successfully found bugs injected into nine binaries.
These bugs tested all the zero-cost properties that \verifname{} was
designed to check, and when possible they were based on real bugs (like those in
Cranelift~\cite{cranelift-bug-1177}).
\verifname{} successfully detected all nine of these bugs, giving us confidence
that \verifname{} is capable of finding violations of the zero-cost conditions.

\para{Fuzzing results}
We fuzzed \verifname{} to both search for potential bugs in Lucet, as well as
to ensure \verifname{} does not incorrectly declare safe programs unsafe.
The fuzzing pipeline works in four stages: first, we use Csmith~\cite{csmith} to
generate random C and \C++ programs, next we use Clang to compile the generated
C/\C++ program to WebAssembly, followed by compiling the Wasm file to native code
using Lucet, and finally we verify the generated binary with \verifname{}.
As these programs were compiled by Lucet, we expect them to adhere to the
zero-cost conditions, and any binaries flagged by \verifname{} are either bugs
in Lucet or are spurious errors in \verifname{}.

While we did not find bugs in Lucet, fuzzing did find cases where
\verifname{} triggered spurious errors.
After fixing these errors, we verified 100,000 randomly generated programs
with no false positives.

\section{Limitations}
\label{sec:limitations}
Our Wasm SFI scheme is capable of sandboxing any C/\C++-like language (with arbitrary
intra-function control flow, arbitrary type casting, arbitrary pointer aliasing,
arithmetic etc.) that can compile to Wasm, so long as it does not use
features which Wasm must emulate through JavaScript\footnote{See
\url{https://emscripten.org/}}\dash---most prominently \C++-style exceptions,
\texttt{setjmp/longjmp}, and multithreading.
These limitations are not inherent to our zero-cost conditions, and Wasm is in
the process of being extended with support for all of the above
features~\cite{10.1145/3360559,wasmeh}.

Our \trsegmentsfi scheme is built as a proof-of-concept, using mostly existing
LLVM passes to sandbox C programs compiled to 32-bit x86, as an approach to understanding the overhead of zero-cost conditions on native code.
As such, our \trsegmentsfi implementation does not support, for instance,
\texttt{setjmp/longjmp} or multithreading (similar to Wasm).
In addition it does not support user-defined variadic function arguments or position
independent code.
However, these are engineering limitations and not fundamental.
For example, user-defined variadic function arguments could be supported by extending the SafeStack LLVM pass to move the
variadic argument buffer into the unsafe stack, and position independent code could be supported
through minor generalisations of existing compiler primitives.

Both Wasm and \trsegmentsfi rely on a type-directed forward-edge CFI which requires us to statically infer a limited amount of information about arguments to functions.\footnote{We do not need to infer any information about the heap or unsafe stack. Variadic functions, for example, can pass a dynamic number of arguments on the unsafe stack.}
This information includes the number of arguments, their width, and the calling convention.
In practice, this information is readily available as part of compilation and
does not require any complex control flow inference (unlike more precise fine
grain CFI schemes), so this is only a limitation when analyzing certain
binary programs.
Like most SFI schemes, both Wasm and our \trsegmentsfi do not currently
support JIT code compilation within the isolated component; adding this would
require engineering work, but can be done following the approaches of
\cite{nacl-jit, vahldiek-oberwagner_erim_2019}.
Finally, side channels are out of scope for this paper.

\section{Related work}
\label{sec:related}

A considerable amount of research has gone into efficient implementations of
memory isolation and CFI techniques to provide SFI across many
platforms~\cite{sehr_adapting_2010,mccamant_evaluating_2006,
goonasekera_libvm_2015, lucet, omniware, omniware-pldi, vino, herder2009fault,
rockjit, robusta, vx32, gang-sfi, payer2011fine, wedge, lwc, shreds, bgi}.
However, these systems either implement or require the user to implement
heavyweight springboards and trampolines to guarantee security.

\para{SFI systems}
\citet{wahbe_efficient_1993} suggest two ways to optimize
transitions: (1) partitioning the registers used by the application and the
sandboxed component and (2) performing link time optimizations (LTO)
that conservatively eliminates register saves that are never used in the entire
sandboxed component (not just the callee).
Register partitioning would cause slowdowns due to increased spilling.
Native Client~\cite{yee_native_2009} optimized 
transitions by clearing and saving contexts using 
machine specific mechanisms to, e.g., clear floating point 
state and SIMD registers in bulk.
However, we show (\secref{sec:eval}) that, even with those optimizations, the
software model imposes significant transition overheads.
While CPU makers continue to add optimized context switching
instructions, such instructions do not yet eliminate all
overhead.

\citet{zeng-tan2011} combine an SFI scheme with a rich CFI scheme
enforcing structure on executing code.
While a similar approach, their goal is to safely perform optimizations
to elide SFI and CFI bounds checks, and they do not impose sufficient structure to
enforce well-bracketing, a necessary property for zero-cost transitions.
XFI~\cite{xfi} also combines an SFI scheme with a rich CFI scheme and adopts a
safe stack model.
While meeting many of the zero-cost conditions, it does not prevent reading
uninitialized scratch registers and therefore cannot ensure confidentiality
without heavyweight springboards that clear scratch registers.
They also do not specify the CFG granularity, so it is not clear if is strong
enough to satisfy the zero-cost type-safe CFI requirement.

\para{WebAssembly based isolation}
WasmBoxC~\cite{wasmboxc} sandboxes C code by compiling to Wasm
followed by (de)compiling back to C, ensuring that the sandboxed code will
inherit isolation properties from Wasm.
The sandboxed library code can be safely linked with C applications, enabling a
form of zero-cost transition.
The zero-cost Wasm SFI system described by this paper was designed and released
prior to and independently of WasmBoxC, as the creators of WasmBoxC acknowledge
(citation elided for DBR).
Moreover, we believe that the theory developed in this paper provides a foundation for
analyzing and proving the security of WasmBoxC though such analysis would need
to account for possible undefined behavior introduced in compiling to C.

Sledge~\cite{sledge} describes a Wasm runtime for edge computing, that
relies on Wasm properties to enable efficient isolation of serverless
components.
However, Sledge focuses on function
scheduling including preempting running Wasm programs,
so its needs for context saving differ from library sandboxing as
contexts must be saved even in the middle of function calls.

\para{SFI Verification}
Previous work on SFI
(e.g.,~\cite{mccamant_evaluating_2006,yee_native_2009, xfi, veriwasm}) uses a
\emph{verifier} or a theorem prover~\cite{armor, compcert-sfi} to validate the
relevant SFI properties of compiled sandbox code.
However, unlike \verifname{}, none of these verifiers establish sufficient
properties for zero-cost transitions.

\para{Hardware based isolation}
Hardware features such as memory protection
keys~\cite{vahldiek-oberwagner_erim_2019, hodor}, extended page
tables~\cite{qiang_libsec_2017}, virtualization
instructions~\cite{qiang_libsec_2017, dune}, or even dedicated hardware
designs~\cite{donky} can be used to speed up
memory isolation.
These works focus on the efficiency of memory isolation as well as switching
between protected memory domains; however these approaches also use a single memory region that contain both the stack and heap making them incompatible with zero-cost conditions, i.e. they require heavyweight transitions.
\tridealheavy and \tridealheavysixfour in \sectionref{sec:eval} studies an idealized version of such a scheme.

\para{Capabilities}
\citet{karger} and \citet{skorstengaard_stktokens_2019} look at protecting
interacting components on systems that provide hardware-enforced capabilities.
\citet{karger} specifically looks at how register saving and restoration can be
optimized based on different levels of trust between components, however their
analysis does not offer formal security guarantees.
\citet{skorstengaard_stktokens_2019} investigate a calling-convention based on
capabilities (\`a la CHERI~\cite{cheri}) that allow safe sharing of a
stack between distrusting components.
Their definition of well-bracketed control flow and local state encapsulation
via an overlay inspired our work, and our logical relation is also based
on their work.
However, their technique does not yet ensure an equivalent notion to our
confidentiality property, and further is tied to machine support for hardware
capabilities.

\para{Type safety for isolation}
There has also been work on using strongly-typed languages to provide similar
security benefits.
SingularityOS~\cite{aiken2006deconstructing, hunt2007singularity,
fahndrich2006language}, explored using Sing\# to build an OS with cheap
transitions between mutually untrusting processes.
Unlike the work on SFI techniques that zero-cost transitions extend, tools like
SingularityOS require engineering effort to rewrite unsafe components in new
safe languages.

At a lower level, Typed Assembly Language (TAL)~\cite{morrisett_system_1999,
morrisett_stack-based_2002, morrisett_talx86_1999} is a type-safe compilation
target for high-level type-safe languages.
Its type system enables proofs that assembly programs follow
calling conventions, and enables an elegant definition of stack safety through
polymorphism.
Unfortunately, SFI is designed with unsafe code in mind, so cannot generally be
compiled to meet TAL's static checks.
To handle this our zero-cost and security conditions instead capture the \emph{behavior} that
TAL's type system is designed to ensure.

\begin{acks}
  We thank the reviewers for their suggestions and insightful comments.
  Many thanks to Bobby Holley, Mike Hommey, Chris Fallin, Tom Ritter, Till
  Schneidereit, Andy Wortman, and Alon Zakai for fruitful discussions.
  We thank Chris Fallin, Pat Hickey, and Andy Wortman for working with us to
  integrate \verifname{} into the Lucet compiler.
  This work was supported in part by gifts from Cisco, Fastly, Google, and
  Intel; by the NSF under Grant Number CNS-1514435, CNS-2120642, CCF-1918573,
  CAREER CNS-2048262; and, by the CONIX Research Center, one of six centers in
  JUMP, a Semiconductor Research Corporation (SRC) program sponsored by DARPA.
  Conrad Watt was supported by the EPSRC grant REMS: \textit{Rigorous
  Engineering for Mainstream Systems} (EP/K008528/1), a Google PhD Fellowship in
  Programming Technology and Software Engineering, and a Research Fellowship from
  Peterhouse, University of Cambridge.
\end{acks}

\bibliography{local.bib}

\iftechreport{%
\appendix

\section{Language Definitions}
\label{appendix:language-defs}

\figref{fig:appendix:syntax:lang} presents the syntax of our sandbox language model.
For all programs we define the regions $M$, $M_p$, $H_p$, $S_p$, $C_p$, and $I$.
$M = \nats$ and represents the whole memory space.
$M_p$, $H_p$, and $S_p$ are the memory, heap, and stack of the application or library where the heap and stack sit disjointly inside the memory.
$C_p$ is set of instruction indices such that $C(n) = (p, \_)$.
$I$ is the set of import indices, the beginnings of application functions that the library is allowed to jump to.

A note on calling convention: arguments are passed on the stack and the return
address is placed above the arguments.

\begin{center}
  \begin{tabular}{>{$}r<{$} >{$}c<{$} >{$}r<{$} >{$}c<{$} >{$}l<{$}}
    & & pc, sp, n, \ell & \in & \nats \\
    \privs & \ni & p & \bnfdef & \trusted \bnfalt \untrusted \\
    \vals & \ni & v & \bnfdef & n \\
    \regs & \ni & r & \bnfdef & \mathtt{r}_{n} \bnfalt sp \bnfalt pc \\
    \checks & \ni & k & \in & \nats \rightharpoonup \nats \\

    \expressions & \ni & e & \bnfdef & r \bnfalt v \bnfalt e \oplus e \\

    \commands & \ni & c & \bnfdef & \cpop{r}{p} \\
    & & & \bnfalt & \cpush{p}{e} \\
    & & & \bnfalt & \cload{r}{k}{e} \\
    & & & \bnfalt & \cstore{k}{e}{e} \\
    & & & \bnfalt & \cmov{r}{e} \\
    & & & \bnfalt & \ccall{k}{e} \\
    & & & \bnfalt & \cret{k} \\
    & & & \bnfalt & \cjmp{k}{e} \\
    & & & \bnfalt & \cgatecall{n}{e} \\
    & & & \bnfalt & \cgateret \\

    \codes & \ni & C & \bnfdef & \nats \rightharpoonup \privs \times \commands \\
    \regvals & \ni & R & \bnfdef & \regs \rightarrow \vals \\
    \memories & \ni & M & \bnfdef & \nats \rightarrow \vals \\
    \states & \ni & \Psi & \bnfdef & \error \\
    & & & \bnfalt &
      \begin{array}[t]{lllll}
        \{
        & pc & \bnftypes & \nats \\
        & sp & \bnftypes & \nats \\
        & R & \bnftypes & \regvals \\
        & M & \bnftypes & \memories \\
        & C & \bnftypes & \codes & \}
      \end{array}
  \end{tabular}
  \captionof{figure}{Syntax}
  \label{fig:appendix:syntax:lang}
\end{center}

\figref{fig:appendix:operational} and
\figref{fig:appendix:operational:auxiliary} define the base small-step
operational semantics.
We separate this into transitions $\currentop{\Psi}{c} \step \Psi'$ and error transitions $\currentop{\Psi}{c} \step \error$.

\begin{center}
  \judgmentHead{}{\currentop{\Psi}{c} \step \Psi'}
  \begin{mathpar}
    \inferrule
    {
      \Psi.sp \in S_{p_s}
      \\ p_s \lesstrusted p
      \\\\ v = \Psi.M(\Psi.sp)
      \\ R' = R[r \mapsto v]
    }
    {\currentop{\Psi}{\cpop{r}{p}} \step \pcinc{\Psi}[sp \assign \Psi.sp - 1, R \assign R']}

    \inferrule
    {
      v = \immval{\Psi}{e}
      \\ sp' = \Psi.sp + 1
      \\\\ M' = \Psi.M[sp' \mapsto v]
      \\ sp' \in S_{p_s}
      \\ p_s \lesstrusted p
    }
    {\currentop{\Psi}{\cpush{p}{e}} \step \pcinc{\Psi}[sp \assign sp', M \assign M']}

    \inferrule
    {
      n = \immval{\Psi}{e}
      \\n' = k(n)
      \\\\ v = \Psi.M(n')
      \\ R' = \Psi.R[r \mapsto v]
    }
    {\currentop{\Psi}{\cload{r}{k}{e}} \step \pcinc{\Psi}[R \assign R']}

    \inferrule
    {
      n = \immval{\Psi}{e}
      \\ v = \immval{\Psi}{e'}
      \\\\ n' = k(n)
      \\ M' = \Psi.M[n' \mapsto v]
    }
    {\currentop{\Psi}{\cstore{k}{e}{e'}} \step \pcinc{\Psi}[M \assign M']}

    \inferrule
    {
      v = \immval{\Psi}{e}
      \\ R' = \Psi.R[r \mapsto v]
    }
    {\currentop{\Psi}{\cmov{r}{e}} \step \pcinc{\Psi}[R \assign R']}

    \inferrule
    {
      n = \immval{\Psi}{e}
      \\ n' = k(n)
      \\ sp' = \Psi.sp + 1
      \\\\ M' = \Psi.M[sp' \mapsto \Psi.pc + 1]
      \\ sp' \in S_{p_s}
    }
    {\currentop{\Psi}{\ccall{k}{e}} \step \Psi[pc \assign n', sp \assign sp', M \assign M']}

    \inferrule
    {
      n = \immval{\Psi}{e}
      \\ n' = k(n)
    }
    {\currentop{\Psi}{\cjmp{k}{e}} \step \Psi[pc \assign n']}

    \inferrule
    {
      n = \Psi.M(\Psi.sp)
      \\\\ n' = k(n)
      \\ \Psi.sp \in S_{p_s}
    }
    {\currentop{\Psi}{\cret{k}} \step \Psi[pc \assign n', sp \assign \Psi.sp - 1]}

    \inferrule
    {
      v = \immval{\Psi}{e}
    }
    {\currentop{\Psi}{\cmov{sp}{e}} \step \pcinc{\Psi}[sp \assign v]}
  \end{mathpar}
  \captionof{figure}{Operational Semantics}
  \label{fig:appendix:operational}
\end{center}

\begin{center}
  \judgmentHead{}{\currentop{\Psi}{c} \step \error}
  \begin{mathpar}
    \inferrule
    {
      \Psi.sp \in S_{p_s}
      \\ p_s \nlesstrusted p
    }
    {\currentop{\Psi}{\cpop{r}{p}} \step \error}

    \inferrule
    {
      \Psi.sp + 1 \in S_{p_s}
      \\ p_s \nlesstrusted p
    }
    {\currentop{\Psi}{\cpush{p}{e}} \step \error}

    \inferrule
    {
      n = \immval{\Psi}{e}
      \\ k(n) \text{ undefined}
    }
    {\currentop{\Psi}{\cload{r}{k}{e}} \step \error}

    \inferrule
    {
      n = \immval{\Psi}{e}
      \\ k(n) \text{ undefined}
    }
    {\currentop{\Psi}{\cstore{k}{e}{e'}} \step \error}

    \inferrule
    {
      \\ \Psi.sp + 1 \notin S_{p_s}
    }
    {\currentop{\Psi}{\ccall{k}{e}} \step \error}

    \inferrule
    {
      n = \immval{\Psi}{e}
      \\ k(n) \text{ undefined}
    }
    {\currentop{\Psi}{\ccall{k}{e}} \step \error}

    \inferrule
    {
      \\ \Psi.sp \notin S_{p_s}
    }
    {\currentop{\Psi}{\cret{k}} \step \error}

    \inferrule
    {
      n = \Psi.M(\Psi.sp)
      \\ k(n) \text{ undefined}
    }
    {\currentop{\Psi}{\cret{k}} \step \error}

    \inferrule
    {
      n = \immval{\Psi}{e}
      \\ k(n) \text{ undefined}
    }
    {\currentop{\Psi}{\cjmp{k}{e}} \step \error}
  \end{mathpar}
  \captionof{figure}{Operational Semantics}
  \label{fig:appendix:operational:errors}
\end{center}

\begin{center}
  \begin{mathpar}
    \inferrule
    {C(\Psi.pc) = (\_, c)}
    {\currentop{\Psi}{c}}

    \inferrule
    {C(\Psi.pc) = (p, c)}
    {\currentcom{\Psi}{c}{p}}

    \inferrule
    {
      \Psi \stepstar \Psi'
      \\\\ \neg \exists \Psi''.~ \Psi' \step \Psi''
      \\ \Psi' \neq \error
    }
    {\Psi \evalsto \Psi'}

    \untrusted \lesstrusted \trusted
  \end{mathpar}

  \[\begin{array}{rcl}
    \immval{\Psi}{v} & \triangleq & v \\
    \immval{\Psi}{r} & \triangleq & \Psi.R(r) \\
    \immval{\Psi}{sp} & \triangleq & \Psi.sp \\
    \immval{\Psi}{pc} & \triangleq & \Psi.pc \\
    \immval{\Psi}{e \oplus e'} & \triangleq & \immval{\Psi}{e} \oplus \immval{\Psi}{e'} \\
    \\
    \pcinc{\Psi} & \triangleq &
    \begin{cases}
      \Psi[pc \assign \Psi.pc + 1] & \text{when } \pi_1(\Psi.C(\Psi.pc)) = \pi_1(\Psi.C(\Psi.pc + 1)) \\
      \error & \text{otherwise}
    \end{cases}
  \end{array}\]
  
  \captionof{figure}{Operational Semantics: Auxiliary Definitions}
  \label{fig:appendix:operational:auxiliary}
\end{center}

\figref{fig:appendix:operational:derived-forms} defines unguarded derived forms for memory operations.

\begin{center}
  \[\begin{array}{rcl}
    \cpop{r}{} & \triangleq & \cpop{r}{\top} \\
    \cpush{}{e} & \triangleq & \cpush{\top}{e} \\
    \cload{r}{}{e} & \triangleq & \cload{r}{id}{e} \\
    \cstore{}{e}{e'} & \triangleq & \cstore{id}{e}{e'} \\
    \cjmp{}{e} & \triangleq & \cjmp{id}{e} \\
    \ccall{}{e} & \triangleq & \ccall{id}{e} \\
    \cret{} & \triangleq & \cret{id} \\
  \end{array}\]
  \captionof{figure}{Derived Forms}
  \label{fig:appendix:operational:derived-forms}
\end{center}

\subsection{Sandbox Properties}

\begin{center}
  \begin{mathpar}
    \inferrule
    {
      \Psi_1 \step \Psi_2
      \\ \currentcom{\Psi_1}{c_1}{p_1}
      \\\\ \currentcom{\Psi_2}{c_2}{p_2}
      \\ p_1 = p_2 = p
    }
    {\Psi_1 \stepp{p} \Psi_2}

    \inferrule
    {\Psi \stepp{p} \Psi'}
    {\Psi \stepbox \Psi'}

    \inferrule
    {\Psi \stepwb \Psi'}
    {\Psi \stepbox \Psi'}

    \inferrule
    {
      \Psi \step \Psi_1 \stepboxstar \Psi_2 \step \Psi'
      \\\\ \currentop{\Psi}{\cgatecall{n}{e}}
      \\ \currentop{\Psi_2}{\cgateret}
    }
    {\Psi \stepwb \Psi'}
  \end{mathpar}
  \captionof{figure}{Well-Bracketed Transitions}
  \label{fig:appendix:well-bracketed-step}
\end{center}

\subsubsection{Integrity}

Integrity is all about maintaining application invariants across calls into the sandbox.
These invariants vary significantly from program to program, so to capture this
generality we define $\mathcal{I}$-Integrity and then instantiate it in several
specific instances.

\begin{definition}[$\mathcal{I}$-Integrity]{~}
  Let $\mathcal{I} : \mathit{Trace} \times \mathit{State} \times \mathit{State} \rightarrow \prop$.
  We say that an SFI transition systems has $\mathcal{I}$-integrity if
  $\Psi_0 \in \programs$, $\pi = \Psi_0 \stepstar \Psi_1$,
  $\currentcom{\Psi_1}{\_}{\trusted}$, and $\Psi_1 \stepwb \Psi_2$ imply that
  $\mathcal{I}(\pi, \Psi_1, \Psi_2)$.
\end{definition}

Informally, callee-save register integrity says that the values of callee-save registers are restored by gated calls into the sandbox:
\begin{definition}[Callee-Save Register Integrity]{~}

  Let $\mathbb{CSR}$ be the list of callee-save registers and define
  \[
    \mathcal{CSR}(\_, \Psi_1, \Psi_2) \triangleq \Psi_1.R(\mathbb{CSR}) = \Psi_2.R(\mathbb{CSR}).
  \]
  If an SFI transition system has $\mathcal{CSR}$-integrity then we say the system has callee-save register integrity.
\end{definition}

\begin{center}
  \begin{align*}
    \operatorname{return-address}_p & : \mathit{Trace} \rightarrow \powerset{\nats}
    \\
    \operatorname{return-address}_p(\Psi_0 \stepstar \currentcom{\Psi}{\ccall{k}{e}}{p} \step \Psi') & \triangleq \operatorname{return-address}_p(\Psi_0 \stepstar \Psi) \cup \{\Psi.sp + 1\}
    \\
    \operatorname{return-address}_p(\Psi_0 \stepstar \currentcom{\Psi}{\cret{k}}{p} \step \Psi') & \triangleq \operatorname{return-address}_p(\Psi_0 \stepstar \Psi) - \{\Psi.sp\}
    \\
    \operatorname{return-address}_p(\Psi_0 \stepstar \currentcom{\Psi}{\cgatecall{n}{e}}{p} \step \Psi') & \triangleq \operatorname{return-address}_p(\Psi_0 \stepstar \Psi) \cup \{\Psi.sp + 1\}
    \\
    \operatorname{return-address}_p(\Psi_0 \stepstar \currentcom{\Psi}{\cgateret}{p} \step \Psi') & \triangleq \operatorname{return-address}_p(\Psi_0 \stepstar \Psi) - \{\Psi.sp\}
    \\
    \operatorname{return-address}_p(\Psi_0 \stepstar \currentop{\Psi}{c} \step \Psi') & \triangleq \operatorname{return-address}_p(\Psi_0 \stepstar \Psi)
    \\
    \operatorname{return-address}_p(\Psi_0 \stepn{0} \Psi_0) & \triangleq \emptyset
  \end{align*}
  \captionof{figure}{Call stack return address calculation}
  \label{fig:appendix:return-address}
\end{center}

\begin{definition}[Return Address Integrity]
  Define
  \begin{align*}
    \mathcal{RA}(\pi, \Psi_1, \Psi_2) \triangleq (\Psi_1.M(\operatorname{return-address}_{\trusted}(\pi)) &= \Psi_2.M(\operatorname{return-address}_{\trusted}(\pi))) \\
    & {}\wedge (\Psi_2.sp = \Psi_1.sp)  \wedge (\Psi_2.pc = \Psi_1.pc + 1)
  \end{align*} 
  \[
  \]
  If an SFI transition system has $\mathcal{RA}$-integrity then we say the system has return address integrity.
\end{definition}

\subsubsection{Confidentiality}
\label{sec:appendix:confidentiality}

A confidentiality policy is defined by a partial function $\mathbb{C} \in \states \rightharpoonup (\nats \mathrel{+} \regs \rightarrow \privs)$.
The domain of $\mathbb{C}$ must include all $\Psi$ such that $\Psi$ is a program
state where the application is making a gated call into the library, that is,
where $\Psi_0.C(\Psi.pc) = (\trusted, \cgatecall{n'}{e})$.
$\mathbb{C}$ captures which registers and memory slots are labelled confidential
($\trusted$) or public ($\untrusted$) at a gated call into the sandbox.
In the following we use $f|_{X}$ for the restriction of the function $f : A
\rightarrow B$ to a subset $X \subseteq A$.

We say $\Psi =_{\mathbb{C}} \Psi'$ when
\begin{enumerate}
\item $\Psi.pc = \Psi'.pc$
\item $\Psi.sp = \Psi'.sp$
\item $\currentcom{\Psi}{\cgatecall{n}{e}}{\trusted}$ and $\currentcom{\Psi'}{\cgatecall{n}{e}}{\trusted}$
\item $\Psi.R|_{\{r \mid \mathbb{C}(\Psi)(r) = \untrusted\}} = \Psi'.R|_{\{r \mid \mathbb{C}(\Psi')(r) = \untrusted\}}$
\item $\Psi.M|_{\{n \mid \mathbb{C}(\Psi)(n) = \untrusted\}} = \Psi'.M|_{\{n \mid \mathbb{C}(\Psi')(n) = \untrusted\}}$
\end{enumerate}

We then define two notions of observational equivalence.
\begin{definition} \label{appendix:call-equivalence}
  We say $\Psi =_{\mathtt{call}\ n} \Psi'$ if
  \begin{enumerate}
  \item $\Psi.M(H_{\untrusted}) = \Psi'.M(H_{\untrusted})$

  \item $\Psi.pc = \Psi'.pc$

  \item $\Psi.sp = \Psi'.sp$

  \item For all $i \in [1, n]$, there exists some $n'$ such that $n' = \Psi.M(\Psi.sp - i) = \Psi'.M(\Psi.sp - i)$.
  \end{enumerate}
\end{definition}

Second, let $r_{ret}$ be the calling convention return register.
\begin{definition} \label{appendix:ret-equivalence}
  We say $\Psi =_{\mathtt{ret}} \Psi'$ if
  \begin{enumerate}
  \item $\Psi.M(H_{\untrusted}) = \Psi'.M(H_{\untrusted})$

  \item $\Psi.pc = \Psi'.pc$

  \item There exists some $n$ such that $n = \Psi.R(r_{ret}) = \Psi'.R(r_{ret})$.
  \end{enumerate}
\end{definition}

\begin{definition}[\StrongNI{}]{~}

  We say that an SFI transition system has the \strongni{} property if,
  for all initial configurations and their confidentiality properties $(\Psi_0, \mathbb{C}) \in \programs$,
  traces $\Psi_0 \stepstar \Psi_1 \step \Psi_2 \steplowstar \Psi_3 \step \Psi_4$,
  where $\Psi_1$ is a gated call into the library
  $\currentcom{\Psi_1}{\cgatecall{n}{e}}{\trusted}$,
  and $\Psi_3 \step \Psi_4$ leaves the library and reenters the application
  ($\currentcom{\Psi_4}{\_}{\trusted}$),
  and, for all $\Psi_1'$ such that $\Psi_1 =_{\mathbb{C}}
  \Psi_1'$, we have that $\Psi_1' \step \Psi_2' \steplowstar \Psi_3' \step \Psi_4'$,
  $\currentcom{\Psi_4'}{\_}{\trusted}$, $\Psi_4.pc = \Psi_4'.pc$, and
  \begin{enumerate}
  \item $\Psi_3$ is a gated call to the application ($\currentop{\Psi_3}{\cgatecall{m}{e}}$ and $\currentop{\Psi_3'}{\cgatecall{m}{e}}$) and $\Psi_4 =_{\mathtt{call}\ m} \Psi_4'$ or
  \item $\Psi_3$ is a gated return to the application ($\currentop{\Psi_3}{\cgateret}$ and $\currentop{\Psi_3'}{\cgateret}$) and $\Psi_4 =_{\mathtt{ret}} \Psi_4'$.
  \end{enumerate}
\end{definition}
\section{NaCl}
\label{appendix:nacl}

\begin{center}
  \begin{tabular}{>{$} c <{$} | c}
    \multicolumn{2}{c}{NaCl context in application} \\
    \hline
    ctx - 0 & library stack pointer \\
    \ldots & \\
    ctx^{\ast} & $ctx$ \\

    \multicolumn{2}{c}{} \\

    \multicolumn{2}{c}{NaCl context in library} \\
    \hline
    ctx - 0 & application stack pointer \\
    ctx - 1 & $\mathbb{CSR}_0$ \\
    ctx - 2 & $\mathbb{CSR}_1$ \\
    \ldots & \ldots \\
    ctx - \operatorname{len}(\mathbb{CSR}) & $\mathbb{CSR}_{\operatorname{len}(\mathbb{CSR}) - 1}$ \\
    \ldots \\
    ctx^{\ast} & $ctx$ \\
  \end{tabular}
  \captionof{figure}{Transition Context Layout}
  \label{fig:appendix:nacl:context-layout}
\end{center}

\begin{center}
  \begin{small}
  \begin{tabular}{>{$} r <{$} | >{$} l <{$} @{\hspace{\tabcolsep}} l}
    \multicolumn{2}{l}{$\operatorname{nacl-springboard}(n, e) \triangleq$} \\
    \hline
    & \cload{r_0}{}{ctx^{\ast}}
    \\
    & \cload{r_1}{}{r_0}
    & // $r_1$ holds the library stack pointer
    \\
    j \in (\operatorname{len}(\mathbb{CSR}), 0]
    & \overline{
      \cstore{}{r_0}{\mathbb{CSR}_j};
      \cmov{r_0}{r_0 + 1}
    }
    & // save callee-save registers
    \\
    & \cmov{r_1}{r_1 + n}
    & // set $r_1$ to the new top of the library stack
    \\
    & \cmov{sp}{sp - 1}
    & // move the stack pointer to the first argument
    \\
    j \in [0, n)
    & \overline{
      \cpop{r_2}{};
      \cstore{M_{\untrusted}}{r_1}{r_2};
      \cmov{r_1}{r_1 - 1}
    }
    & // copy arguments to library stack
    \\
    & \cmov{r_2}{sp + (n + 1)}; \cstore{}{r_0}{r_2}
    & // save stack pointer
    \\
    & \cmov{sp}{r_1 + n}
    & // set new stack pointer
    \\
    & \cstore{}{ctx^{\ast}}{r_0}
    & // update $ctx$
    \\
    r \in \mathbb{R}
    & \overline{\cmov{r}{0}}
    & // clear registers
    \\
    & \cjmp{}{e}
  \end{tabular}
  \end{small}
\end{center}

\begin{center}
  \begin{small}
  \begin{tabular}{>{$} r <{$} | >{$} l <{$} l}
    \multicolumn{2}{l}{$\operatorname{nacl-trampoline} \triangleq$} \\
    \hline
    & \cload{r_0}{}{ctx^{\ast}}
    \\
    j \in [0, \operatorname{len}(\mathbb{CSR}))
    & \overline{
      \cmov{r_0}{r_0 - 1};
      \cload{\mathbb{CSR}_j}{}{r_0}
    }
    & // restore callee-save registers
    \\
    & \cload{r_0}{}{ctx^{\ast}}
    \\
    & \cload{r_1}{}{r_0}
    & // $r_1$ holds the application stack pointer
    \\
    & \cmov{r_0}{r_0 - \operatorname{len}(\mathbb{CSR})}; \cstore{}{r_0}{sp}
    & // save library stack pointer
    \\
    & \cstore{}{ctx^{\ast}}{r_0}
    & // update $ctx$
    \\
    & \cmov{sp}{r_1}
    & // switch to application stack
    \\
    & \cret{}
  \end{tabular}
  \end{small}
\end{center}

\begin{center}
  \begin{small}
  \begin{tabular}{>{$} r <{$} | >{$} l <{$} l}
    \multicolumn{2}{l}{$\operatorname{nacl-cb-springboard}(n, e) \triangleq$} \\
    \hline
    & \cload{r_0}{}{ctx^{\ast}}
    \\
    & \cload{r_1}{}{r_0}
    & // $r_1$ holds the application stack pointer
    \\
    & \cmov{r_0}{r_0 + 1}; \cstore{}{r_0}{sp}
    & // save stack pointer
    \\
    & \cstore{}{ctx^{\ast}}{r_0}
    & // update $ctx$
    \\
    & \cmov{sp}{sp - 1}
    & // move the stack pointer to the first argument
    \\
    & \cmov{r_1}{r_1 + n}
    & // set $r_1$ to the new top of the library stack
    \\
    j \in [0, n)
    & \overline{
      \cpop{r_2}{M_{\untrusted}};
      \cstore{}{r_1}{r_2};
      \cmov{r_1}{r_1 - 1}
      }
    & // copy arguments to application stack
    \\
    & \cmov{sp}{r_1 + n}
    & // set new stack pointer
    \\
    & \cjmp{I}{e}
  \end{tabular}
  \end{small}
\end{center}

\begin{center}
  \begin{small}
  \begin{tabular}{>{$} r <{$} | >{$} l <{$} l}
    \multicolumn{2}{l}{$\operatorname{nacl-cb-trampoline} \triangleq$} \\
    \hline
    & \cload{r_0}{}{ctx^{\ast}}
    \\
    & \cload{r_1}{}{r_0}
    & // $r_1$ holds the library stack pointer
    \\
    & \cmov{r_0}{r_0 - 1}; \cstore{}{ctx^{\ast}}{r_0}
    & // update $ctx$
    \\
    & \cmov{sp}{r_1}
    & // switch to library stack
    \\
    r \in \mathbb{R}
    & \overline{\cmov{r}{0}}
    & // clear registers
    \\
    & \cret{C_{\untrusted}}
  \end{tabular}
  \end{small}
\end{center}

\subsection{Programs}

A NaCl program $\Psi$ is defined by the following conditions:

\begin{enumerate}
\item
  All memory operations in the sandboxed library are guarded:
  \begin{align*}
    \forall n.
    & \Psi.C(n) = (\untrusted, \cpop{r}{p}) \Longrightarrow p = \untrusted \\
    & \Psi.C(n) = (\untrusted, \cpush{p}{e}) \Longrightarrow p = \untrusted \\
    & \Psi.C(n) = (\untrusted, \cload{r}{k}{e}) \Longrightarrow \cod{k} \subseteq M_{\untrusted} \\
    & \Psi.C(n) = (\untrusted, \cstore{k}{e}{e'}) \Longrightarrow \cod{k} \subseteq M_{\untrusted}.
  \end{align*}

\item
  The application does not write $\trusted$ data to the sandbox memory.

\item
  Gated calls are the only way to move between application and library code:
  \begin{align*}
    \forall n.
    & \Psi.C(n) = (p, \ccall{k}{e}) \Longrightarrow \cod{k} \subseteq C_{p} \\
    & \Psi.C(n) = (p, \cret{k}) \Longrightarrow \cod{k} \subseteq C_{p} \\
    & \Psi.C(n) = (p, \cjmp{k}{e}) \Longrightarrow \cod{k} \subseteq C_{p} \\
  \end{align*}

\item
  The program starts in the application: $\Psi.pc = 0$ and $\Psi.C(0) = (\trusted, \_)$.

\item
  $ctx^{\ast}$ and $ctx$ start initialized to the library stack:
  \begin{align*}
    & ctx^{\ast} \in H_{\trusted} \\
    & ctx = \Psi.M(ctx^{\ast}) \in H_{\trusted} \\
    & \Psi.M(ctx) = S_{\untrusted}[0] - 1.
  \end{align*}

\item
  When calling into the library NaCl considers all registers except the stack
  pointer and program counter confidential.
  It further labels sandbox memory and the $n'$ arguments in the application
  stack as public with the remainder of application memory labeled as
  confidential.

  \begin{align*}
    \mathbb{C}(\Psi)(r) &=
      \begin{cases}
        \untrusted & \text{when } r = sp \\
        \untrusted & \text{when } r = pc \\
        \trusted & \text{otherwise}
      \end{cases} \\
    \mathbb{C}(\Psi)(n) &=
      \begin{cases}
        \untrusted & \text{when } n \in M_{\untrusted} \\
        \untrusted & \text{when } n \in (\Psi.sp - n', \Psi.sp] \text{ where } \currentcom{\Psi}{\cgatecall{\mathit{n'}}{e}}{\trusted} \\
        \trusted & \text{when } n \in M_{\trusted} \wedge n \notin (\Psi.sp - n', \Psi.sp] \text{ where } \currentcom{\Psi}{\cgatecall{\mathit{n'}}{e}}{\trusted}
      \end{cases}
  \end{align*}
\end{enumerate}

\subsection{Properties}

Throughout the following we will use the shorthand $ctx_{\Psi} \triangleq \Psi.M(ctx^{\ast})$.

\begin{proposition}
  NaCl has the \strongni{} property.
\end{proposition}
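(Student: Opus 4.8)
The plan is a standard unwinding/noninterference argument built around a lockstep simulation relation on NaCl states, exploiting the two confidentiality-relevant features of the heavyweight springboard: it overwrites every register before jumping into $\untrusted$ code, and sandboxed code can touch memory only inside $M_{\untrusted}$ (its $\mathtt{load}$/$\mathtt{store}$ commands carry checks whose codomain lies in $M_{\untrusted}$, and its $\mathtt{push}$/$\mathtt{pop}$ require $sp \in S_{\untrusted}$) and can transfer control only inside $C_{\untrusted}$ (the first and third clauses of the NaCl program definition). Intuitively, after the springboard the entire sandbox-visible state is a function of the public data of $\Psi_1$, so a second run differing from $\Psi_1$ only on confidential data executes the sandbox code identically, and the two runs agree again on their public state once control leaves the sandbox.

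First I would fix a NaCl well-formedness predicate $W$ on states recording that the transition bookkeeping is intact: $ctx^{\ast}$ and the context it points to are laid out as in \figref{fig:appendix:nacl:context-layout}, the inactive stack pointer stored there points into the inactive stack, the active stack carries the return address at the slot the springboard/trampoline expect, and $\untrusted$ code has not mutated memory outside $M_{\untrusted}$. I would show that $(\Psi_0, \mathbb{C}) \in \programs$ implies $W(\Psi_0)$ and that $W$ is preserved by every $\step$; the interesting cases are the NaCl reduction rules for the gated call and gated return, which are checked against the context layout. Since the policy already forces $\Psi_1$ and $\Psi_1'$ to agree on $pc$ and $sp$, this invariant pins down everything the springboard reads on entry.

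Next I would define the sandbox-phase relation $\Psi \sim \Psi'$: the two states agree on all of $M_{\untrusted}$, on every register (including $sp$ and $pc$), and on the live transition bookkeeping (the $ctx^{\ast}$ cell, the active/inactive stack-pointer slots, and the return address the sandbox will eventually use), and may differ only on the confidential part of $M_{\trusted}$ --- application frames and the saved callee-save slots inside contexts, which the sandbox never reads. The argument is then three chained lemmas. (i) \emph{Springboard}: if $\Psi_1 =_{\mathbb{C}} \Psi_1'$ and $W(\Psi_1)$, the single step out of $\Psi_1$ realising the springboard yields $\Psi_2 \sim \Psi_2'$, because the cleared registers are $0$ in both runs, the $n$ copied arguments agree (they were labelled $\untrusted$), and the new $sp$ and return slot agree by $W$. (ii) \emph{Lockstep}: $\sim$ is preserved by $\steplow$ steps, and since the semantics is deterministic and every $\untrusted$ command's effect depends only on the registers and $M_{\untrusted}$, the two runs execute the identical instruction sequence; hence $\Psi_2 \steplown{n} \Psi_3$ forces $\Psi_2' \steplown{n} \Psi_3'$ with $\Psi_3 \sim \Psi_3'$ and $\Psi_3'$ poised to execute the same $\cgatecall{m}{e}$ or $\cgateret$ as $\Psi_3$ --- this yields both the ``same number of steps'' clause and the matching exit shape. (iii) \emph{Exit}: from $\Psi_3 \sim \Psi_3'$, the callback springboard (resp.\ the trampoline) step yields $\Psi_4, \Psi_4'$ with $\Psi_4.pc = \Psi_4'.pc$ (the target is the import $e$, resp.\ the return address recovered from the agreeing bookkeeping) and $\Psi_4 =_{\mathtt{call}\ m} \Psi_4'$ (resp.\ $\Psi_4 =_{\mathtt{ret}} \Psi_4'$), since the $m$ arguments copied onto the application stack, resp.\ the value left in $r_{ret}$, are read out of registers and $M_{\untrusted}$ on which $\Psi_3$ and $\Psi_3'$ already agree. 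Composing (i)--(iii) is exactly the statement of \Strongni{}.

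The main obstacle I expect is making the bookkeeping genuinely survive the quantifier in the definition: $\Psi_1'$ is only assumed $=_{\mathbb{C}}$-related to $\Psi_1$ and need not be reachable from any valid initial state, so the context chain, the inactive stack pointer, and --- crucially --- the return address on the application stack through which the trampoline eventually returns must be forced to agree in $\Psi_1'$ purely from $W(\Psi_1)$ together with public agreement, not from an invariant assumed of $\Psi_1'$. This is exactly where the precise choice of which slots the policy labels $\untrusted$ (the transition region holds stack pointers and return addresses, not application secrets) must be right, and it is the part of the proof that most needs care. Once it is in place, lemmas (ii) and (iii) are routine inductions on the length of the $\steplowstar$ derivation, and the only other fiddly point --- that the sandbox phase may contain sandbox-internal $\mathtt{call}$/$\mathtt{ret}$ chains --- is harmless, since the gated call and gated return are the only domain-crossing commands and so no application frame is touched while the sandbox runs.
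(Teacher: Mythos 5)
Your overall route is the same as the paper's, just made explicit: the paper's entire proof is a two-line appeal to the facts that sandboxed loads/stores are guarded into $M_{\untrusted}$, that every value in $M_{\untrusted}$ is labelled $\untrusted$, and that library control flow stays inside library code (with the register-clearing springboard doing the rest implicitly); your lemmas (i)--(iii) are exactly the unwinding/lockstep elaboration of that argument, and (ii) and (iii) are fine as sketched.

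The genuine gap is the obstacle you flag yourself, and your intended resolution contradicts the paper's actual policy. In the NaCl instantiation, the only things labelled $\untrusted$ at a gated call are the registers $sp$ and $pc$, the sandbox memory $M_{\untrusted}$, and the $n'$ argument slots on the application stack; the transition context ($ctx^{\ast}$ and the block it points to) lies in $H_{\trusted}$ and is labelled $\trusted$, as is the rest of application memory. So for an arbitrary $\Psi_1'$ with $\Psi_1 =_{\mathbb{C}} \Psi_1'$, nothing forces the context cell, and in particular the \emph{library stack pointer stored at $M(ctx)$}, to agree with $\Psi_1$'s; your Lemma~(i) justification ``the new $sp$ and return slot agree by $W$'' uses $W(\Psi_1)$ where you would need the corresponding fact about $\Psi_1'$, which is exactly what is not available. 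This is not a routine detail: the springboard reads that confidential cell and installs it as the sandbox $sp$ (and copies the arguments through it with a checked store), so the second run can enter the library with a different stack pointer, or trap outright, and since library code can observe $sp$ directly the lockstep relation and the final $=_{\mathtt{call}}$/$=_{\mathtt{ret}}$ and $pc$-agreement conclusions do not follow. (The bookkeeping \emph{written by} the springboard itself---the saved application $sp$ and the pushed return address---is unproblematic, since it is computed from data on which $=_{\mathbb{C}}$ forces agreement and the library cannot write outside $M_{\untrusted}$; note, though, that it lives at different addresses in the two runs when the $ctx$ pointers differ, so your relation $\sim$ should demand corresponding values rather than agreement on fixed cells.) Closing the gap requires either strengthening the hypothesis on $\Psi_1'$ (e.g., reachability or a well-formedness assumption, which changes the statement) or relabelling the transition region as public; the paper's own one-liner only covers the in-library portion of the argument and silently passes over this point, so it offers no guidance on which repair is intended.
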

\begin{proof}
  Follows immediately from the fact that all reads and writes are guarded to be within $M_{\untrusted}$, all values in $M_{\untrusted}$ have label $\untrusted$, and all jumps remain within the library code.
\end{proof}

\begin{lemma} \label{lemma:appendix:nacl:context-in-trusted}
  The trampoline context is in $H_{\trusted}$.
\end{lemma}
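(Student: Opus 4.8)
The plan is to prove the lemma --- which unfolds to the statement that for every NaCl program $\Psi_0 \in \programs$ and every $\Psi$ reachable via $\Psi_0 \stepstar \Psi$ we have $ctx_{\Psi} \in H_{\trusted}$ (recall $ctx_{\Psi} \triangleq \Psi.M(ctx^{\ast})$) --- by induction on the length of the trace $\Psi_0 \stepstar \Psi$. A naive invariant ``$\Psi.M(ctx^{\ast}) \in H_{\trusted}$'' is not by itself inductive, because the trampoline installs the new context pointer as $ctx_{\Psi} - \operatorname{len}(\mathbb{CSR})$, so membership of the pointer alone does not entail membership of its successor. I would therefore strengthen the hypothesis to: there is a fixed region $\mathit{CtxRegion} \subseteq H_{\trusted}$, determined by $\Psi_0$ and disjoint from $M_{\untrusted}$ and from the stacks, such that $ctx_{\Psi} \in \mathit{CtxRegion}$ and every slot addressed relative to $ctx_{\Psi}$ by the layout of \figref{fig:appendix:nacl:context-layout} also lies in $\mathit{CtxRegion}$. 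The lemma is then the first conjunct. The base case is immediate from condition~5 of the NaCl program definition (together with the reservation of $\mathit{CtxRegion}$ inside $H_{\trusted}$), which gives $ctx_{\Psi_0} = \Psi_0.M(ctx^{\ast}) \in H_{\trusted}$ and fixes the initial one-slot context inside $\mathit{CtxRegion}$.

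For the inductive step I would case-split on the command executed in $\Psi \step \Psi'$. The strengthened invariant can only be violated by a write into $ctx^{\ast}$ or elsewhere into $\mathit{CtxRegion}$. Sandboxed ($\untrusted$) instructions cannot do this: by condition~1 all of their memory operations are guarded into $M_{\untrusted}$, which the invariant keeps disjoint from $\mathit{CtxRegion} \subseteq H_{\trusted}$. Application ($\trusted$) instructions touch $ctx^{\ast}$ only inside the four transition routines (the two springboards and two trampolines); for each $\cstore{}{ctx^{\ast}}{r}$ occurring there I would trace the small-step semantics of the routine to show that $r$ evaluates to $ctx_{\Psi} + \operatorname{len}(\mathbb{CSR})$ on an entry springboard, to $ctx_{\Psi} - \operatorname{len}(\mathbb{CSR})$ on the matching return trampoline, and to $ctx_{\Psi} \pm 1$ on the callback variants, and that by the invariant this address together with the context region it now designates still lies in $\mathit{CtxRegion}$. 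Every other command leaves $\Psi.M(ctx^{\ast})$ and $\mathit{CtxRegion}$ untouched, so it preserves the invariant trivially. The lemma follows by projecting onto the first conjunct.

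The main obstacle is the bookkeeping inside the transition routines: one must verify that across a complete springboard (resp.\ trampoline) the context pointer shifts by \emph{exactly} $\operatorname{len}(\mathbb{CSR})$ (resp.\ $1$), that the intermediate stores stay within $\mathit{CtxRegion}$, and that control does not re-enter arbitrary code --- in particular that no $\cinst{gatecall}$ or $\cgateret$ is executed --- before $ctx^{\ast}$ has been restored to a well-formed context. This amounts to showing that the routines behave atomically with respect to the invariant, and it is routine but tedious given the explicit routine code. I would also note (or fold into the definition of a valid NaCl program, since $ctx^{\ast}$ is the designated context cell) that application code outside the transition routines does not write into $ctx^{\ast}$ or $\mathit{CtxRegion}$; this closes the only remaining case in the induction.
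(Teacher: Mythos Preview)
The paper states this lemma without proof; it is treated as a structural fact about the NaCl setup (essentially immediate from condition~5, which places $ctx^{\ast}$ and the initial $ctx$ in $H_{\trusted}$, together with the tacit understanding that an adequate context region has been reserved there and that only the transition routines manipulate it). There is therefore no ``paper proof'' to compare against.

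Your inductive argument is a perfectly reasonable way to make this precise, and you have correctly identified the two places where the paper is silent: that the reserved context region is large enough to accommodate all the $\pm\operatorname{len}(\mathbb{CSR})$ and $\pm 1$ shifts performed by the transition routines, and that trusted application code outside those routines does not write to $ctx^{\ast}$ or the region. Both must indeed be folded into the definition of a valid NaCl program for the statement to hold as written; the paper simply leaves them implicit. Your strengthened invariant (tracking membership of the whole context region, not just the single pointer) is the right fix for the non-inductive naive hypothesis, and the case analysis you outline is the natural one. In short: you are filling in detail the paper omits, and doing so soundly.
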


\begin{lemma} \label{lemma:appendix:nacl:ctx-positive}
  $ctx \geq ctx_0$.
\end{lemma}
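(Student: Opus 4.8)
The plan is to prove $ctx \geq ctx_0$ by strengthening it to an invariant maintained along every trace out of $\Psi_0$ and inducting on the number of steps. The strengthened statement tracks the exact shape of the context region: at every reachable $\Psi$ there is a logical list $\sigma$ of ``context deltas'' with $ctx_\Psi = ctx_0 + \sum\sigma$, where each entry of $\sigma$ is either $\operatorname{len}(\mathbb{CSR})$ (contributed by an active, not-yet-returned springboard) or $1$ (contributed by an active callback springboard), and $\sigma$ is in last-in-first-out correspondence with the unmatched transition entries currently on the control stack. Since every entry of $\sigma$ is nonnegative we get $\sum\sigma \geq 0$, hence $ctx_\Psi \geq ctx_0$, with equality exactly when $\sigma$ is empty, i.e.\ when we are back in top-level application code.

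The first ingredient is that the cell at $ctx^{\ast}$ — and the cells it transitively indexes — are modified only inside the four transition sequences. By \lemref{lemma:appendix:nacl:context-in-trusted} these cells lie in $H_{\trusted}$; the library's $\mathtt{store}$ and $\mathtt{push}$ are guarded to $M_{\untrusted}$ by the NaCl program conditions, and ordinary application code writes no $\trusted$ data into this region, so between two consecutive transition sequences $ctx_\Psi$ is constant. The second ingredient is a line-by-line reading of each transition sequence, recording its net effect on $\Psi.M(ctx^{\ast})$: the springboard loads $ctx$, bumps it by $\operatorname{len}(\mathbb{CSR})$ while spilling callee-save registers, and writes it back (net $+\operatorname{len}(\mathbb{CSR})$); the trampoline symmetrically nets $-\operatorname{len}(\mathbb{CSR})$; the callback springboard and trampoline net $+1$ and $-1$. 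I would also observe that the single $\mathtt{store}$ to $ctx^{\ast}$ always moves the value monotonically in the appropriate direction, so the invariant holds at every intermediate instruction of a transition sequence, not just at its boundaries.

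With these in hand the induction is routine except in one place: the $\mathtt{store}$ to $ctx^{\ast}$ inside a (callback) trampoline, where I must show $\sigma$ is nonempty with a top entry of the right size to pop. This is exactly where well-bracketing is needed: a trampoline runs only when library code exits via a $\cgateret$ (or, for callbacks, a guarded $\cret{}$), which by NaCl's transition layout can only happen in code that was entered through the matching springboard, so the control stack carries an unmatched entry contributing the top of $\sigma$, of the correct delta. I expect this matching — that no trampoline ever executes without a preceding unmatched springboard of the same kind — to be the main obstacle, since it is essentially the well-bracketing property of NaCl's $\mathtt{gatecall}$/$\mathtt{gateret}$ implementation. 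I would either establish it by a simultaneous induction with the present invariant, or, more cleanly, factor it into a standalone well-bracketing lemma for NaCl's gate instructions and invoke it here.
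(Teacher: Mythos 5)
The paper states \lemref{lemma:appendix:nacl:ctx-positive} without any proof at all (as it does \lemref{lemma:appendix:nacl:context-in-trusted} and \lemref{lemma:appendix:nacl:pstep-nogate}), so there is no official argument to compare yours against; you are filling a gap rather than retracing one. Your route is sound in outline and is consistent with how the neighbouring proofs use the context: the proof of \lemref{lemma:appendix:nacl:context-integrity} reads off exactly the deltas you compute ($ctx_{\Psi_1} = ctx_{\Psi} + \operatorname{len}(\mathbb{CSR})$ at the springboard, $+1$ at the callback springboard, and the symmetric decrements at the two trampolines), and your LIFO invariant of nonnegative unmatched deltas is the natural strengthening that makes $ctx \geq ctx_0$ inductive.

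Two caveats, both places where you lean on facts the paper itself only assumes implicitly, and which you should make explicit. First, your justification that $ctx^{\ast}$ is untouched between transitions cites the wrong program condition: condition (2) only forbids the application from writing $\trusted$ data into \emph{sandbox} memory, whereas $ctx^{\ast}$ and the context region live in $H_{\trusted}$ (by \lemref{lemma:appendix:nacl:context-in-trusted}). The library side is indeed covered by the $M_{\untrusted}$ guards of condition (1), but for the application side you need the unstated assumption that application code never writes into the runtime's context region\dash---the same assumption the paper hides behind ``by the structure of a NaCl program'' in the callback case of \lemref{lemma:appendix:nacl:context-integrity}. State it as a hypothesis rather than deriving it from condition (2). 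Second, the matching step you identify as the main obstacle is only half derivable: for a $\cgateret$ executed by the \emph{library}, bracketing does follow from the listed conditions (the program starts in the application, gates are the only domain switches, so library code only runs at positive nesting depth, and a privilege-based argument shows the top unmatched delta then has the matching size); but nothing in the stated program conditions prevents the \emph{application} from executing an unmatched $\cgateret$ (the callback trampoline) at top level, which would drive $ctx$ to $ctx_0 - 1$ and falsify the lemma as written. So the standalone well-bracketing lemma you hope to factor out cannot be proven from conditions (1)--(6) alone; it needs an additional well-formedness assumption on application code (or the lemma must be restricted to states inside well-bracketed transitions, which is how the paper's other properties are phrased). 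With those two assumptions made explicit, your induction goes through.
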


\begin{lemma} \label{lemma:appendix:nacl:lib-p-steps-preserve-app-memory}
  If $\currentcom{\Psi_1}{c}{\untrusted}$ and $\Psi_1 \steppstar{p} \Psi_2$, then $\Psi''.M_{\trusted} = \Psi'.M_{\trusted}$.
\end{lemma}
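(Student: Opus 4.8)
The plan is to prove the statement --- reading its conclusion, which as printed references the unbound $\Psi'$ and $\Psi''$, as $\Psi_2.M_{\trusted} = \Psi_1.M_{\trusted}$, i.e.\ that running library code never alters application memory --- by induction on the number $k$ of steps in the reduction $\Psi_1 \steppstar{p} \Psi_2$. The first observation is that the hypothesis $\currentcom{\Psi_1}{c}{\untrusted}$ fixes $p$: a step $\Psi_1 \stepp{p} \Psi_1'$ requires, by definition of $\stepp{p}$, that the command at $\Psi_1$ carry privilege $p$, and since $\Psi_1.C$ is a function this forces $p = \untrusted$ (when $k = 0$ the conclusion is immediate). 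For $k > 0$ write $\Psi_1 \stepp{\untrusted} \Psi_1' \steplown{k-1} \Psi_2$; the definition of $\stepp{\untrusted}$ also gives $\currentcom{\Psi_1'}{c'}{\untrusted}$, so the induction hypothesis applies to the tail and yields $\Psi_2.M_{\trusted} = \Psi_1'.M_{\trusted}$. It therefore suffices to establish the single-step claim $\Psi_1'.M_{\trusted} = \Psi_1.M_{\trusted}$.

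For the single-step claim I would case-split on the library command $c$. The instructions $\cpop{r}{p'}$, $\cload{r}{k}{e}$, $\cmov{r}{e}$, $\cjmp{k}{e}$ and $\cret{k}$ do not touch $\Psi.M$, so nothing is required; and $\cgatecall{n}{e}$, $\cgateret$ cannot occur here, since they are precisely the domain-switching instructions, so executing one would make $\Psi_1'$ a trusted state, contradicting $\Psi_1 \stepp{\untrusted} \Psi_1'$. The three remaining cases write to memory. For $\cstore{k}{e}{e'}$ the write address is $k(n)$, and the first NaCl program condition requires library stores to be guarded with $\cod{k} \subseteq M_{\untrusted}$; since $M_{\untrusted}$ and $M_{\trusted}$ are disjoint, the updated memory agrees with $\Psi_1.M$ on $M_{\trusted}$. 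For $\cpush{p'}{e}$ the same condition forces $p' = \untrusted$, and the push rule carries the side condition $sp' \in S_{p_s}$ with $p_s \lesstrusted p' = \untrusted$; because $\trusted \nlesstrusted \untrusted$ this pins $p_s = \untrusted$, so the write lands in $S_{\untrusted} \subseteq M_{\untrusted}$, again disjoint from $M_{\trusted}$. For $\ccall{k}{e}$ the rule writes the return address at $sp' = \Psi_1.sp + 1$ under the side condition $sp' \in S_{p_s}$; here I additionally need that $\Psi_1.sp$ lies in the library stack, which by disjointness of the stacks again forces $p_s = \untrusted$ and hence the write misses $M_{\trusted}$.

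That last ingredient --- ``$\currentcom{\Psi}{\_}{\untrusted}$ implies $\Psi.sp \in S_{\untrusted}$'' --- is the main obstacle, because, unlike the store and push cases, it is a runtime invariant rather than a consequence of the static NaCl program conditions. I would establish it as a companion lemma, proved by induction over execution traces $\Psi_0 \stepstar \Psi$ in the same family as \lemref{lemma:appendix:nacl:context-in-trusted} and \lemref{lemma:appendix:nacl:ctx-positive}: the NaCl springboard installs a stack pointer in $S_{\untrusted}$ on entry to the library and the trampoline restores the application stack pointer on exit (using the saved pointers in the transition context of \figref{fig:appendix:nacl:context-layout}), while within a library run only $\cpush$, $\cpop$, $\ccall$, $\cret$ and $sp$-moves change $sp$, and the guarded library semantics keeps each of these within $S_{\untrusted}$. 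With this invariant in hand the $\ccall$ case closes exactly as the others do, completing the induction; everything else is routine bookkeeping with the disjointness of the memory regions.
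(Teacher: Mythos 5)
Your reading of the mangled conclusion as $\Psi_2.M_{\trusted} = \Psi_1.M_{\trusted}$ is the intended one, and your argument is sound in outline, but it is considerably more elaborate than the paper's own proof, which simply splits on $p$ (if $p = \trusted$ the trace must be empty because $\currentcom{\Psi_1}{c}{\untrusted}$, so $\Psi_2 = \Psi_1$; if $p = \untrusted$ every state along the trace executes an $\untrusted$ command) and then appeals in one line to ``the structure of a NaCl program'' for preservation of $M_{\trusted}$, with no induction, no per-instruction case analysis, and no auxiliary invariant. What you do differently is to make that appeal concrete: the guarded $\cinst{store}$/$\cinst{push}$/$\cinst{pop}$/$\cinst{load}$ conditions handle the cases the static program definition covers, and you correctly isolate the one memory write the NaCl program conditions say nothing about, namely the return-address store performed by $\cinst{call}$, for which you introduce a companion runtime invariant (library-mode $sp \in S_{\untrusted}$) proved by a separate trace induction; the paper has no analogue of that lemma and silently buries the $\cinst{call}$ case in ``program structure.'' The one place where your sketch needs care is exactly that companion lemma: in the base semantics $\cmov{sp}{e}$ is completely unguarded, so library code can move $sp$ out of $S_{\untrusted}$ without trapping, and the invariant does not follow from ``the guarded library semantics'' alone — you would need either an additional condition on NaCl programs or a reading of the $\cinst{call}$/$\cinst{ret}$ side condition $sp' \in S_{p_s}$ that ties $p_s$ to the executing privilege (so that a library call with $sp{+}1$ in $S_{\trusted}$ faults rather than writes). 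Since this is precisely the looseness the paper's one-line appeal also glosses over, your attempt is, if anything, more careful than the published proof; it just should flag that the $sp$-invariant (or the strengthened side-condition reading) is an assumption about the model rather than a consequence of the stated guards.
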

\begin{proof}
  There are two cases for $p$: $p = \trusted$ and $p = \untrusted$.
  If $p = \trusted$, then $\Psi_2 = \Psi_1$ and therefore trivially $\Psi_2.M_{\trusted} = \Psi_1.M_{\trusted}$.
  If $p = \untrusted$, then for all $\Psi$ such that $\Psi_1 \stepstar \Psi \stepn{+} \Psi_2$, $\Psi.C(\Psi.pc) = (\untrusted, \_)$.
  By the structure of a NaCl program, this ensures that $\Psi_2.M_{\trusted} = \Psi_1.M_{\trusted}$.
\end{proof}

\begin{lemma} \label{lemma:appendix:nacl:pstep-nogate}
  If $\Psi \steppstar{p} \Psi'$ then $\Psi.p = \Psi'.p$.
\end{lemma}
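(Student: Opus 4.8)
The plan is to prove this by induction on the number of steps in the derivation $\Psi \steppstar{p} \Psi'$, unfolding the definition of the domain-annotated step relation $\stepp{p}$ (the first rule of \figref{fig:appendix:well-bracketed-step}). Throughout I read $\Psi.p$ as the security domain $\pi_1(\Psi.C(\Psi.pc))$ of the instruction currently selected by the program counter, so the claim amounts to saying that a run built only from $p$-local steps never crosses into or out of domain $p$ --- which is exactly the intuition behind the name \emph{pstep-nogate} and is what will later let us conclude that a $\steppstar{p}$ segment contains no $\mathtt{gatecall}$/$\mathtt{gateret}$.

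For the base case the run has length $0$, so $\Psi = \Psi'$ and $\Psi.p = \Psi'.p$ holds by reflexivity. For the inductive case I split the run as $\Psi \stepp{p} \Psi_1 \steppstar{p} \Psi'$. Inverting the single step $\Psi \stepp{p} \Psi_1$ against the definition yields $\currentcom{\Psi}{c_1}{p_1}$, $\currentcom{\Psi_1}{c_2}{p_2}$, and $p_1 = p_2 = p$; in particular $\Psi.p = p$ and $\Psi_1.p = p$. Applying the induction hypothesis to $\Psi_1 \steppstar{p} \Psi'$ gives $\Psi_1.p = \Psi'.p$, and chaining these equalities gives $\Psi.p = p = \Psi_1.p = \Psi'.p$, as required. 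Two small bookkeeping remarks I would record: the annotation $p$ is the same fixed domain on every link of the chain, so no renaming is needed; and since $\stepp{p}$ requires the current instruction to lie in domain $p$ at both its source and target, no endpoint of a nonempty run can be $\error$, so $\Psi.p$ and $\Psi'.p$ are well defined whenever the length is at least $1$.

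I do not anticipate a real obstacle here: the lemma is essentially an unfolding, since the constraint $p_1 = p_2 = p$ is already baked into the premise of $\stepp{p}$. The only care required is keeping the length-$0$ and length-$\geq 1$ cases separate and observing that the first and last links each pin their respective endpoint's domain. In fact one could dispense with the induction entirely --- if the run has positive length its first step forces $\Psi.p = p$ and its last step forces $\Psi'.p = p$, and if it has length $0$ the statement is trivial --- but I would phrase it inductively to stay uniform with the other NaCl lemmas in this section.
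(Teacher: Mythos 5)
Your proposal is correct: the paper states this lemma without any proof (it is treated as immediate from the definition of $\stepp{p}$, whose premise $p_1 = p_2 = p$ pins the domain at both endpoints of every link), and your induction on the length of the run --- or the shorter first-step/last-step observation you mention --- is exactly the intended argument, with the right reading of $\Psi.p$ as the domain of the instruction at $\Psi.pc$.
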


\begin{lemma} \label{lemma:appendix:nacl:wb-flips-priv}
  If $\Psi \stepwb \Psi'$ where $\Psi \stepstar \Psi'' \step \Psi'$, then $\Psi''.p \neq \Psi.p$ and $\Psi'.p = \Psi.p$.
\end{lemma}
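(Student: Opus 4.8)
The plan is to prove \lemref{lemma:appendix:nacl:wb-flips-priv} by induction on the total number of underlying $\step$-steps, simultaneously with an auxiliary invariant. Write $\Psi.p$ for $\pi_1(\Psi.C(\Psi.pc))$, the security domain of the current command. The auxiliary claim, \emph{(Aux)}, is that a box step preserves this domain: if $\Psi \stepbox \Psi'$ then $\Psi.p = \Psi'.p$, and hence (by a trivial induction on the length of the chain) $\Psi \stepboxstar \Psi'$ implies $\Psi.p = \Psi'.p$.

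First I would discharge \emph{(Aux)} by case analysis on the $\stepbox$ derivation. If it is derived from $\Psi \stepp{q} \Psi'$, the premises of that rule already include $\currentcom{\Psi}{c_1}{q}$ and $\currentcom{\Psi'}{c_2}{q}$, so $\Psi.p = q = \Psi'.p$. If it is derived from $\Psi \stepwb \Psi'$, then this is a strictly smaller derivation, and the inductive hypothesis for the lemma gives its second conclusion, namely $\Psi'.p = \Psi.p$, directly.

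For the lemma itself, the unique rule concluding $\Psi \stepwb \Psi'$ forces a decomposition $\Psi \step \Psi_1 \stepboxstar \Psi_2 \step \Psi'$ with $\currentop{\Psi}{\cgatecall{n}{e}}$ and $\currentop{\Psi_2}{\cgateret}$; since no command is simultaneously a $\cgatecall{n}{e}$ and a $\cgateret$, the chain has at least two underlying steps, so the penultimate state $\Psi''$ of $\Psi \stepstar \Psi'' \step \Psi'$ is exactly $\Psi_2$. Next I invoke the NaCl reduction rules for the gates: executing $\cgatecall{n}{e}$ transfers control to the opposite domain (the springboard ends by jumping into $C_{\untrusted}$, the callback springboard by jumping into the import targets $I \subseteq C_{\trusted}$), so $\Psi_1.p \neq \Psi.p$; applying \emph{(Aux)} to $\Psi_1 \stepboxstar \Psi_2$ yields $\Psi_2.p = \Psi_1.p \neq \Psi.p$, which is the first conclusion $\Psi''.p \neq \Psi.p$. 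Symmetrically, executing $\cgateret$ transfers control to the opposite domain (the trampoline returns to the application location recorded at the matching gatecall; the callback trampoline ends with $\cret{C_{\untrusted}}$, whose range check confines the target to $C_{\untrusted}$), so $\Psi'.p \neq \Psi_2.p$, and since there are only two domains this forces $\Psi'.p = \Psi.p$, the second conclusion.

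The genuinely content-bearing step --- essentially the only one that is not bookkeeping about the shapes of the $\stepwb$ and $\stepbox$ rules --- is establishing that \emph{every} NaCl gate instruction flips the domain. This must be read off the implementation-specific reduction rules (the four springboard/trampoline definitions) together with the NaCl program well-formedness conditions, in particular that the jump and return range checks confine targets to a single code region ($C_{\untrusted}$ for the callback trampoline, $I \subseteq C_{\trusted}$ for the callback springboard) and that the trampoline returns through the return address recorded when the matching gatecall executed. It is also at this point that one must explicitly exclude a within-domain gated call, since a single such call would already contradict the ``$\Psi''.p \neq \Psi.p$'' half of the statement; this exclusion is again immediate from the four definitions.
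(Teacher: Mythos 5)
Your proposal is correct and follows essentially the same route as the paper: decompose the unique $\stepwb$ rule, observe by inspection of the NaCl gate reductions that $\cgatecall{n}{e}$ and $\cgateret$ flip the domain, and show the intermediate $\stepboxstar$ stretch preserves it by an induction that invokes the main statement for nested well-bracketed calls (your measure on underlying $\step$-steps plays the role of the paper's simultaneous induction on the derivation and trace length). Your explicit auxiliary invariant \emph{(Aux)} is just a repackaging of the paper's \lemref{lemma:appendix:nacl:pstep-nogate} together with the inner induction on $\Psi_1 \stepboxstar \Psi_2$ carried out inside the callback case, so the content is the same.
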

\begin{proof}
  We proceed by simultaneous induction on the well-bracketed transition $\Psi \stepwb \Psi'$ and the length of $\Psi_0 \stepstar \Psi \stepwb \Psi'$.

  \begin{itemize}
  \pfcase{No callbacks}

      We have that $\currentop{\Psi}{\cgatecall{n}{e}}$ and there exist $\Psi_1$, $\Psi_2$, and $p$ such that $\Psi \step \Psi_1 \steppstar{p} \Psi_2 \step \Psi'$ where $\currentop{\Psi_2}{\cgateret}$.
      Here $\Psi'' = \Psi_2$.
      By inspection of the reduction for $\cgatecall{n}{e}$ we know that $\Psi_1.p \neq \Psi.p$ and therefore by \lemref{lemma:appendix:nacl:pstep-nogate} $\Psi''.p \neq \Psi.p$.
      By inspection of the reduction for $\cgateret$, $\Psi'.p = \Psi.p$.

  \pfcase{Callbacks}

      We have that $\currentop{\Psi}{\cgatecall{n}{e}}$ and there exist $\Psi_1$, $\Psi_2$  such that $\Psi \step \Psi_1 \stepboxstar \Psi_2 \step \Psi'$ where $\currentop{\Psi_2}{\cgateret}$.
      Here $\Psi'' = \Psi_2$.
      By inspection of the reduction for $\cgatecall{n}{e}$ we know that $\Psi_1.p \neq \Psi.p$.
      We now show, by induction on $\Psi_1 \stepboxstar \Psi_2$, that $\Psi_2.p = \Psi_1.p \neq \Psi.p$.

      \begin{subproof}
        If there are no steps then clearly $\Psi_2.p = \Psi_1.p \neq \Psi.p$.
        There are two possible cases for $\Psi_1 \stepboxstar \Psi_3 \stepbox \Psi_4$.
        When $\Psi_3 \stepp{p} \Psi_4$, \lemref{lemma:appendix:nacl:pstep-nogate} gives us that $\Psi_4.p = \Psi_3.p = \Psi_1.p \neq \Psi.p$.
        When $\Psi_3 \stepwb \Psi_4$, our outer inductive hypothesis gives us that $\Psi_4.p = \Psi_3.p = \Psi_1.p \neq \Psi.p$.
      \end{subproof}

      Lastly, by inspection of the reduction for $\cgateret$, $\Psi'.p = \Psi.p$.
  \end{itemize}
\end{proof}

\begin{lemma} \label{lemma:appendix:nacl:stepbox-preserves-p}
  If $\Psi \stepboxstar \Psi'$, then $\Psi'.p = \Psi.p$.
\end{lemma}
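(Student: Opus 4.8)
The plan is to induct on the number of $\stepbox$ steps witnessing $\Psi \stepboxstar \Psi'$. In the base case there are zero steps, so $\Psi' = \Psi$ and the claim is immediate. For the inductive step, decompose the derivation as $\Psi \stepboxstar \Psi'' \stepbox \Psi'$; the induction hypothesis gives $\Psi''.p = \Psi.p$, so it suffices to show $\Psi'.p = \Psi''.p$, after which transitivity of equality closes the argument.

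To establish $\Psi'.p = \Psi''.p$ I would case split on which of the two rules defining $\stepbox$ justifies $\Psi'' \stepbox \Psi'$. If it is the rule $\Psi'' \stepp{p} \Psi'$, then by definition of $\stepp{p}$ both the current command of $\Psi''$ and the current command of $\Psi'$ lie in security domain $p$, so $\Psi''.p = p = \Psi'.p$. If instead it is the rule coming from a well-bracketed transition $\Psi'' \stepwb \Psi'$, then I would appeal directly to \lemref{lemma:appendix:nacl:wb-flips-priv}, whose conclusion includes exactly that the endpoint of a well-bracketed transition has the same privilege as its start, i.e., $\Psi'.p = \Psi''.p$.

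I do not expect a real obstacle: this statement is essentially the standalone extraction of the nested sub-induction already carried out inside the proof of \lemref{lemma:appendix:nacl:wb-flips-priv} (the subproof showing $\Psi_2.p = \Psi_1.p$ whenever $\Psi_1 \stepboxstar \Psi_2$). The only point needing mild care is the interaction with that lemma: a $\stepwb$ step expands into a $\cgatecall{n}{e}$ followed by a $\stepboxstar$ segment and then a $\cgateret$, so invoking \lemref{lemma:appendix:nacl:wb-flips-priv} here is legitimate precisely because that lemma was proved by an induction that already accounts for this nesting — there is no circularity. Everything else is bookkeeping with the $\stepbox$ rules and the reading of $\Psi.p$ as the privilege of the command located at $\Psi.pc$.
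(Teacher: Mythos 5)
Your proof is correct and matches the paper's own argument, which is likewise a step-count induction dispatching the $\stepp{p}$ case by the definition of same-domain steps (Lemma~\ref{lemma:appendix:nacl:pstep-nogate}) and the $\stepwb$ case by Lemma~\ref{lemma:appendix:nacl:wb-flips-priv}. Your remark about non-circularity with that lemma is also accurate, since it is proved independently by its own (simultaneous) induction.
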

\begin{proof}
  By induction, \lemref{lemma:appendix:nacl:pstep-nogate}, and \lemref{lemma:appendix:nacl:wb-flips-priv}.
\end{proof}

\begin{lemma}[Context Integrity] \label{lemma:appendix:nacl:context-integrity}
  Let $\Psi_0 \in \programs$, $\Psi_0 \stepstar \Psi$, and $\Psi \stepwb \Psi'$.
  Then
  \begin{enumerate}
  \item if $\Psi.p = \trusted$, then $\Psi.M([ctx_{\Psi_0}, ctx_{\Psi})) = \Psi'.M([ctx_{\Psi_0}, ctx_{\Psi'}))$ and $ctx_{\Psi} = ctx_{\Psi'}$,
  \item if $\Psi.p = \untrusted$, then $\Psi.M([ctx_{\Psi_0}, ctx_{\Psi}]) = \Psi'.M([ctx_{\Psi_0}, ctx_{\Psi'}])$ and $ctx_{\Psi} = ctx_{\Psi'}$.
  \end{enumerate}
\end{lemma}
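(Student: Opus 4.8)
The plan is to prove this by the same simultaneous induction used for \lemref{lemma:appendix:nacl:wb-flips-priv}: an outer induction on the derivation of $\Psi \stepwb \Psi'$ together with an induction on the length of the prefix $\Psi_0 \stepstar \Psi$. In both cases of the $\stepwb$ rule the transition decomposes as $\Psi \step \Psi_1 \stepboxstar \Psi_2 \step \Psi'$ with $\currentop{\Psi}{\cgatecall{n}{e}}$ and $\currentop{\Psi_2}{\cgateret}$. By \lemref{lemma:appendix:nacl:wb-flips-priv} and \lemref{lemma:appendix:nacl:stepbox-preserves-p} the middle segment runs entirely in the domain $p' \neq \Psi.p$ and $\Psi'.p = \Psi.p$, so the case split on $\Psi.p$ already determines which transition code is in play: for $\Psi.p = \trusted$ the gated call and return are $\oname{nacl-springboard}$ and $\oname{nacl-trampoline}$ and the middle is library ($\untrusted$) code; for $\Psi.p = \untrusted$ they are $\oname{nacl-cb-springboard}$ and $\oname{nacl-cb-trampoline}$ and the middle is application ($\trusted$) code.

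The mechanical core is then bookkeeping on the springboard/trampoline definitions. For $\Psi.p = \trusted$, unfolding $\oname{nacl-springboard}$ shows that it sets $ctx_{\Psi_1} = ctx_{\Psi} + \oname{len}(\mathbb{CSR})$ and that every write it performs to the context region lies at an address $\geq ctx_{\Psi}$ (the saved callee-save registers occupy $[ctx_\Psi, ctx_{\Psi_1})$, the saved application stack pointer occupies $ctx_{\Psi_1}$); symmetrically $\oname{nacl-trampoline}$ sets $ctx_{\Psi'} = ctx_{\Psi_2} - \oname{len}(\mathbb{CSR})$ and again writes only at addresses $\geq ctx_{\Psi_2} - \oname{len}(\mathbb{CSR})$. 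For the middle, \lemref{lemma:appendix:nacl:lib-p-steps-preserve-app-memory} gives $\Psi_2.M_{\trusted} = \Psi_1.M_{\trusted}$ on the straight-line fragments, and by \lemref{lemma:appendix:nacl:context-in-trusted} the context region sits in $H_{\trusted} \subseteq M_{\trusted}$, so $ctx_{\Psi_2} = ctx_{\Psi_1}$ and $\Psi_2.M$ agrees with $\Psi_1.M$ on $[ctx_{\Psi_0}, ctx_{\Psi_1}]$. Nested callbacks inside $\Psi_1 \stepboxstar \Psi_2$ are handled by an inner induction on $\stepboxstar$ exactly as in \lemref{lemma:appendix:nacl:wb-flips-priv}: each $\stepp{p}$ segment is covered by the preservation facts just named, and each nested $\stepwb$ segment is covered by the outer inductive hypothesis — such a segment starts from a state with privilege $\untrusted$ and context pointer $\geq ctx_{\Psi_1} > ctx_\Psi$, so clause (2) of the hypothesis preserves a superinterval of $[ctx_{\Psi_0}, ctx_\Psi)$, which is all that is needed. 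Chaining springboard, middle, and trampoline then yields $ctx_\Psi = ctx_{\Psi'}$ and preservation of $[ctx_{\Psi_0}, ctx_\Psi)$. The case $\Psi.p = \untrusted$ is the mirror image, with $\oname{nacl-cb-springboard}$ pushing a single slot ($ctx_{\Psi_1} = ctx_\Psi + 1$, writing only at $ctx_\Psi + 1$) and $\oname{nacl-cb-trampoline}$ popping it, so the preserved interval is the closed $[ctx_{\Psi_0}, ctx_\Psi]$.

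The main obstacle is the application-domain portions of the middle segment: outside the springboard and trampoline code, a raw $\stepp{\trusted}$ step is not a priori prevented from writing into the runtime-managed context region, which lives inside the application's own heap $H_{\trusted}$. Discharging this needs the well-formedness assumption — implicit in being a valid NaCl program — that trusted application code never accesses $ctx^\ast$ or the context slots except through the springboard/trampoline macros (equivalently, a global invariant that the context region below the current $ctx$ is owned by the runtime); stated precisely, this is what makes clause (2) go through and lets the inner induction treat $\stepp{\trusted}$ steps uniformly with $\stepp{\untrusted}$ ones. The only other delicate point is the off-by-one arithmetic relating the springboard's increment of $ctx^\ast$ to the trampoline's decrement, together with the initial-state condition that $ctx^\ast$ itself lies outside $[ctx_{\Psi_0}, \infty)$ so that updating it does not perturb the tracked interval — routine, but precisely the kind of low-level accounting this formalization exists to pin down.
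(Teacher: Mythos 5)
Your proposal matches the paper's proof in essentially every respect: the same simultaneous induction on the $\stepwb$ derivation and trace length, the same privilege case split via \lemref{lemma:appendix:nacl:wb-flips-priv} and \lemref{lemma:appendix:nacl:stepbox-preserves-p}, the same inner induction on $\stepboxstar$ for nested callbacks invoking the other clause of the outer hypothesis, and the same use of \lemref{lemma:appendix:nacl:lib-p-steps-preserve-app-memory}, \lemref{lemma:appendix:nacl:context-in-trusted}, and inspection of the springboard/trampoline code for the $ctx$ arithmetic. The well-formedness assumption you flag explicitly---that trusted application code never touches the context region outside the transition macros---is exactly what the paper discharges implicitly under the phrase ``by the structure of a NaCl program,'' so making it explicit is a fair refinement rather than a divergence.
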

\begin{proof}
  We proceed by mutual, simultaneous induction on the well-bracketed transition $\Psi_1 \stepwb \Psi_2$ and the length of $\Psi_0 \stepstar \Psi \stepwb \Psi'$.

  First we consider the case where $\currentcom{\Psi}{c}{\trusted}$.
  \begin{itemize}
  \pfcase{No callbacks}

    We have that $\currentop{\Psi}{\cgatecall{n}{e}}$ and there exist $\Psi_1$, $\Psi_2$, and $p$ such that $\Psi \step \Psi_1 \steppstar{p} \Psi_2 \step \Psi'$ where $\currentop{\Psi_2}{\cgateret}$.
    By \lemref{lemma:appendix:nacl:lib-p-steps-preserve-app-memory} we have that $\Psi_1.M_{\trusted} = \Psi_2.M_{\trusted}$.
    By assumption we have that $\currentcom{\Psi}{\cgatecall{n}{e}}{\trusted}$ and therefore by \lemref{lemma:appendix:nacl:wb-flips-priv} we have that $\currentcom{\Psi_2}{\cgateret}{\untrusted}$.
    By \lemref{lemma:appendix:nacl:context-in-trusted}, \lemref{lemma:appendix:nacl:ctx-positive}, and inspection of the reduction rules for $\cgatecall{n}{e}$ and $\cgateret$ we have that $\Psi.M([ctx_{\Psi_0}, ctx_{\Psi})) = \Psi'.M([ctx_{\Psi_0}, ctx_{\Psi'}))$ and $ctx_{\Psi} = ctx_{\Psi'}$.

  \pfcase{Callbacks}

    We have that $\currentop{\Psi}{\cgatecall{n}{e}}$ and there exist $\Psi_1$, $\Psi_2$  such that $\Psi \step \Psi_1 \stepboxstar \Psi_2 \step \Psi'$ where $\currentop{\Psi_2}{\cgateret}$.
    By inspection of the reduction rule for $\cgatecall{n}{e}$ we have that $ctx_{\Psi_1} = ctx_{\Psi} + \operatorname{len}(\mathbb{CSR})$.
    We now show, by induction on $\Psi_1 \stepboxstar \Psi_2$, that $ctx_{\Psi_2} = ctx_{\Psi_1}$ and $\Psi_1.M([ctx_{\Psi_0}, ctx_{\Psi_1}]) = \Psi_2.M([ctx_{\Psi_0}, ctx_{\Psi_2}])$.

    \begin{subproof}
      If there are no steps then clearly $ctx_{\Psi_2} = ctx_{\Psi_1}$ and all of $\Psi_1.M_{\trusted} = \Psi_2.M_{\trusted}$.
      There are two possible cases for $\Psi_1 \stepboxstar \Psi_3 \stepbox \Psi_4$, and notice that in both $\Psi_3.p = \Psi_1.p = \untrusted$ (by \lemref{lemma:appendix:nacl:stepbox-preserves-p}).
      When $\Psi_3 \stepp{p} \Psi_4$, \lemref{lemma:appendix:nacl:lib-p-steps-preserve-app-memory} gives us that $\Psi_3.M_{\trusted} = \Psi_4.M_{\trusted}$ and then \lemref{lemma:appendix:nacl:context-in-trusted} gives us that $ctx_{\Psi_4} = ctx_{\Psi_3} = ctx_{\Psi_1}$.
      When $\Psi_3 \stepwb \Psi_4$, case 2 of our inductive hypothesis gives us that $\Psi_1.M([ctx_{\Psi_0}, ctx_{\Psi_1}]) = \Psi_3.M([ctx_{\Psi_0}, ctx_{\Psi_3}]) = \Psi_4.M([ctx_{\Psi_0}, ctx_{\Psi_4}])$ and $ctx_{\Psi_4} = ctx_{\Psi_3} = ctx_{\Psi_1}$.
    \end{subproof}

    Finally, by \lemref{lemma:appendix:nacl:stepbox-preserves-p} we get that $\Psi_2 = \untrusted$ and then inspection of the reduction rule for $\cgateret$ gives us that $\Psi.M([ctx_{\Psi_0}, ctx_{\Psi})) = \Psi'.M([ctx_{\Psi_0}, ctx_{\Psi'}))$ and $ctx_{\Psi} = ctx_{\Psi'}$.
  \end{itemize}

  Second we consider the case where $\currentcom{\Psi}{c}{\untrusted}$.
  \begin{itemize}
  \pfcase{No callbacks}

    We have that $\currentop{\Psi}{\cgatecall{n}{e}}$ and there exist $\Psi_1$, $\Psi_2$, and $p$ such that $\Psi \step \Psi_1 \steppstar{p} \Psi_2 \step \Psi'$ where $\currentop{\Psi_2}{\cgateret}$.
    By the structure of a NaCl program we have that $ctx_{\Psi_1} = ctx_{\Psi_2}$ and $\Psi_1.M([ctx_{\Psi_0}, ctx_{\Psi_1}]) = \Psi_2.M([ctx_{\Psi_0}, ctx_{\Psi_2}])$.
    By assumption we have that $\currentcom{\Psi}{\cgatecall{n}{e}}{\untrusted}$ and therefore by \lemref{lemma:appendix:nacl:wb-flips-priv} we have that $\currentcom{\Psi_2}{\cgateret}{\trusted}$.
    By \lemref{lemma:appendix:nacl:ctx-positive} and inspection of the reduction rules for $\cgatecall{n}{e}$ and $\cgateret$ we have that $\Psi.M([ctx_{\Psi_0}, ctx_{\Psi})) = \Psi'.M([ctx_{\Psi_0}, ctx_{\Psi'}))$ and $ctx_{\Psi} = ctx_{\Psi'}$.

  \pfcase{Callbacks}

    We have that $\currentop{\Psi}{\cgatecall{n}{e}}$ and there exist $\Psi_1$, $\Psi_2$  such that $\Psi \step \Psi_1 \stepboxstar \Psi_2 \step \Psi'$ where $\currentop{\Psi_2}{\cgateret}$.
    By inspection of the reduction rule for $\cgatecall{n}{e}$ we have that $ctx_{\Psi_1} = ctx_{\Psi} + 1$.
    We now show, by induction on $\Psi_1 \stepboxstar \Psi_2$, that $ctx_{\Psi_2} = ctx_{\Psi_1}$ and $\Psi_1.M([ctx_{\Psi_0}, ctx_{\Psi_1})) = \Psi_2.M([ctx_{\Psi_0}, ctx_{\Psi_2}))$.

    \begin{subproof}
      If there are no steps then clearly $ctx_{\Psi_2} = ctx_{\Psi_1}$ and all of $\Psi_1.M_{\trusted} = \Psi_2.M_{\trusted}$.
      There are two possible cases for $\Psi_1 \stepboxstar \Psi_3 \stepbox \Psi_4$, and notice that in both $\Psi_3.p = \Psi_1.p = \trusted$ (by \lemref{lemma:appendix:nacl:stepbox-preserves-p}).
      When $\Psi_3 \stepp{p} \Psi_4$, the structure of a NaCl program gives us that $ctx_{\Psi_1} = ctx_{\Psi_2}$ and $\Psi_1.M([ctx_{\Psi_0}, ctx_{\Psi_1}]) = \Psi_2.M([ctx_{\Psi_0}, ctx_{\Psi_2}])$.
      When $\Psi_3 \stepwb \Psi_4$, case 1 of our inductive hypothesis gives us that $\Psi_1.M([ctx_{\Psi_0}, ctx_{\Psi_1})) = \Psi_3.M([ctx_{\Psi_0}, ctx_{\Psi_3})) = \Psi_4.M([ctx_{\Psi_0}, ctx_{\Psi_4}))$ and $ctx_{\Psi_4} = ctx_{\Psi_3} = ctx_{\Psi_1}$.
    \end{subproof}

    Finally, by \lemref{lemma:appendix:nacl:stepbox-preserves-p} we get that $\Psi_2 = \trusted$ and then inspection of the reduction rule for $\cgateret$ gives us that $\Psi.M([ctx_{\Psi_0}, ctx_{\Psi}]) = \Psi'.M([ctx_{\Psi_0}, ctx_{\Psi'}])$ and $ctx_{\Psi} = ctx_{\Psi'}$.
  \end{itemize}
\end{proof}

\begin{proposition}
  NaCl has callee-save register integrity.
\end{proposition}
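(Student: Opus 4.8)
The plan is to unfold the definition of $\mathcal{CSR}$\=/integrity and of $\stepwb$ for a NaCl program, and then read off the conclusion from the shapes of $\operatorname{nacl-springboard}$ and $\operatorname{nacl-trampoline}$ together with the Context Integrity lemma (\lemref{lemma:appendix:nacl:context-integrity}). Concretely, assume $\Psi_0 \in \programs$, $\pi = \Psi_0 \stepstar \Psi_1$ with $\currentcom{\Psi_1}{\_}{\trusted}$, and $\Psi_1 \stepwb \Psi_2$. A well-bracketed step begins with a gated call, so $\currentcom{\Psi_1}{\cgatecall{n}{e}}{\trusted}$ and the first step of $\Psi_1 \stepwb \Psi_2$ runs $\operatorname{nacl-springboard}(n,e)$; write it $\Psi_1 \step \Psi_a$. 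By \lemref{lemma:appendix:nacl:wb-flips-priv} the matching $\cgateret$ is executed in the library, so the final step runs $\operatorname{nacl-trampoline}$ (not the callback trampoline); write it $\Psi_b \step \Psi_2$, with $\Psi_a \stepboxstar \Psi_b$ in between. The goal $\mathcal{CSR}(\pi,\Psi_1,\Psi_2)$ unfolds to $\Psi_2.R(\mathbb{CSR}) = \Psi_1.R(\mathbb{CSR})$.

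First I would inspect $\operatorname{nacl-springboard}$: it loads the live context pointer $ctx_{\Psi_1}$ and writes each callee\=/save value $\Psi_1.R(\mathbb{CSR}_j)$ into a fixed slot $a_j$ inside the interval $[ctx_{\Psi_1},\, ctx_{\Psi_1} + \operatorname{len}(\mathbb{CSR}))$ (this region lies in $H_{\trusted}$ by \lemref{lemma:appendix:nacl:context-in-trusted} and above $ctx_{\Psi_0}$ by \lemref{lemma:appendix:nacl:ctx-positive}), updates $ctx$ to $ctx_{\Psi_a} = ctx_{\Psi_1} + \operatorname{len}(\mathbb{CSR})$, and only afterwards clears the general registers and jumps to $e$ --- so the clearing cannot disturb the already\=/saved slots. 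Dually, $\operatorname{nacl-trampoline}$ reads the then\=/current context pointer and copies the contents of exactly the slots just below it back into the callee\=/save registers, giving $\Psi_2.R(\mathbb{CSR}_j) = \Psi_b.M(a_j)$ \emph{provided} $ctx_{\Psi_b} = ctx_{\Psi_a}$ (modulo the off\=/by\=/one bookkeeping between the storing and the restoring loop, which I would check once).

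The heart of the argument is then the preservation step: $\Psi_a.M(a_j) = \Psi_b.M(a_j)$ for every $j$, and $ctx_{\Psi_b} = ctx_{\Psi_a}$. This is precisely the invariant that already appears inside the proof of \lemref{lemma:appendix:nacl:context-integrity} --- the inner induction over $\Psi_a \stepboxstar \Psi_b$ showing that the context region up to and including the live context pointer, and the pointer itself, are untouched across $\stepbox$\=/steps. Library\=/privileged steps preserve all of $M_{\trusted}$ by \lemref{lemma:appendix:nacl:lib-p-steps-preserve-app-memory}, while each nested well\=/bracketed callback only touches context memory at or above the live pointer ($\operatorname{nacl-cb-springboard}$ pushes a fresh frame strictly above $ctx_{\Psi_a}$ and $\operatorname{nacl-cb-trampoline}$ pops it), so the outer call's saved region $\{a_j\}$ is never in the footprint. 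I would either invoke this sub\=/invariant directly from the Context Integrity proof or first factor it out as a standalone lemma. Chaining springboard$\,\to\,$preservation$\,\to\,$trampoline then yields $\Psi_2.R(\mathbb{CSR}) = \Psi_1.R(\mathbb{CSR})$, i.e.\ callee\=/save register integrity.

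The main obstacle I anticipate is exactly the callback bookkeeping: unlike library steps, the trusted code run during a callback may legitimately write anywhere in $M_{\trusted}$, so ruling out corruption of the outer call's saved slots depends entirely on the frame discipline of $\operatorname{nacl-cb-springboard}$/$\operatorname{nacl-cb-trampoline}$ and on tracking how $ctx$ grows and shrinks through arbitrarily deep nesting --- which is why the clean way in is to lean on the already\=/proved Context Integrity statement rather than redo that induction. The remaining work (distinguishing the open\=/ vs.\ closed\=/interval cases of \lemref{lemma:appendix:nacl:context-integrity} for the $\trusted$ and $\untrusted$ sides, and matching the springboard's write addresses to the trampoline's read addresses) is routine once that invariant is available.
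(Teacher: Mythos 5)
Your proposal matches the paper's proof, which is stated in one line as ``follows from Context Integrity (\lemref{lemma:appendix:nacl:context-integrity}), \lemref{lemma:appendix:nacl:wb-flips-priv}, and inspection of the reduction rules for $\cgatecall{n}{e}$ and $\cgateret$''\dash---you simply spell out the elaboration the paper leaves implicit, including the observation that the preservation of the saved-CSR slots across the intervening $\stepboxstar$ segment is exactly the inner induction (library $p$-steps via \lemref{lemma:appendix:nacl:lib-p-steps-preserve-app-memory}, nested callbacks via the $\untrusted$ case of Context Integrity) already carried out in the Context Integrity proof. No gaps; this is the same argument at a finer level of detail.
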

\begin{proof}
  Follows from \lemref{lemma:appendix:nacl:context-integrity}, \lemref{lemma:appendix:nacl:wb-flips-priv}, and inspection of the reduction rules for $\cgatecall{n}{e}$ and $\cgateret$.
\end{proof}

\begin{lemma} \label{lemma:appendix:nacl:return-address-integrity}
  Let $\Psi_0 \in \programs$, $\pi = \Psi_0 \stepstar \Psi$, and $\Psi \stepwb \Psi'$, then $\Psi'.M(\operatorname{return-address}_{\trusted}(\pi)) = \Psi.M(\operatorname{return-address}_{\trusted}(\pi))$.
\end{lemma}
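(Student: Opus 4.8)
The plan is to prove this in the same style as \lemref{lemma:appendix:nacl:context-integrity}: by simultaneous induction on the derivation of $\Psi \stepwb \Psi'$ and on the length of $\Psi_0 \stepstar \Psi$, splitting first on whether $\Psi$ runs $\trusted$ or $\untrusted$ code and then on whether the well-bracketed transition contains callbacks. Writing $R \triangleq \operatorname{return-address}_{\trusted}(\pi)$ and decomposing $\Psi \stepwb \Psi'$ as $\Psi \step \Psi_1 \stepboxstar \Psi_2 \step \Psi'$ with $\currentop{\Psi}{\cgatecall{n}{e}}$ and $\currentop{\Psi_2}{\cgateret}$, it suffices to show that each of the three segments leaves the memory unchanged on the fixed address set $R$.

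Two structural facts handle the boundary segments. First, $R \subseteq M_{\trusted}$, and in fact $R$ is disjoint both from $M_{\untrusted}$ and from the NaCl context region (which lives in $H_{\trusted}$, by \lemref{lemma:appendix:nacl:context-in-trusted} and \lemref{lemma:appendix:nacl:ctx-positive}): every address in $R$ was introduced as $\Psi''.sp + 1$ at a state where $\trusted$ code executes a $\cinst{call}$ or $\cinst{gatecall}$, and by the NaCl program structure the caller's stack pointer lies in the application's own stack region. Second, the springboard/cb-springboard and trampoline/cb-trampoline write only to the context region, to the copied-argument area on the library stack, to $ctx^{\ast}$, and to registers --- none of which meets $R$. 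Hence the two boundary steps preserve the memory on $R$.

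The body $\Psi_1 \stepboxstar \Psi_2$ is the substantive part. By \lemref{lemma:appendix:nacl:stepbox-preserves-p} it stays in the domain of $\Psi_1$, which by \lemref{lemma:appendix:nacl:wb-flips-priv} is the complement of $\Psi.p$, so I would peel off $\stepbox$ atoms one at a time. A $\stepp{\untrusted}$ atom preserves all of $M_{\trusted}$ by the definition of a NaCl program (library memory operations are guarded to $M_{\untrusted}$; cf.\ \lemref{lemma:appendix:nacl:lib-p-steps-preserve-app-memory}), hence preserves the memory on $R$. A nested $\stepwb$ atom $\Psi_a \stepwb \Psi_b$ is discharged by the induction hypothesis, which gives preservation on $\operatorname{return-address}_{\trusted}(\pi_a)$ for the longer prefix $\pi_a = \Psi_0 \stepstar \Psi_a$; this reduces to the claim for $R$ once one knows $R \subseteq \operatorname{return-address}_{\trusted}(\pi_a)$. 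A $\stepp{\trusted}$ atom occurs only when $\Psi$ runs $\untrusted$ code and the transition is a callback into the application, and there the $\trusted$ code runs on the fresh portion of $S_{\trusted}$ that the cb-springboard installs above the saved application stack pointer, so it writes only above $\max R$.

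The main obstacle is making those last two bullets precise --- controlling which addresses the application code executing \emph{inside} the transition can reach. I would strengthen the induction hypothesis with a positional invariant, proved simultaneously with the conclusion: every address of $\operatorname{return-address}_{\trusted}(\pi)$ lies at or below the application stack pointer recorded in the innermost NaCl context at $\Psi$ (and at or below $\Psi.sp$ when $\Psi.p = \trusted$), and that recorded pointer is uncorrupted across the transition --- the latter being essentially \lemref{lemma:appendix:nacl:context-integrity}. Given the invariant, the cb-springboard installs an application stack pointer strictly above every slot of $R$, so every $\cinst{call}$, $\cinst{ret}$, $\cinst{push}$, $\cinst{pop}$, and $\cinst{store}$ of the callback touches only addresses above $\max R$; this both preserves the memory on $R$ directly and forces any nested callback to add to or delete from $\operatorname{return-address}_{\trusted}$ only above $\max R$, giving $R \subseteq \operatorname{return-address}_{\trusted}(\pi_a)$. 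What remains --- reconciling this invariant with the exact stack offsets generated by the NaCl springboard and trampoline code, and with the fact that $\operatorname{return-address}_{\trusted}$ records $\Psi.sp + 1$ at a $\cinst{gatecall}$ --- is bounded, calculation-level bookkeeping rather than a conceptual gap.
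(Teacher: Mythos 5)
Your proposal takes essentially the same route as the paper's proof: the same simultaneous induction on the well-bracketed transition and the trace length, the same split on whether $\Psi$ is in $\trusted$ or $\untrusted$ code and on callbacks vs.\ no callbacks, and the same supporting lemmas (\lemref{lemma:appendix:nacl:lib-p-steps-preserve-app-memory}, \lemref{lemma:appendix:nacl:context-integrity}, \lemref{lemma:appendix:nacl:stepbox-preserves-p}, \lemref{lemma:appendix:nacl:wb-flips-priv}), including discharging nested $\stepwb$ atoms via the induction hypothesis for the longer prefix together with $R \subseteq \operatorname{return-address}_{\trusted}(\pi_a)$. Your explicit positional invariant simply spells out what the paper leaves to ``the structure of a NaCl program'' (pushes land above the recorded return addresses, and callback application code stays above the saved application stack pointer), so the argument is correct and matches the paper's.
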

\begin{proof}
  First we consider the case where $\currentcom{\Psi}{c}{\trusted}$.
  \begin{itemize}
  \pfcase{No callbacks}

    \sloppy
    We have that $\currentop{\Psi}{\cgatecall{n}{e}}$ and there exist $\Psi_1$, $\Psi_2$, and $p$ such that $\Psi \step \Psi_1 \steppstar{p} \Psi_2 \step \Psi'$ where $\currentop{\Psi_2}{\cgateret}$.
    By inspection of the reduction rule for $\cgatecall{n}{e}$ we see that $\Psi_1.M(\Psi.sp + 1) = \Psi.pc + 1$.
    This is adding to the top of the stack, so by the structure of a NaCl program we have that $\Psi_1.M(\operatorname{return-address}_{\trusted}(\pi)) = \Psi.M(\operatorname{return-address}_{\trusted}(\pi))$.
    By \lemref{lemma:appendix:nacl:lib-p-steps-preserve-app-memory} we have that $\Psi_1.M_{\trusted} = \Psi_2.M_{\trusted}$ and therefore \lemref{lemma:appendix:nacl:context-in-trusted} gives us that $ctx_{\Psi_2} = ctx_{\Psi_1}$ and $\Psi_2.M(\operatorname{return-address}_{\trusted}(\pi)) = \Psi_1.M(\operatorname{return-address}_{\trusted}(\pi))$.
    If we inspect the trampoline code we see that, right before we execute the $\cret{}$, we have set $sp$ to $\Psi_2.M(ctx_{\Psi_2}) = \Psi_1.M(ctx_{\Psi_1}) = \Psi.sp + 1$.
    Thus, after returning the only part of the application stack that we modify is $\Psi.sp + 1$.
    This, and the fact that $\Psi_2.M(\operatorname{return-address}_{\trusted}(\pi)) = \Psi.M(\operatorname{return-address}_{\trusted}(\pi))$ gives us that $\Psi.M(\operatorname{return-address}_{\trusted}(\pi)) = \Psi'.M(\operatorname{return-address}_{\trusted}(\pi))$.

  \pfcase{Callbacks}

    We have that $\currentop{\Psi}{\cgatecall{n}{e}}$ and there exist $\Psi_1$, $\Psi_2$  such that $\Psi \step \Psi_1 \stepboxstar \Psi_2 \step \Psi'$ where $\currentop{\Psi_2}{\cgateret}$.
    By inspection of the reduction rule for $\cgatecall{n}{e}$ we see that $\Psi_1.M(\Psi.sp + 1) = \Psi.pc + 1$.
    This is adding to the top of the stack, so by the structure of a NaCl program we have that $\Psi_1.M(\operatorname{return-address}_{\trusted}(\pi)) = \Psi.M(\operatorname{return-address}_{\trusted}(\pi))$.
    We now show, by induction on $\Psi_1 \stepboxstar \Psi_2$ that $\Psi_2.M(ctx_{\Psi_2}) = \Psi.sp + 1$ and $\Psi_2.M(\operatorname{return-address}_{\trusted}(\pi)) = \Psi_1.M(\operatorname{return-address}_{\trusted}(\pi))$.

    \begin{subproof}
      If there are no steps then $\Psi_2 = \Psi_1$ and both goals hold immediately.
      There are two possible cases for $\Psi_1 \stepboxstar \Psi_3 \stepbox \Psi_4$ and notice that in both $\Psi_3.p = \Psi_1.p = \untrusted$ (by \lemref{lemma:appendix:nacl:stepbox-preserves-p}).
      If $\Psi_3 \stepp{p} \Psi_4$ then \lemref{lemma:appendix:nacl:lib-p-steps-preserve-app-memory} gives us that $\Psi_4.M_{\trusted} = \Psi_3.M_{\trusted}$ and our goal holds (as all of $\operatorname{return-address}_{\trusted}(\pi)$ is in $S_{\trusted}$).
      If $\Psi_3 \stepwb \Psi_4$, then \lemref{lemma:appendix:nacl:context-integrity} gives us that $ctx_{\Psi_3} = ctx_{\Psi_4}$ and $\Psi_3.M([ctx_{\Psi_0}, ctx_{\Psi_3}]) = \Psi_4.M([ctx_{\Psi_0}, ctx_{\Psi_4}])$ and therefore that $\Psi_4.M(ctx_{\Psi_4}) = \Psi.sp + 1$.
      $\operatorname{return-address}_{\trusted}(\Psi_0 \stepstar \Psi_3) = \operatorname{return-address}_{\trusted}(\pi) \uplus \Psi.sp + 1$, so our inductive hypothesis gives us that $\Psi_4.M(\operatorname{return-address}_{\trusted}(\pi)) = \Psi_3.M(\operatorname{return-address}_{\trusted}(\pi))$.
    \end{subproof}
  \end{itemize}

  Second we consider the case where $\currentcom{\Psi}{c}{\untrusted}$.
  \begin{itemize}
  \pfcase{No callbacks}

    We have that $\currentop{\Psi}{\cgatecall{n}{e}}$ and there exist $\Psi_1$, $\Psi_2$, and $p$ such that $\Psi \step \Psi_1 \steppstar{p} \Psi_2 \step \Psi'$ where $\currentop{\Psi_2}{\cgateret}$.
    By inspection of the reduction rule for $\cgatecall{n}{e}$ we see that $\Psi_1.M(\operatorname{return-address}_{\trusted}(\pi)) = \Psi.M(\operatorname{return-address}_{\trusted}(\pi))$.
    By the structure of a NaCl program we have that any call stack elements that are added during the callback will be popped before the $\cgateret$.
    Thus, $\Psi_2.M(\operatorname{return-address}_{\trusted}(\pi)) = \Psi_1.M(\operatorname{return-address}_{\trusted}(\pi))$.
    Inspection of the reduction rule for $\cgateret$ then gives us that $\Psi'.M(\operatorname{return-address}_{\trusted}(\pi)) = \Psi_2.M(\operatorname{return-address}_{\trusted}(\pi)) = \Psi.M(\operatorname{return-address}_{\trusted}(\pi))$.

  \pfcase{Callbacks}

    We have that $\currentop{\Psi}{\cgatecall{n}{e}}$ and there exist $\Psi_1$, $\Psi_2$  such that $\Psi \step \Psi_1 \stepboxstar \Psi_2 \step \Psi'$ where $\currentop{\Psi_2}{\cgateret}$.
    By inspection of the reduction rule for $\cgatecall{n}{e}$ we see that $\Psi_1.M(\operatorname{return-address}_{\trusted}(\pi)) = \Psi.M(\operatorname{return-address}_{\trusted}(\pi))$.
    We now show, by induction on $\Psi_1 \stepboxstar \Psi_2$ that $\Psi_2.M(\operatorname{return-address}_{\trusted}(\pi)) = \Psi_1.M(\operatorname{return-address}_{\trusted}(\pi))$.

    \begin{subproof}
      \sloppy
      If there are no steps then $\Psi_2 = \Psi_1$ and the goal holds immediately.
      There are two possible cases for $\Psi_1 \stepboxstar \Psi_3 \stepbox \Psi_4$ and notice that in both $\Psi_3.p = \Psi_1.p = \trusted$ (by \lemref{lemma:appendix:nacl:stepbox-preserves-p}).
      If $\Psi_3 \stepp{p} \Psi_4$ then the structure of a NaCl program gives us that any call stack elements that are added during the callback will be popped before the $\cgateret$, and therefore our inductive invariant is maintained.
      If $\Psi_3 \stepwb \Psi_4$, then notice that $\operatorname{return-address}_{\trusted}(\Psi_0 \stepstar \Psi_3) = \operatorname{return-address}_{\trusted}(\pi)$, so our inductive hypothesis gives us that $\Psi_4.M(\operatorname{return-address}_{\trusted}(\pi)) = \Psi_3.M(\operatorname{return-address}_{\trusted}(\pi))$.
    \end{subproof}

    Inspection of the reduction rule for $\cgateret$ then gives us that $\Psi'.M(\operatorname{return-address}_{\trusted}(\pi)) = \Psi_2.M(\operatorname{return-address}_{\trusted}(\pi)) = \Psi.M(\operatorname{return-address}_{\trusted}(\pi))$.
  \end{itemize}
\end{proof}

\begin{proposition}
  NaCl has return address integrity.
\end{proposition}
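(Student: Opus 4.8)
The plan is to instantiate the definition of $\mathcal{I}$-integrity with $\mathcal{I} = \mathcal{RA}$. Fix a NaCl program $\Psi_0 \in \programs$, a trace $\pi = \Psi_0 \stepstar \Psi_1$ with $\currentcom{\Psi_1}{\_}{\trusted}$, and a well-bracketed gated call $\Psi_1 \stepwb \Psi_2$; we must establish the three conjuncts of $\mathcal{RA}(\pi, \Psi_1, \Psi_2)$. The third conjunct, $\Psi_1.M(\operatorname{return-address}_{\trusted}(\pi)) = \Psi_2.M(\operatorname{return-address}_{\trusted}(\pi))$, is exactly \lemref{lemma:appendix:nacl:return-address-integrity}, whose statement imposes no condition on the privilege of $\Psi_1$; so the only remaining work is to show $\Psi_2.pc = \Psi_1.pc + 1$ and $\Psi_2.sp = \Psi_1.sp$.

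For these two conjuncts, I would unfold $\Psi_1 \stepwb \Psi_2$ as $\Psi_1 \step \Psi_1' \stepboxstar \Psi_2' \step \Psi_2$ with $\currentop{\Psi_1}{\cgatecall{n}{e}}$ and $\currentop{\Psi_2'}{\cgateret}$. Since $\currentcom{\Psi_1}{\_}{\trusted}$, \lemref{lemma:appendix:nacl:wb-flips-priv} gives $\currentcom{\Psi_2'}{\cgateret}{\untrusted}$, so $\Psi_1$ is the application executing $\operatorname{nacl-springboard}(n, e)$ and $\Psi_2'$ is the library executing $\operatorname{nacl-trampoline}$ --- the matching pair. Inspecting the springboard, once it has copied off the $n$ stack arguments it records the incoming application stack pointer (the slot just above the arguments, which is the one holding the return address $\Psi_1.pc + 1$) at a fixed offset from $ctx$; by \lemref{lemma:appendix:nacl:context-integrity} applied at $\Psi_1 \stepwb \Psi_2$ --- which already performs the induction over $\stepboxstar$ needed to carry the context region past any intervening callbacks --- both $ctx$ and that context slot are unchanged when control reaches $\Psi_2'$, and by the structure of a NaCl program the return-address slot itself is likewise not overwritten in between (the same argument used in \lemref{lemma:appendix:nacl:return-address-integrity}). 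Reading the trampoline, it then reloads the saved stack pointer into $sp$ and its concluding $\cret{}$ reads the preserved return-address slot and jumps to $\Psi_1.pc + 1$; tracking the calling-convention stack arithmetic through the two stubs then yields $\Psi_2.sp = \Psi_1.sp$ and $\Psi_2.pc = \Psi_1.pc + 1$.

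The step I expect to be the main obstacle is precisely that last bit of low-level bookkeeping: lining up the stack-pointer decrements performed while copying the $n$ arguments with the offset at which the springboard stashes the application stack pointer relative to $ctx$, and with the decrement performed by the trampoline's closing $\cret{}$, so that the restored $sp$ comes out to be exactly $\Psi_1.sp$ (and not off by one or off by $n$) and the recovered $pc$ is exactly $\Psi_1.pc + 1$. The genuinely delicate reasoning --- carrying these invariants across arbitrarily nested callbacks between $\Psi_1'$ and $\Psi_2'$ --- has already been discharged inside \lemref{lemma:appendix:nacl:context-integrity} and \lemref{lemma:appendix:nacl:return-address-integrity}, so once those are available the proposition should reduce to assembling them with a direct inspection of $\operatorname{nacl-springboard}$ and $\operatorname{nacl-trampoline}$, in close analogy with the callee-save register integrity proposition immediately above.
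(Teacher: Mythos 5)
Your overall route is the paper's: \lemref{lemma:appendix:nacl:return-address-integrity} discharges the memory conjunct, and the $sp$/$pc$ conjuncts are obtained from the springboard's saving of $\Psi_1.sp+1$ into the context region together with an inspection of the trampoline. The gap is in how you carry that saved slot to the moment of the $\cgateret$. You invoke \lemref{lemma:appendix:nacl:context-integrity} ``at $\Psi_1 \stepwb \Psi_2$'' and lean on the induction performed \emph{inside its proof}; but you may only use the lemma's statement, which relates the two endpoints $\Psi_1$ and $\Psi_2$ and says nothing about the intermediate state $\Psi_2'$ at which the trampoline actually reads $ctx$ and the stashed application stack pointer. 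Moreover, since $\Psi_1$ is in $\trusted$, the applicable case of that lemma only preserves the half-open interval $[ctx_{\Psi_0}, ctx_{\Psi_1})$, which does not even contain the relevant slot: the springboard stores $\Psi_1.sp+1$ at $ctx_{\Psi_1'} = ctx_{\Psi_1} + \operatorname{len}(\mathbb{CSR})$, above that interval. So as written this step does not go through, and the same objection applies to your parallel claim that the return-address slot $\Psi_1.sp+1$ is untouched ``by the same argument as'' \lemref{lemma:appendix:nacl:return-address-integrity}: that lemma's statement only protects $\operatorname{return-address}_{\trusted}(\pi)$, which does not contain $\Psi_1.sp+1$.

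The missing piece, which is exactly what the paper supplies, is an explicit induction over the segment $\Psi_1' \stepboxstar \Psi_2'$ maintaining the two invariants $M(ctx) = \Psi_1.sp+1$ and $M(\Psi_1.sp+1) = \Psi_1.pc+1$. Plain library steps preserve both by \lemref{lemma:appendix:nacl:lib-p-steps-preserve-app-memory} together with \lemref{lemma:appendix:nacl:context-in-trusted}; for a nested callback $\Psi_3 \stepwb \Psi_4$ (which begins in $\untrusted$ by \lemref{lemma:appendix:nacl:stepbox-preserves-p}), the $\untrusted$ case of \lemref{lemma:appendix:nacl:context-integrity}\dash---whose \emph{closed} interval $[ctx_{\Psi_0}, ctx_{\Psi_3}]$ does include the slot\dash---preserves the first invariant, and \lemref{lemma:appendix:nacl:return-address-integrity} applied to the \emph{extended} trace $\Psi_0 \stepstar \Psi_3$, in which $\Psi_1.sp+1$ now does belong to the return-address set, preserves the second. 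The paper additionally wraps this in a simultaneous induction on the well-bracketed step and the trace length so that the nested-callback case is well-founded. Once that induction is in place, your concluding trampoline inspection and the stack arithmetic you flag as the main obstacle are routine and coincide with the paper's final step.
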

\begin{proof}
  We have that $\Psi_0 \in \programs$, $\pi = \Psi_0 \stepstar \Psi$, $\Psi.p = \trusted$, and $\Psi \stepwb \Psi'$ and wish to show that $\Psi.M(\operatorname{return-address}_{\trusted}(\pi)) = \Psi'.M(\operatorname{return-address}_{\trusted}(\pi))$, $\Psi'.sp = \Psi.sp$, and $\Psi'.pc = \Psi.pc + 1$.

  \lemref{lemma:appendix:nacl:return-address-integrity} gives us that $\Psi.M(\operatorname{return-address}_{\trusted}(\pi)) = \Psi'.M(\operatorname{return-address}_{\trusted}(\pi))$.
  We proceed by simultaneous induction on the well-bracketed transition $\Psi \stepwb \Psi'$ and the length of $\Psi_0 \stepstar \Psi \stepwb \Psi'$ to show that $\Psi'.sp = \Psi.sp$, and $\Psi'.pc = \Psi.pc + 1$.

  \begin{itemize}
  \pfcase{No callbacks}

    We have that $\currentop{\Psi}{\cgatecall{n}{e}}$ and there exist $\Psi_1$, $\Psi_2$, and $p$ such that $\Psi \step \Psi_1 \steppstar{p} \Psi_2 \step \Psi'$ where $\currentop{\Psi_2}{\cgateret}$.
    By inspection of the reduction rule for $\cgatecall{n}{e}$ we see that $\Psi_1.M(\Psi.sp + 1) = \Psi.pc + 1$ and $\Psi_1.M(ctx_{\Psi_1}) = \Psi.sp + 1$.
    By \lemref{lemma:appendix:nacl:lib-p-steps-preserve-app-memory} we have that $\Psi_1.M_{\trusted} = \Psi_2.M_{\trusted}$ and therefore \lemref{lemma:appendix:nacl:context-in-trusted} gives us that $ctx_{\Psi_2} = ctx_{\Psi_1}$.
    If we inspect the trampoline code we see that, right before we execute the $\cret{}$, we have set $sp$ to $\Psi_2.M(ctx_{\Psi_2}) = \Psi_1.M(ctx_{\Psi_1}) = \Psi.sp + 1$.
    Thus, after returning we have that $\Psi'.pc = \Psi.pc + 1$, $\Psi'.sp = \Psi.sp$.

  \pfcase{Callbacks}

    We have that $\currentop{\Psi}{\cgatecall{n}{e}}$ and there exist $\Psi_1$, $\Psi_2$  such that $\Psi \step \Psi_1 \stepboxstar \Psi_2 \step \Psi'$ where $\currentop{\Psi_2}{\cgateret}$.
    By inspection of the reduction rule for $\cgatecall{n}{e}$ we see that $\Psi_1.M(\Psi.sp + 1) = \Psi.pc + 1$ and $\Psi_1.M(ctx_{\Psi_1}) = \Psi.sp + 1$.
    We now show, by induction on $\Psi_1 \stepboxstar \Psi_2$ that $\Psi_2.M(\Psi.sp + 1) = \Psi.pc + 1$ and $\Psi_2.M(ctx_{\Psi_2}) = \Psi.sp + 1$.

    \begin{subproof}
      If there are no steps then $\Psi_2 = \Psi_1$ and both hold immediately.
      There are two possible cases for $\Psi_1 \stepboxstar \Psi_3 \stepbox \Psi_4$ and notice that in both $\Psi_3.p = \Psi_1.p = \untrusted$ (by \lemref{lemma:appendix:nacl:stepbox-preserves-p}).
      If $\Psi_3 \stepp{p} \Psi_4$ then \lemref{lemma:appendix:nacl:lib-p-steps-preserve-app-memory} gives us that $\Psi_4.M_{\trusted} = \Psi_3.M_{\trusted}$ and both hold (as the invariants are on $M_{\trusted}$).
      If $\Psi_3 \stepwb \Psi_4$, then \lemref{lemma:appendix:nacl:context-integrity} gives us that $ctx_{\Psi_3} = ctx_{\Psi_4}$ and $\Psi_3.M([ctx_{\Psi_0}, ctx_{\Psi_3}]) = \Psi_4.M([ctx_{\Psi_0}, ctx_{\Psi_4}])$ and therefore that $\Psi_4.M(ctx_{\Psi_4}) = \Psi.sp + 1$.
      $\operatorname{return-address}_{\trusted}(\Psi_0 \stepstar \Psi_3) = \operatorname{return-address}_{\trusted}(\pi) \uplus \Psi.sp + 1$ so \lemref{lemma:appendix:nacl:return-address-integrity} gives us that $\Psi_4.M(\Psi.sp + 1) = \Psi_3.M(\Psi.sp + 1) = \Psi.pc + 1$.
    \end{subproof}

    Finally, if we inspect the trampoline code we see that, right before we execute the $\cret{}$, we have set $sp$ to $\Psi_2.M(ctx_{\Psi_2}) = \Psi.sp + 1$.
    Thus, after returning we have that $\Psi'.pc = \Psi.pc + 1$, $\Psi'.sp = \Psi.sp$.
  \end{itemize}
\end{proof}
\section{Overlay Semantics}
\label{appendix:overlay}

\figref{fig:appendix:overlay:operational},
\figref{fig:appendix:overlay:operational-aux-judgments}, and
\figref{fig:appendix:overlay:operational-aux-definitions} define the overlay
monitor operational semantics.
The semantics are parameterized by a confidentiality policy $\mathbb{C}$.

\begin{center}
  \begin{tabular}{>{$}r<{$} >{$}c<{$} >{$}r<{$} >{$}c<{$} >{$}l<{$}}
    \vals & \ni & \oc{v} & \bnfdef & \val{n}{p} \\

    \frames & \ni & \oc{\SF} & \bnfdef &
      \begin{array}[t]{lllll}
        \{
        & \mathit{base} & \bnftypes & \nats \\
        & \mathit{ret\mbox{-}addr\mbox{-}loc} & \bnftypes & \nats \\
        & \mathit{csr\mbox{-}vals} & \bnftypes & \powerset{\regs \times \nats} & \}
      \end{array} \\

    \functions & \ni & \oc{F} & \bnfdef &
      \begin{array}[t]{lllll}
        \{
        & \mathit{instrs} & \bnftypes & \nats \rightharpoonup \commands \\
        & \mathit{entry} & \bnftypes & \nats \\
        & \mathit{type} & \bnftypes & \nats & \}
      \end{array} \\

    \ostates & \ni & \oc{\Phi} & \bnfdef & \oerror \\
    & & & \bnfalt &
      \begin{array}[t]{lllll}
        \{
        & \Psi & \bnftypes & \states \\
        & \mathit{funcs} & \bnftypes & \nats \rightharpoonup \functions \\
        & \mathit{stack} & \bnftypes & [\frames] & \}
      \end{array}
  \end{tabular}
  \captionof{figure}{Overlay Extended Syntax}
  \label{fig:appendix:syntax:overlay}
\end{center}

\begin{center}
  \judgmentHead{}{\currentop{\Psi}{c} \ostep \Psi'}
  \begin{mathpar}
    \inferrule
    {
      \val{n}{p_e} = \oimmval{\oc{\Phi}}{e}
      \\ n' = k(n)
      \\ sp' = \oc{\Phi}.sp + 1
      \\ M' = \oc{\Phi}.M[sp' \mapsto \oc{\Phi}.pc + 1]
      \\\\ \oname{typechecks}(\oc{\Phi}, n', sp')
      \\ \oc{\SF} = \oname{new-frame}(\oc{\Phi}, n', sp')
      \\ p_e \lesstrusted \untrusted
    }
    {\currentcom{\oc{\Phi}}{\ccall{k}{e}}{\untrusted} \ostep \oc{\Phi}[\mathit{stack} \assign [\oc{\SF}] \concat \oc{\Phi}.\mathit{stack}, pc \assign n', sp \assign sp', M \assign M'] }

    \inferrule
    {
      \oname{is-ret-addr-loc}(\oc{\Phi}, \oc{\Phi}.sp)
      \\ \natval{n} = \oc{\Phi}.M(\oc{\Phi}.sp)
      \\ n' = k(n)
      \\\\ \oname{csr-restored}(\oc{\Phi})
      \\ \oc{\Phi'} = \oname{pop-frame}(\oc{\Phi})
    }
    {\currentcom{\oc{\Phi}}{\cret{k}}{\untrusted} \ostep \oc{\Phi'}[pc \assign n', sp \assign \oc{\Phi}.sp - 1]}

    \inferrule
    {
      \natval{n'} = \oimmval{\oc{\Phi}}{e}
      \\ sp' = \oc{\Phi}.sp + 1
      \\ M' = \oc{\Phi}.M[sp' \mapsto \oc{\Phi}.pc + 1]
      \\\\ n' \in I
      \\ \oname{typechecks}(\oc{\Phi}, n', sp')
      \\ \oname{args-secure}(\oc{\Phi}, sp', n)
      \\ \oc{\SF} = \oname{new-frame}(\oc{\Phi}, n', sp')
    }
    {\currentcom{\oc{\Phi}}{\cgatecall{n}{e}}{\untrusted} \ostep \oc{\Phi}[\mathit{stack} \assign [\oc{\SF}] \concat \oc{\Phi}.\mathit{stack}, pc \assign n', sp \assign sp', M \assign M']}

    \inferrule
    {
      \oc{\Phi'} = \oname{classify}_{\mathbb{C}}(\oc{\Phi})
      \\ \natval{n'} = \oimmval{\oc{\Phi'}}{e}
      \\ sp' = \oc{\Phi'}.sp + 1
      \\\\ M' = \oc{\Phi'}.M[sp' \mapsto \oc{\Phi'}.pc + 1]
      \\ \oc{\SF} = \oname{new-frame}(\oc{\Phi'}, n', sp')
    }
    {\currentcom{\oc{\Phi}}{\cgatecall{n}{e}}{\trusted} \ostep \oc{\Phi'}[\mathit{stack} \assign [\oc{\SF}] \concat \oc{\Phi'}.\mathit{stack}, pc \assign n', sp \assign sp', M \assign M']}

    \inferrule
    {
      \currentop{\oc{\Phi}}{\cret{}} \ostep \oc{\Phi'}
      \\ p' \lesstrusted p
      \\\\ \val{n}{p'} = \oc{\Phi}.R(r_{ret})
    }
    {\currentcom{\oc{\Phi}}{\cgateret}{p} \ostep \oc{\Phi'}}

    \inferrule
    {
      \oc{v} = \oimmval{\oc{\Phi}}{e}
      \\ sp' = \oc{\Phi}.sp + 1
      \\ sp' \in S_p
      \\\\ M' = \oc{\Phi}.M[sp' \mapsto \oc{v}]
      \\ \oname{writeable}(\oc{\Phi}, sp')
    }
    {\currentop{\oc{\Phi}}{\cpush{p}{e}} \ostep \pcinc{\oc{\Phi}}[sp \assign sp', M \assign M']}

    \inferrule
    {
      \val{n}{p_e} = \oimmval{\oc{\Phi}}{e}
      \\ \oc{v} = \val{\_}{p_{e'}} = \oimmval{\oc{\Phi}}{e'}
      \\\\ M' = \oc{\Phi}.M[n' \mapsto \oc{v}]
      \\ \oname{writeable}(\oc{\Phi}, n')
      \\\\ n' = k(n)
      \\ p_e \lesstrusted p
      \\ p_{e'} \nlesstrusted p \Longrightarrow n' \notin H_{\untrusted}
    }
    {\currentcom{\oc{\Phi}}{\cstore{k}{e}{e'}}{p} \ostep \pcinc{\oc{\Phi}}[M \assign M']}

    \inferrule
    {
      \val{n}{p_e} = \oimmval{\oc{\Phi}}{e}
      \\ n' = k(n)
      \\ p_e \lesstrusted p
      \\\\ \oc{v} = \val{\_}{p_{n'}} = \oc{\Phi}.M(n')
      \\ R' = \oc{\Phi}.R[r \mapsto \oc{v}]
    }
    {\currentcom{\oc{\Phi}}{\cload{r}{k}{e}}{p} \ostep \pcinc{\oc{\Phi}}[R \assign R']}

    \inferrule
    {
      \val{v}{p_e} = \oimmval{\oc{\Phi}}{e}
      \\ p_e \lesstrusted p
    }
    {\currentcom{\oc{\Phi}}{\cmov{sp}{e}}{p} \ostep \pcinc{\oc{\Phi}}[sp \assign v]}

    \inferrule
    {
      \val{n}{p_e} = \oimmval{\oc{\Phi}}{e}
      \\ n' = k(n)
      \\\\ p_e \lesstrusted p
      \\ \oname{in-same-func}(\oc{\Phi}, \oc{\Phi}.pc, n')
    }
    {\currentcom{\oc{\Phi}}{\cjmp{k}{e}}{p} \ostep \oc{\Phi}[pc \assign n']}

    \inferrule
    {
      p_r \nlesstrusted p' \Longrightarrow p_r \lesstrusted p
      \\\\ \val{n}{p_r} = \oc{\Phi}.R(r)
      \\ R' = \oc{\Phi}.R[r \assign \val{n}{p'}]
    }
    {\currentcom{\oc{\Phi}}{\cmovlabel{r}{p'}}{p} \ostep \pcinc{\oc{\Phi}}[R \assign R']}

    \inferrule
    {
      \val{n}{p_e} = \oimmval{\oc{\Phi}}{e}
      \\ \val{m}{p_m} = \oc{\Phi}.M(n)
      \\ p_e \lesstrusted p
      \\\\ M' = \oc{\Phi}.M[n \assign \val{m}{p'}]
      \\ p_m \nlesstrusted p' \Longrightarrow p_m \lesstrusted p
    }
    {\currentcom{\oc{\Phi}}{\cstorelabel{p'}{e}}{p} \ostep \pcinc{\oc{\Phi}}[M \assign M']}

    \inferrule
    {
      \currentop{\oc{\Phi}.\Psi}{c} \step \Psi'
      \\\\ \oname{in-same-func}(\oc{\Phi}, \oc{\Phi}.\Psi.pc, \Psi'.pc)
    }
    {\currentop{\oc{\Phi}}{c} \ostep \oc{\Phi}[\Psi \assign \Psi']}
  \end{mathpar}
  \captionof{figure}{Overlay Operational Semantics}
  \label{fig:appendix:overlay:operational}
\end{center}

\begin{center}
  \begin{mathpar}
    \inferrule
    {
      [\oc{\SF}] \concat \_ = \oc{\Phi}.\mathit{stack}
      \\\\ n \in S_p \Longrightarrow
      n \geq \oc{\SF}.\mathit{base} \wedge
      n \neq \oc{\SF}.\mathit{ret\mbox{-}addr\mbox{-}loc}
    }
    {\oname{writeable}(\oc{\Phi}, n)}

    \inferrule
    {
      F = \oc{\Phi}.\mathit{funcs}(\mathit{target})
      \\ F.\mathit{entry} = \mathit{target}
      \\\\ sp \in S_p
      \\ [\oc{\SF}] \concat \_ = \oc{\Phi}.\mathit{stack}
      \\\\ sp \geq \oc{\SF}.\mathit{ret\mbox{-}addr\mbox{-}loc} + \oc{F}.\mathit{type}
    }
    {\oname{typechecks}(\oc{\Phi}, \mathit{target}, sp)}

    \inferrule
    {
      [\oc{\SF}] \concat \_ = \oc{\Phi}.\mathit{stack}
      \\\\ \mathit{ret\mbox{-}addr\mbox{-}loc} = \oc{\SF}.\mathit{ret\mbox{-}addr\mbox{-}loc}
    }
    {\oname{is-ret-addr-loc}(\oc{\Phi}, \mathit{ret\mbox{-}addr\mbox{-}loc})}

    \inferrule
    {
      [\oc{\SF}] \concat \_ = \oc{\Phi}.\mathit{stack}
      \\\\ \forall (r, n) \in \oc{\SF}.\mathit{csr\mbox{-}vals}.~ \oc{\Phi}.R(r) = n
    }
    {\oname{csr-restored}(\oc{\Phi})}

    \inferrule
    {
      \oc{F} \in \cod{\oc{\Phi}.\mathit{funcs}}
      \\ n, n' \in \dom{\oc{F}.\mathit{instrs}}
    }
    {\oname{in-same-func}(\oc{\Phi}, n, n')}

    \inferrule
    {
      \forall \oc{F} \in \cod{\oc{\Phi}.\mathit{funcs}}.~ n \notin \dom{\oc{F}.\mathit{instrs}}
    }
    {\oname{in-same-func}(\oc{\Phi}, n, n')}

    \inferrule
    {\forall i \in [1, n].~ \oc{\Phi}.M(sp - i) = \val{\_}{\untrusted}}
    {\oname{args-secure}(\oc{\Phi}, sp, n)}
  \end{mathpar}
  \captionof{figure}{Overlay Operational Semantics: Auxiliary Judgments}
  \label{fig:appendix:overlay:operational-aux-judgments}
\end{center}

\begin{center}
  \[\begin{array}{rcl}
    \oname{new-frame}(\oc{\Phi}, \mathit{target}, \mathit{ret\mbox{-}addr\mbox{-}loc}) & \triangleq &
      \begin{array}[t]{lllll}
        \{
        & \mathit{base} & = & \mathit{ret\mbox{-}addr\mbox{-}loc} - \oc{\Phi}.\mathit{funcs}(\mathit{target}).\mathit{type} \\
        & \mathit{ret\mbox{-}addr\mbox{-}loc} & = & \mathit{ret\mbox{-}addr\mbox{-}loc} \\
        & \mathit{csr\mbox{-}vals} & = & \{(r, \oc{\Phi}.R(r))\}_{r \in \mathbb{CSR}} & \}
      \end{array} \\
    \\
    \oname{pop-frame}(\oc{\Phi}) & \triangleq & \oc{\Phi}[\mathit{stack} \assign S] \quad \text{where } [\oc{\SF}] \concat S = \oc{\Phi}.\mathit{stack} \\
    \\
    \pcinc{\oc{\Phi}} & \triangleq &
      \begin{cases}
        \oc{\Phi}[\Psi \assign \pcinc{\Psi}] & \oname{in-same-func}(\oc{\Phi}, \oc{\Phi}.pc, \oc{\Phi}.pc + 1) \\
        \oerror & \text{otherwise}
      \end{cases} \\
      \\
      \oname{classify}_{\mathbb{C}}(\oc{\Phi}) & \triangleq &
      \oc{\Phi}[M \assign M', R \assign R'] \\
      & & \text{where } M'(n) = \val{\natval{\oc{\Phi}.M(n)}}{\mathbb{C}(\oc{\Phi}.\Psi)(n)} \\
      & & \phantom{\text{where }} R'(r) = \val{\natval{\oc{\Phi}.R(r)}}{\mathbb{C}(\oc{\Phi}.\Psi)(r)} \\
      \\
      \oimmval{\Psi}{\oc{v}} & \triangleq & \oc{v} \\
      \oimmval{\Psi}{r} & \triangleq & \Psi.R(r) \\
      \oimmval{\Psi}{sp} & \triangleq & \val{\Psi.sp}{\untrusted} \\
      \oimmval{\Psi}{pc} & \triangleq & \val{\Psi.pc}{\untrusted} \\
      \oimmval{\Psi}{e \oplus e'} & \triangleq & \val{v \oplus v'}{p \sqcup p'} \\
      & & \text{where } \val{v}{p} = \oimmval{\Psi}{e} \\
      & & \phantom{\text{where }} \val{v'}{p'} = \oimmval{\Psi}{e'} \\
      \\
      \natval{n} & \triangleq & \val{n}{\_} \\
      \\
      \trusted \sqcup p & \triangleq & \trusted \\
      p \sqcup \trusted & \triangleq & \trusted \\
      \untrusted \sqcup \untrusted & \triangleq & \untrusted \\
  \end{array}\]
  \captionof{figure}{Overlay Operational Semantics: Auxiliary Definitions}
  \label{fig:appendix:overlay:operational-aux-definitions}
\end{center}

\begin{lemma}[Overlay is a refinement] \label{appendix:overlay:refinement}
  For any $\oPhi \ostep \oc{\Phi'}$, if $\oc{\Phi'} \neq \oerror$, then $\oPhi.\Psi \step \oc{\Phi'}.\Psi$
\end{lemma}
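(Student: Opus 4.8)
The plan is to prove \lemref{appendix:overlay:refinement} by case analysis on the overlay reduction rule used to derive $\oPhi \ostep \oc{\Phi'}$ (the rules of \figref{fig:appendix:overlay:operational}), showing in each case that erasing the overlay's bookkeeping --- the logical stack $\mathit{stack}$, the value labels, and the per-function metadata --- from the conclusion yields exactly an instance of the corresponding base rule of \figref{fig:appendix:operational}. Two auxiliary observations set this up. First, a structural induction on expressions $e$ shows that $\natval{\oimmval{\oc{\Phi}}{e}}$ is, modulo label erasure, the base value $\immval{\oc{\Phi}.\Psi}{e}$: the cases for registers, $sp$, $pc$, and literals are immediate from the definitions of $\oimmval{\cdot}{\cdot}$ in \figref{fig:appendix:overlay:operational-aux-definitions}, and the $\oplus$ case follows because the overlay merely joins the operand labels while computing the same numeric value. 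Second, $\oname{classify}_{\mathbb{C}}(\oc{\Phi})$ leaves the underlying state untouched: by definition it only rewrites the label component of each memory cell and register ($M'(n) = \val{\natval{\oc{\Phi}.M(n)}}{\mathbb{C}(\dots)(n)}$, and similarly for $R'$), so $pc$, $sp$, $C$, and all numeric values are preserved.

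With these in hand, most cases are mechanical. The catch-all rule for arbitrary commands has $\currentop{\oc{\Phi}.\Psi}{c} \step \Psi'$ as a premise and sets $\oc{\Phi'} = \oc{\Phi}[\Psi \assign \Psi']$, so $\oc{\Phi}.\Psi \step \oc{\Phi'}.\Psi$ holds by the premise itself. For the rules that re-implement a base instruction with extra checks and label tracking --- $\mathtt{push}$, $\mathtt{store}$, $\mathtt{load}$, $\mathtt{mov}\ sp$, $\mathtt{jmp}$, and the label-only instructions $\mathtt{movlabel}$ and $\mathtt{storelabel}$ --- one reads off the updated $pc$, $sp$, $R$, $M$ components of $\oc{\Phi'}.\Psi$ and checks, using the expression-erasure fact and the shared meaning of $k(n)$, that they coincide with the base rule's updates (with $\mathtt{movlabel}$/$\mathtt{storelabel}$ matching the base rule that merely advances $pc$). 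The one subtlety is that whenever the overlay uses $\pcinc{\oc{\Phi}}$ and $\oc{\Phi'} \neq \oerror$, we must have $\pcinc{\oc{\Phi}} = \oc{\Phi}[\Psi \assign \pcinc{\oc{\Phi}.\Psi}]$ with $\pcinc{\oc{\Phi}.\Psi} \neq \error$ --- i.e. the overlay's $\oname{in-same-func}$ side condition implies the base $\pcinc$ same-domain check --- which follows from the invariant that all instruction indices in a single overlay function share a security domain, together with code-layout well-formedness for indices lying outside every overlay function.

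The genuinely interesting cases are $\mathtt{call}$/$\mathtt{ret}$ and $\mathtt{gatecall}$/$\mathtt{gateret}$. For $\mathtt{call}$, the overlay pushes a logical frame onto $\mathit{stack}$ and writes the return address to $sp + 1$, updating $pc$, $sp$, $M$; erasing the $\mathit{stack}$ update and applying the expression-erasure fact, these are precisely the components changed by the base $\ccall{k}{e}$ rule, and the conditions $\oname{typechecks}$, $\oname{new-frame}$, and $p_e \lesstrusted \untrusted$ play no role in the projection. Symmetrically, $\mathtt{ret}$'s $\oname{pop-frame}$, $\oname{is-ret-addr-loc}$, and $\oname{csr-restored}$ conditions erase away, leaving the base $\cret{k}$ update of $pc$ and $sp$. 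For $\mathtt{gatecall}$ and $\mathtt{gateret}$ we additionally invoke the second auxiliary observation in the $\trusted$-to-$\untrusted$ direction (to discharge the $\oname{classify}_{\mathbb{C}}$ step), and then appeal to the fact that the zero-cost instantiation fixes the otherwise implementation-specific base rules for these instructions to be exactly the call/return rules, so the projection matches once more.

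I expect the main obstacle to be bureaucratic rather than conceptual: fixing the precise relationship between the overlay state's labeled memory and register file and the plain \langname{} state $\oc{\Phi}.\Psi$ so that the expression-erasure lemma and the componentwise comparisons in every case go through, and dispatching the $\pcinc{\oc{\Phi}}$-versus-$\pcinc{\oc{\Phi}.\Psi}$ compatibility for commands outside any overlay function. There is no fixed-point or step-indexing difficulty here --- this lemma is exactly the ``each overlay step refines the underlying \langname{} step'' remark from the main text, and it is the bridge that lets \thmref{thm:overlay-integrity-soundness} and \thmref{thm:overlay-confidentiality-soundness} reason about base-level traces $\oc{\Phi_i}.\Psi$ (e.g. to form the well-bracketed transition $\oc{\Phi_1}.\Psi \stepwb \oc{\Phi_2}.\Psi$).
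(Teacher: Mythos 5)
The paper never actually proves this lemma: it is stated bare in the appendix, and in the body it appears only as the remark that ``each of these steps is a refinement of the underlying \langname{} step.'' So there is no official argument to diverge from; your rule-by-rule case analysis---project each \olangname{} rule's conclusion onto the $\Psi$ component, erase labels and the logical stack, and match it against the corresponding base rule, with an expression-erasure lemma relating $\oimmval{\oc{\Phi}}{e}$ to $\immval{\oc{\Phi}.\Psi}{e}$ and the observation that $\oname{classify}_{\mathbb{C}}$ changes only labels---is exactly the intended argument, and it is essentially correct.

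Two places where your sketch asserts more than the overlay premises literally give you, and which deserve the same explicit well-formedness assumptions you already invoke for $\pcinc{\oc{\Phi}}$: first, the base rules for $\cret{k}$ (and hence the instantiated $\cgateret$, and the $\trusted$ $\cgatecall{n}{e}$ rule, which goes through $\cinst{call}$) carry the side condition that $sp$ (resp.\ $sp+1$) lies in a stack region $S_{p_s}$, and this is \emph{not} implied by $\oname{is-ret-addr-loc}$ or the trusted gatecall premises for an arbitrary $\oPhi$; it follows only from an invariant that the top logical frame's $\mathit{ret\mbox{-}addr\mbox{-}loc}$ was created at a stack address (via $\oname{typechecks}$ at call time), i.e.\ from reachability or a state well-formedness condition, without which the base semantics steps to $\error$ rather than to $\oc{\Phi'}.\Psi$. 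Second, the base operational semantics in the appendix gives no rules at all for $\cinst{movlabel}$ and $\cinst{storelabel}$, so your ``base rule that merely advances $pc$'' does not exist as written and must be added (or those commands treated as overlay-only ghost instructions). Both are looseness inherited from the paper rather than flaws in your strategy, but a complete proof has to pin them down.
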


\begin{lemma}[Overlay is equivalent on application reduction] \label{appendix:overlay:application-equivalent}
  For any $\oPhi$, if $\oPhi.\Psi \stephigh \Psi'$, then $\oPhi \ostep \{\Psi \assign \Psi', \mathit{funcs} \assign \oPhi.\mathit{funcs}, \mathit{stack} \assign \oPhi.\mathit{stack}\}$.
\end{lemma}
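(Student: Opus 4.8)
The plan is to prove this by direct inspection of the overlay operational semantics (\figref{fig:appendix:overlay:operational}), as the application-side counterpart to the refinement \lemref{appendix:overlay:refinement}. First I would unfold the hypothesis: unfolding the definition of the labelled step relation $\stephigh$, we have $\oPhi.\Psi \step \Psi'$ in \langname{} with the current command $c$ (so $\currentcom{\oPhi.\Psi}{c}{\trusted}$) and the command reached at $\Psi'.pc$ both residing in the $\trusted$ domain. In particular $c$ is neither $\cgatecall{n}{e}$ nor $\cgateret$, since both of those necessarily transfer control into the $\untrusted$ domain and so cannot occur in a $\stephigh$ step; thus $c$ is an ordinary \langname{} instruction, for which $\oPhi.\Psi \step \Psi'$ is exactly the content of the hypothesis.

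The key structural observation is that, because $\oPhi.\Psi.pc$ points into $\trusted$-domain code, it lies outside $\dom{\oc{F}.\mathit{instrs}}$ for every function $\oc{F} \in \cod{\oPhi.\mathit{funcs}}$ --- the overlay partitions only the sandboxed library into functions. Hence $\oname{in-same-func}(\oPhi, \oPhi.\Psi.pc, m)$ holds for every $m$, by the second rule defining $\oname{in-same-func}$. The generic refinement rule of the overlay --- the last rule of \figref{fig:appendix:overlay:operational}, with premises $\currentop{\oPhi.\Psi}{c} \step \Psi'$ and $\oname{in-same-func}(\oPhi, \oPhi.\Psi.pc, \Psi'.pc)$ --- is therefore applicable, and it steps $\oPhi$ to $\oPhi[\Psi \assign \Psi']$, which by the record-update notation is precisely $\{\Psi \assign \Psi', \mathit{funcs} \assign \oPhi.\mathit{funcs}, \mathit{stack} \assign \oPhi.\mathit{stack}\}$, as required. (If one prefers to argue instead through the command-specific overlay rules, the same conclusion follows after a short case split on $c$: their information-flow side-conditions all take the form $p' \lesstrusted \trusted$, which holds because $\trusted$ is the top of the confidentiality lattice; their $\oname{in-same-func}$ and $\pcinc{\oPhi}$ checks are the vacuous ones just noted; and none of them modifies $\mathit{funcs}$ or $\mathit{stack}$.)

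The one delicate point --- and the only place the argument needs care --- is the $\oname{writeable}(\oPhi, \_)$ obligation carried by the command-specific $\cinst{store}$ and $\cinst{push}$ rules: for a write landing in $S_{\trusted}$ this demands that the address lie at or above the base of the topmost logical frame and differ from its recorded return-address slot. Since that topmost frame is the single frame standing in for all of the application's stack frames, discharging the obligation for arbitrary application writes requires the invariant that the application's logical frame spans all of $S_{\trusted}$ with its recorded return-address slot outside $S_{\trusted}$ --- a property that is part of $\oPhi \in \programs$ and preserved by $\ostep$. Routing the proof through the generic refinement rule, which never invokes $\oname{writeable}$, avoids this entirely, which is the route I would take; but it is worth flagging that this $\oname{writeable}$ obligation is the one spot that would otherwise force the program-state invariant to be threaded through the argument.
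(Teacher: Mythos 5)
The paper states this lemma without any proof (it appears bare in \appref{appendix:overlay}), so there is no official argument to compare against; judged on its own terms, your proof is correct and is evidently the intended one. You correctly observe that a $\stephigh$ step excludes $\cgatecall{n}{e}$ and $\cgateret$ (whose overlay rules would in any case push a frame or pop one, contradicting the claimed unchanged $\mathit{stack}$), that for a well-formed overlay state the application's $pc$ lies outside $\dom{\oc{F}.\mathit{instrs}}$ for every $\oc{F} \in \cod{\oPhi.\mathit{funcs}}$, so $\oname{in-same-func}$ holds vacuously via its second clause, and that the final rule of \figref{fig:appendix:overlay:operational} then yields exactly $\oPhi[\Psi \assign \Psi']$, which is literally the state in the lemma's conclusion; indeed the vacuous second clause of $\oname{in-same-func}$ seems to exist precisely to make application reductions pass this check. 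Two caveats, both reflecting looseness in the paper rather than gaps in your reasoning: (1) the lemma's ``for any $\oPhi$'' must implicitly mean overlay states whose $\mathit{funcs}$ partitions only library code, exactly the well-formedness you rely on; and (2) your route assumes the generic refinement rule is applicable even when a command-specific overlay rule exists for the same instruction. If one instead insists that the command-specific rules (e.g.\ for $\cinst{store}$ and $\cinst{push}$, which are stated for arbitrary $p$) govern trusted code, then, as you flag, the $\oname{writeable}$ obligation forces an invariant about the application's single logical mega-frame (base at the bottom of $\fldS{\trusted}$, return-address slot outside it) to be threaded through, and the resulting state would carry labelled values rather than matching $\{\Psi \assign \Psi', \ldots\}$ on the nose\dash---so the generic-rule route you take is not just simpler but the only one that proves the lemma exactly as stated. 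Flagging that trade-off explicitly is a genuine improvement over the paper's presentation.
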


\begin{theorem}[Overlay Integrity Soundness] \label{thm:appendix:overlay-integrity-soundness}
  If $\oc{\Phi_0} \in \programs$, $\oc{\Phi_0} \ostepn{n} \oc{\Phi_1}$,
  $\currentcom{\oc{\Phi_1}}{\_}{\trusted}$, and
  $\oc{\Phi_1} \ostepstar \oc{\Phi_2}$ such that $\oc{\Phi_1}.\Psi \stepwb
  \oc{\Phi_2}.\Psi$ with $\pi = \oc{\Phi_0}.\Psi \stepn{n} \oc{\Phi_1}.\Psi$, then
  \begin{enumerate}
  \item $\mathcal{CSR}(\pi, \oc{\Phi_1}.\Psi, \oc{\Phi_2}.\Psi)$
  \item $\mathcal{RA}(\pi, \oc{\Phi_1}.\Psi, \oc{\Phi_2}.\Psi)$
  \end{enumerate}
\end{theorem}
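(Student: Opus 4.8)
The plan is to prove both integrity properties at once by distilling, from the overlay bookkeeping, a strengthened invariant describing how each logical stack frame is preserved across the code that runs ``underneath'' it, and then instantiating that invariant at the frame pushed by the gated call witnessing $\oc{\Phi_1}.\Psi \stepwb \oc{\Phi_2}.\Psi$.

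First I would reconcile the two given executions. Both $\step$ and $\ostep$ are deterministic, and since $\oc{\Phi_1}\ostepstar\oc{\Phi_2}$ passes through no $\oerror$, \lemref{appendix:overlay:refinement} lets each overlay step project to the corresponding $\langname$ step; by determinism the resulting $\langname$-trace is exactly the one witnessing $\oc{\Phi_1}.\Psi\stepwb\oc{\Phi_2}.\Psi$. So the well-bracketed structure of \figref{fig:well-bracketed-transition} can be read off the overlay run: $\oc{\Phi_1}$ (being in $\trusted$) executes a $\cgatecall$ into the library via the $\trusted\to\untrusted$ rule of \figref{fig:appendix:overlay:operational}, pushing a fresh frame $\oc{\SF}_1=\oname{new-frame}(\cdot,n',sp')$ with $\oc{\SF}_1.\mathit{ret\mbox{-}addr\mbox{-}loc}=sp'=\oc{\Phi_1}.sp+1$, $\oc{\SF}_1.\mathit{base}=sp'-\oc{\Phi_1}.\mathit{funcs}(n').\mathit{type}$, and $\oc{\SF}_1.\mathit{csr\mbox{-}vals}=\{(r,\oc{\Phi_1}.R(r))\}_{r\in\mathbb{CSR}}$ (the $\oname{classify}_{\mathbb{C}}$ prelude only relabels, leaving the underlying natural-number values of $\oc{\Phi_1}.\Psi$ intact); then library code runs; then a $\cgateret$ (the $\cret$-based rule) pops $\oc{\SF}_1$ and lands at $\oc{\Phi_2}$.

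The core step is a \emph{frame-preservation lemma}, proved by simultaneous induction on the derivation of $\stepwb$ and the length of the trace, mirroring the NaCl argument of \appref{appendix:nacl}. It asserts: whenever an overlay run $\oc{\Phi}\ostepstar\oc{\Phi}'$ realises one $\stepwb$ whose balanced call (a $\cgatecall$ or a library $\ccall$) pushes a frame $\oc{\SF}$ onto $\oc{\Phi}.\mathit{stack}$, then \begin{enumerate*}\item $\oc{\Phi}'.\mathit{stack}=\oc{\Phi}.\mathit{stack}$; \item $\oc{\Phi}'.R(r)=\oc{\Phi}.R(r)$ for $r\in\mathbb{CSR}$; \item $\oc{\Phi}'.sp=\oc{\Phi}.sp$ and $\oc{\Phi}'.pc=\oc{\Phi}.pc+1$; and \item $\oc{\Phi}'.M(n)=\oc{\Phi}.M(n)$ for every $n<\oc{\SF}.\mathit{base}$ and for $n=\oc{\SF}.\mathit{ret\mbox{-}addr\mbox{-}loc}$.\end{enumerate*} In the ``no callbacks'' case, $\oname{in-same-func}$ (and the same check built into $\pcinc{\oc{\Phi}}$) pins intra-library control flow to one function until the matching $\cret$; $\oname{writeable}$ forbids every library $\cstore$/$\cpush$ from touching anything below $\oc{\SF}.\mathit{base}$ or the slot $\oc{\SF}.\mathit{ret\mbox{-}addr\mbox{-}loc}$; and $\oname{is-ret-addr-loc}$ together with $\oname{csr-restored}$ at the $\cret$ deliver (ii)--(iv), with the $\cret$ rule's $sp\assign\oc{\Phi}.sp-1$ giving (iii). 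For a nested library $\ccall$ one applies the inductive hypothesis to the inner $\stepwb$: the inner frame $\oc{\SF}'$ satisfies $\oc{\SF}'.\mathit{base}>\oc{\SF}.\mathit{ret\mbox{-}addr\mbox{-}loc}$ (the caller pushes the callee's arguments above its own return slot), so clause (iv) for the inner call already protects everything the outer frame needs, and clause (i) keeps the outer frame on top. Callbacks ($\cgatecall$ into $I$) are handled the same way, additionally using \lemref{appendix:overlay:application-equivalent} and the base $\langname$ memory-isolation conditions in $\programs$ to see that application steps during the callback write nothing in the library-stack region below $\oc{\SF}.\mathit{base}$.

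Finally I would instantiate the lemma at $\oc{\SF}_1$. Clause (ii) with $\oc{\SF}_1.\mathit{csr\mbox{-}vals}=\{(r,\oc{\Phi_1}.R(r))\}$ yields $\oc{\Phi_2}.\Psi.R(\mathbb{CSR})=\oc{\Phi_1}.\Psi.R(\mathbb{CSR})$, i.e. $\mathcal{CSR}(\pi,\oc{\Phi_1}.\Psi,\oc{\Phi_2}.\Psi)$; clause (iii) gives the $pc$ and $sp$ conjuncts of $\mathcal{RA}$. For its memory conjunct I would show $\oname{return-address}_{\trusted}(\pi)\subseteq[0,\oc{\SF}_1.\mathit{base})$: by \figref{fig:appendix:return-address} every such location is the return slot of a still-live $\trusted$-executed $\ccall$ or $\cgatecall$, hence lies strictly below where $\oc{\Phi_1}$ afterwards pushed the current call's arguments, i.e. strictly below $\oc{\SF}_1.\mathit{base}=\oc{\Phi_1}.sp+1-n'$; clause (iv) then preserves exactly those slots, giving $\mathcal{RA}(\pi,\oc{\Phi_1}.\Psi,\oc{\Phi_2}.\Psi)$. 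The main obstacle is the frame-preservation lemma: making the nested induction well-founded needs precisely the right strengthened invariant and careful frame-offset arithmetic relating $\oc{\SF}.\mathit{base}$, $\oc{\SF}.\mathit{ret\mbox{-}addr\mbox{-}loc}$, the running $sp$, and the $\oname{return-address}$ bookkeeping --- in particular that a nested callee's $\oname{writeable}$-protected region subsumes its caller's, and that a callback cannot corrupt the in-progress library frame even though the overlay collapses all application frames into one.
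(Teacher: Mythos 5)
Your proposal takes essentially the same route as the paper's own (very terse) proof: induction over the definition of the well-bracketed step with a nested induction over the logical call stack, concluding from $\oc{\Phi_2} \neq \oerror$ that the monitor's restoration checks ($\oname{csr\mbox{-}restored}$, $\oname{is\mbox{-}ret\mbox{-}addr\mbox{-}loc}$) must have passed. Your frame-preservation lemma, callback case, and return-address arithmetic simply spell out the details the paper leaves implicit (mirroring its NaCl appendix argument), so the two approaches coincide.
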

\begin{proof}
  By induction over the definition of a well-bracketed step and nested induction
  over the logical call stack.
  The last step follows by the fact that $\oc{\Phi_2} \neq \oerror$, and
  therefore the restoration checks in the overlay monitor passed.
\end{proof}

\begin{theorem}[Overlay Confidentiality Soundness] \label{thm:appendix:overlay-confidentiality-soundness}
  If $\oc{\Phi_0} \in \programs$, $\currentcom{\oc{\Phi_1}}{\_}{\untrusted}$,
  $\currentcom{\oc{\Phi_3}}{\_}{\trusted}$, $\oc{\Phi_0}.\Psi \stepstar
  \oc{\Phi_1}.\Psi \steplown{n} \oc{\Phi_2}.\Psi \step \oc{\Phi_3}.\Psi$,
  $\oc{\Phi_1} \ostepn{n + 1} \oc{\Phi_3}$, and $\oc{\Phi_1} =_{\untrusted}
  \oc{\Phi_1'}$,
  then $\oc{\Phi_1'}.\Psi \steplown{n} \oc{\Phi_2'}.\Psi \step \oc{\Phi_3'}.\Psi$,
  $\oc{\Phi_1'} \ostepn{n + 1} \oc{\Phi_3'}$
  $\currentcom{\oc{\Phi_3'}}{\_}{\trusted}$, $\oc{\Phi_3}.pc = \oc{\Phi_3'}.pc$,
  and
  \begin{enumerate}
  \item $\currentop{\oc{\Phi_2}}{\cgatecall{n'}{e}}$, $\currentop{\oc{\Phi_2'}}{\cgatecall{n'}{e}}$, and $\oc{\Phi_3} =_{\mathtt{call}\ n'} \oc{\Phi_3'}$ or
  \item $\currentop{\oc{\Phi_2}}{\cgateret}$, $\currentop{\oc{\Phi_2'}}{\cgateret}$, and $\oc{\Phi_3} =_{\mathtt{ret}} \oc{\Phi_3'}$,
  \end{enumerate}
\end{theorem}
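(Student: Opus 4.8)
The plan is to establish this by a lockstep (bisimulation) argument: I would exhibit a low\-equivalence relation on overlay states and show it is preserved, step for step, by library reductions that the monitor does not reject. Concretely, $\oc{\Phi} =_{\untrusted} \oc{\Phi'}$ should require the two states to agree on $pc$, $sp$, the code, and the function partition $\oc{\Phi}.\mathit{funcs}$, on the contents of every memory cell and register carrying overlay label $\untrusted$, and on the structural part of each logical frame (its base and $\mathit{ret\mbox{-}addr\mbox{-}loc}$), while tolerating disagreement on the $\mathit{csr\mbox{-}vals}$ components and on the numeric contents of $\trusted$\-labeled cells and registers. Since $\oc{\Phi_1}$ is reachable from $\oc{\Phi_0} \in \programs$ and $\oc{\Phi_1} =_{\untrusted} \oc{\Phi_1'}$, the state $\oc{\Phi_1'}$ inherits the structural well\-formedness the monitor needs, so no separate reachability hypothesis on $\oc{\Phi_1'}$ is required. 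It is also worth observing that a $\steplow$ segment can contain no callback, since a callback changes security domain and hence is not an $\untrusted$\-to\-$\untrusted$ step; this is precisely what lets this lemma avoid any declassification reasoning, which is instead handled where the lemma is later composed to yield the top\-level \StrongNI{} property.

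The crux is a single\-step lockstep lemma: if $\oc{\Phi} =_{\untrusted} \oc{\Phi'}$, $\currentcom{\oc{\Phi}}{c}{\untrusted}$, and $\oc{\Phi} \ostep \oc{\Phi_a}$ with $\oc{\Phi_a} \neq \oerror$ and $\oc{\Phi_a}$ still in the library, then $\oc{\Phi'} \ostep \oc{\Phi_a'}$ for some $\oc{\Phi_a'} \neq \oerror$ with $\oc{\Phi_a} =_{\untrusted} \oc{\Phi_a'}$. I would prove it by case analysis on $c$. The control\-flow cases are where the monitor's IFC discipline pays off: $\cjmp{k}{e}$ and $\ccall{k}{e}$ require the target expression to carry label $\untrusted$, so the target is computed from agreeing data, and the accompanying $\oname{in-same-func}$ and $\oname{typechecks}$ premises interrogate only low\-determined quantities ($sp$ and static function types); hence the two runs branch identically, and likewise for the in\-function $pc$\-increment. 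For $\cstore{k}{e}{e'}$ the monitor requires the pointer to be $\untrusted$, so both runs address the same cell; if the stored value is $\trusted$ the monitor forbids that cell from lying in $H_{\untrusted}$ and the cell becomes $\trusted$\-labeled in both runs, preserving the relation vacuously there, whereas an $\untrusted$ value written to a low cell keeps the cell low and equal across runs. The instructions $\cload{r}{k}{e}$, $\cmov{r}{e}$, arithmetic, and the labelling instructions $\cmovlabel{r}{p}$ and $\cstorelabel{p}{e}$ propagate labels so that low outputs depend only on low inputs. The $\ccall{k}{e}$ case additionally pushes a logical frame whose base and $\mathit{ret\mbox{-}addr\mbox{-}loc}$ are low\-determined (from $sp$ and the callee type) and whose $\mathit{csr\mbox{-}vals}$ may differ between runs, which the relation allows.

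I would then close the induction over the $n$ library steps $\oc{\Phi_1} \ostepn{n} \oc{\Phi_2}$, obtaining $\oc{\Phi_1'} \ostepn{n} \oc{\Phi_2'}$ with $\oc{\Phi_2} =_{\untrusted} \oc{\Phi_2'}$, and in particular the same instruction at $\oc{\Phi_2}$ and $\oc{\Phi_2'}$, which by hypothesis is a callback $\cgatecall{n'}{e}$ or a $\cgateret$, both in domain $\untrusted$. In the $\cgateret$ case the rule forces the return\-register label $p'$ to satisfy $p' \lesstrusted \untrusted$, hence $p' = \untrusted$, so the return value is low and agrees across runs; combined with agreement on $H_{\untrusted}$ (maintained because the monitor never lets a $\trusted$ value reach the library heap) and on $pc$, this yields $\oc{\Phi_3} =_{\mathtt{ret}} \oc{\Phi_3'}$. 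In the callback case the $\oname{args-secure}$ premise forces the $n'$ argument slots to be $\untrusted$\-labeled and hence equal across runs, and the monitor's requirement, inherited from the $\ccall{k}{e}$ rule, that the callback target carry label $\untrusted$ makes $\oc{\Phi_3}.pc = \oc{\Phi_3'}.pc$; with heap agreement this yields $\oc{\Phi_3} =_{\mathtt{call}\ n'} \oc{\Phi_3'}$. Because lockstep matches steps one for one, the counts agree, so $\oc{\Phi_1'} \ostepn{n+1} \oc{\Phi_3'}$ and $\currentcom{\oc{\Phi_3'}}{\_}{\trusted}$.

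The main obstacle I anticipate is making the low\-equivalence relation robust against the logical\-stack bookkeeping: I must show that the monitor's numeric equality checks, $\oname{csr-restored}$ and $\oname{is-ret-addr}$, succeed in the second run exactly when they succeed in the first, even though $\mathit{csr\mbox{-}vals}$ records $\trusted$\-labeled application register values that genuinely differ between the runs. The resolution is that the library's save\-and\-restore and return sequences are themselves lockstep, being driven by low\-determined control flow and low\-determined addresses: the slot where a callee\-save register is spilled and the timing of its restore are identical in both runs, so whatever the register is restored to in run $1$ is exactly the value it held on entry in run $1$, and symmetrically for run $2$. I would capture this by strengthening the invariant with a low\-determined correspondence between each logical frame and the stack and register state at the time the frame was created, and then discharge the $\cret{k}$ and $\cgateret$ cases from that correspondence rather than from raw numeric equality. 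A minor secondary point is the routine bookkeeping to confirm that neither run errors and that both reach the transition after precisely $n+1$ overlay steps.
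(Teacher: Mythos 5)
Your proposal is essentially the paper's approach: the paper's entire proof of this theorem is a one-line appeal to the ``standard'' noninterference argument for an IFC enforcement system, and your lockstep/unwinding argument with a low-equivalence relation preserved by monitored $\untrusted$ steps is exactly that standard proof, spelled out (including the observation that no callbacks occur within the $\steplow$ segment, so declassification is irrelevant here). The only place you go beyond what the paper records --- the treatment of $\oname{csr\mbox{-}restored}$/$\oname{is\mbox{-}ret\mbox{-}addr}$ against possibly secret recorded values --- is a genuine subtlety the paper silently subsumes under ``standard,'' and your invariant-strengthening is the right kind of fix, though note the monitor only checks the final equality and does not force a literal spill/restore sequence, so that case needs the correspondence invariant rather than the informal ``save and restore are lockstep'' justification.
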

\begin{proof}
  Proof is standard for an IFC enforcement system.
\end{proof}
\section{WebAssembly}
\label{appendix:webassembly}

\begin{mathpar}
  \inferrule
  {
    \currentop{\Psi}{\ccall{}{i}} \step \Psi'
  }
  {\currentcom{\Psi}{\cgatecall{n}{i}}{\trusted} \step \Psi'[p \assign \untrusted]}

  \inferrule
  {
    \currentop{\Psi}{\ccall{I}{i}} \step \Psi'
  }
  {\currentcom{\Psi}{\cgatecall{n}{i}}{\untrusted} \step \Psi'[p \assign \trusted]}

  \inferrule
  {
    \currentop{\Psi}{\cret{}} \step \Psi'
    \\\\ p' = \untrusted \Leftrightarrow p = \trusted
  }
  {\currentcom{\Psi}{\cgateret}{p} \step \Psi'[p \assign \trusted]}
\end{mathpar}
\captionof{figure}{WebAssembly Trampoline and Springboard}
\label{fig:appendix:wasm:operational}

\begin{figure}[t]
  \begin{center}
    \begin{tabular}{>{$} r <{$} | c |}
      \cline{2-2}
      & current block spill slots \\
      & \ldots \\
      \cline{2-2}
      & function locals \\
      & \ldots \\
      \cline{2-2}
      sp \longrightarrow & saved $\mathbb{CSR}$ \\
      & \ldots \\
      \cline{2-2}
      & return address \\
      \cline{2-2}
      & arguments \\
      & \ldots \\
      \cline{2-2}
    \end{tabular}
  \end{center}
  \caption{WebAssembly Stack Frame}
  \label{fig:appendix:wasm:stack-frame}
\end{figure}

\begin{center}
  \begin{tabular}{>{$}r<{$} >{$}r<{$} >{$}c<{$} >{$}l<{$}}
    \textit{WebAssemblyFunction} & \textit{WF} & \bnfdef &
    \begin{array}[t]{lllll}
      \{
      & |args| & \bnftypes & \nats \\
      & |locals| & \bnftypes & \nats \\
      & entry & \bnftypes & \textit{Block} \\
      & exits & \bnftypes & \powerset{\textit{Block}} \\
      & blocks & \bnftypes & \powerset{\textit{Block}} & \}
    \end{array} \\

    \textit{Block} & B & \bnfdef &
    \begin{array}[t]{lllll}
      \{
      & start & \bnftypes & \nats \\
      & end & \bnftypes & \nats \\
      & |slots| & \bnftypes & \nats \\
      & inputs & \bnftypes & \powerset{\nats + \regs} \\
      & indirects & \bnftypes & \powerset{\textit{IB}} & \}
    \end{array} \\

    \textit{Indirect Block} & \textit{IB} & \bnfdef &
    \begin{array}[t]{lllll}
      \{
      & start & \bnftypes & \nats \\
      & end & \bnftypes & \nats \\
      & parent & \bnftypes & \textit{Block} & \}
    \end{array}
  \end{tabular}

  \captionof{figure}{WebAssembly Structure}
  \label{fig:appendix:wasm:structure}
\end{center}

\subsection{Logical Relation}
\label{appendix:lr}

\begin{center}
  \begin{tabular}{>{$}r<{$} >{$}c<{$} >{$}l<{$}}
    L & \bnfdef &
      \begin{array}[t]{lllll}
        \{
        & \mathit{interface} & \bnftypes & \nats \rightharpoonup \nats \\
        & \mathit{library} & \bnftypes & \nats \rightharpoonup \mathit{WasmFunction} \\
        & \mathit{code} & \bnftypes & \codes & \}
      \end{array}
  \end{tabular}
\end{center}

\[\begin{array}{rcl}
  \mathrm{World} & \triangleq & \nats \times (\nats \rightharpoonup \nats) \times (\nats \rightharpoonup \nats) \\
  \\
  \later & : & \mathrm{World} \rightarrow \mathrm{World} \\
  \later (0, \overline{f_i}, \overline{f_l}) & \triangleq & (0, \overline{f_i}, \overline{f_l}) \\
  \later (n, \overline{f_i}, \overline{f_l}) & \triangleq & (n - 1, \overline{f_i}, \overline{f_l}) \\
  \\
  (C, \overline{\oc{F_2}}, \overline{\oc{F_3}}) :_W & \triangleq & \text{for } i \in \{1, 2\}: \\
  & &
  \begin{array}[t]{l}
    \overline{\oc{F_i}.\mathit{entry}} = \dom{\pi_i(W)} \\
    \forall \oc{F_i} \in \overline{\oc{F_i}}.~ \oc{F_i}.\mathit{type} = \pi_i(W)(\oc{F_i}.\mathit{entry}) \\
    \forall \oc{F_i} \in \overline{\oc{F_i}}.~ (\later W, \oc{F_i}, C|_{\oc{F_i}.\mathit{instrs}}) \in \Frel
  \end{array}
\end{array}\]

\begin{center}
  \[\begin{array}{rcl}
    \Frel & \triangleq &
    \left\{
      (W, \oc{F}, c)
      \middle|
      \begin{array}{l}
        \forall \rho \in [\frames], \mathit{ret\mbox{-}addr}, A \in [\nats], sp, R, M, C, \overline{\oc{F_i}}, \overline{\oc{F_l}}. \\
        \quad \text{let } \rho' = \rho \concat \{\mathit{base} \assign sp - |A|, \mathit{ret\mbox{-}addr\mbox{-}loc} \assign sp, \mathit{csr\mbox{-}vals} \assign R(\mathbb{CSR})\} \\
        \quad |A| = \oc{F}.\mathit{type} \\
        \quad sp > \oname{top}(\rho).\mathit{ret\mbox{-}addr\mbox{-}loc} + |A| \\
        \quad [sp \mapsto \mathit{ret\mbox{-}addr}, (sp - |A| + i \mapsto A_i)_{i \in [0, |A|)}] \leq M  \\
        \quad (C, \overline{\oc{F_i}}, \overline{\oc{F_l}}) :_W \\
        \quad c \leq C \\
        \quad \dom{c} = \oc{F}.\mathit{instrs} \\
        \quad \forall \ell \in \dom{M} \cap M_{\untrusted}.~ \val{n}{\untrusted} = M(\ell) \\
        \quad \text{let } \Psi = \{pc \assign \oc{F}.\mathit{entry}, sp \assign sp, R \assign R, M \assign M, C \assign C \} \\
        \quad \text{let } \oPhi = \{\Psi \assign \Psi, \mathit{stack} \assign \rho', \mathit{funcs} \assign [\oc{F'}.\mathit{entry} \mapsto \oc{F'}]_{\oc{F'} \in \overline{\oc{F_i}} \uplus \overline{\oc{F_l}}}\} \\

        \Longrightarrow \\
        \quad \forall n' \leq W.n.~ \oPhi \osteprhon{\rho'}{n'} \oc{\Phi'} \Longrightarrow \oc{\Phi'} \neq \oerror
        \\
        \quad \text{or } \exists n' \leq W.n.~ \oPhi \osteprhon{\rho'}{n' - 1} \oc{\Phi'} \ostep \oc{\Phi''}
        \\
        \qquad \text{where } (\currentop{\oc{\Phi'}}{\cret{}} \vee \currentop{\oc{\Phi'}}{\cgateret}) \wedge \oc{\Phi'}.\mathit{stack} = \rho' \wedge \oc{\Phi''} \neq \oerror
      \end{array}
    \right\} \\
    \\
    \Lrel & \triangleq &
    \left\{
      (n, L)
      \middle|~
      \begin{array}{l}
        \forall i \in \dom{L.\mathit{library}}. \\
        \quad \text{let } \textit{WF} = L.\mathit{library}(i) \\
        \quad \text{let } W = (n, L.\mathit{interface}, \lambda i \rightarrow L.\mathit{library}(i).|args|) \\
        \quad \text{let } \mathit{instrs} = \biguplus_{B \in \textit{WF}.\mathit{blocks}} [B.start, B.end] \\
        \quad \text{let } \oc{F} = \{ \mathit{instrs} \assign \mathit{instrs}, \mathit{entry} \assign \textit{WF}.\mathit{entry}.\mathit{start}, \mathit{type} \assign \textit{WF}.|args|\} \\
        \quad (W, \oc{F}, L.\mathit{code}|_{\mathit{instrs}}) \in \Frel
      \end{array}
    \right\}
  \end{array}\]
  \captionof{figure}{Function and Library Relations}
\end{center}

\begin{lemma}[FTLR for functions] \label{appendix:wasm:lr:ftlr-functions}
  Let $W \in \mathrm{World}$ and $c$ be the code for a compiled WebAssembly function
  $\textit{WF}$ such that $\textit{WF}$ expects application functions in the
  interface with locations and types $\pi_2(W)$ and in the library with locations
  and types $\pi_3(W)$.
  Further let $\mathit{instrs} = \biguplus_{B \in \textit{WF}.\mathit{blocks}}
  [B.start, B.end]$ and $\oc{F} = \{ \mathit{instrs} \assign \mathit{instrs},
  \mathit{entry} \assign \textit{WF}.\mathit{entry}.\mathit{start}, \mathit{type}
  \assign \textit{WF}.|args|\}$.
  Then $(W, \oc{F}, c) \in \Frel$.
\end{lemma}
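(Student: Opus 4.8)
The plan is to prove \lemref{appendix:wasm:lr:ftlr-functions} by well-founded induction on the step budget $W.n$, taking as the key external input the soundness of the dataflow-and-type abstract interpretation of \secref{sec:wasm-verifier} applied to the code $c$ of the compiled Wasm function $\textit{WF}$. As a preliminary I would establish that this analysis computes a per-program-point invariant $\sigma$ such that, at every instruction index $\ell \in \oc{F}.\mathit{instrs}$, the abstract state $\sigma(\ell)$ (mapping registers and stack slots to $\{\vcuninit, \vcinit, \vccallee{r}\}$) over-approximates every reachable overlay state at $\ell$ under the reading: $\sigma(x) = \vcinit$ implies the value at $x$ is labelled $\untrusted$; $\sigma(x) = \vccallee{r}$ implies the value at $x$ equals the entry value of the callee-save register $r$ recorded in the top overlay frame's $\mathit{csr\mbox{-}vals}$; and $\vcuninit$ gives no constraint. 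For \emph{compiled} Wasm this preliminary is where the language-level properties of \secref{sec:web-assembly-secure} are cashed in: structured control flow makes every jump target lie in the same function (validating $\oname{in-same-func}$), the static stack-frame layout makes every local and spill access a statically-known safe offset (validating $\oname{writeable}$ and the inviolability of the return-address slot), compile-time initialisation of locals makes every read-before-write flow from an $\untrusted$-labelled write, and per-function types supply the call-site information. As the paper observes, this part is routine \emph{except} at call instructions.

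With $\sigma$ in hand I would fix the premises of $\Frel$ --- the frame $\rho'$, argument vector $A$ (with $|A| = \oc{F}.\mathit{type}$), $sp$, $R$, $M$, $C$, the sets $\overline{\oc{F_i}}, \overline{\oc{F_l}}$, and the resulting overlay state $\oPhi$ at $pc = \oc{F}.\mathit{entry}$ with $\mathit{stack} = \rho'$ --- and prove a strengthened ``running'' claim, parameterised by an arbitrary current $pc \in \oc{F}.\mathit{instrs}$ and an abstract state consistent with $\sigma$, by induction on the remaining budget: the reduction sequence $\oPhi \osteprhostar{\rho'} \oc{\Phi'}$ either takes $n' \le W.n$ bracket-steps without ever reaching $\oerror$, or within $W.n$ steps reaches a $\cret{}$ or $\cgateret$ with $\mathit{stack} = \rho'$ whose following $\ostep$ is non-error --- precisely the disjunction in $\Frel$. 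The invariant carried along the run is: (i) consistency with $\sigma$ at the current $pc$; (ii) either $\mathit{stack} = \rho'$, or we are inside a nested call and $\mathit{stack}$ has $\rho'$ as a proper suffix; (iii) the slot $\rho'.\mathit{ret\mbox{-}addr\mbox{-}loc}$ still holds $\mathit{ret\mbox{-}addr}$ and $sp$ has not dropped below $\rho'.\mathit{base}$; (iv) the library heap is still entirely $\untrusted$-labelled. For straight-line arithmetic, load, move, intra-function jump, and store instructions, the preliminary plus the verifier's transfer functions show the overlay side-conditions ($\oname{writeable}$, $\oname{in-same-func}$, the IFC labels, and the ``$\trusted$ value not written to $H_{\untrusted}$'' check) all hold and that (i)--(iv) are preserved while the budget strictly decreases, so the inner hypothesis applies; this also discharges the \emph{loop} case, since every overlay step consumes one unit of the induction measure. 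When the run reaches the function's $\cret{}$, (ii) and (iii) give $\mathit{stack} = \rho'$, $\oc{\Phi}.sp = \rho'.\mathit{ret\mbox{-}addr\mbox{-}loc}$, and an intact return slot, so $\oname{is-ret-addr-loc}$ holds; and $\oname{csr-restored}$ holds because the verifier's $\cret{}$ check forces each callee-save $r$ to carry abstract value $\vccallee{r}$, hence by (i) its concrete value equals the recorded entry value --- so the return step is non-error and we land in the second disjunct (or in the first if the budget runs out first).

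The crux, flagged in the paper, is the call case: a $\ccall{k}{e}$ or callback $\cgatecall{n}{e}$ to a function $g$. The type analysis resolves the target to an entry in $\dom{\pi_2(W)} \cup \dom{\pi_3(W)}$, and the dataflow analysis shows $g$'s argument slots hold $\vcinit$, hence are pushed and $\untrusted$-labelled, so $\oname{typechecks}$, $\oname{args-secure}$ and the IFC check on the target all hold; the overlay step pushes a fresh frame $\oc{\SF}_g$ on top of $\rho'$ and consumes exactly one step. I would then invoke the premise $(C, \overline{\oc{F_i}}, \overline{\oc{F_l}}) :_{W}$, which supplies $(\later W, g, C|_{\mathit{instrs}_g}) \in \Frel$; crucially $\later W$ has budget $W.n - 1$, exactly what remains after the call step, so the argument stays well-founded even when $g = \oc{F}$ (self-recursion) --- we never appeal to the statement being proved. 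Instantiating $g$'s relation at the post-call stack $[\oc{\SF}_g] \concat \rho'$, the current $R$ and $M$, and the unchanged function sets (after checking $g$'s premises: the heap is still all-$\untrusted$, $g$'s argument and return-address layout are as required) gives two cases: either $g$ never errors within its own $[\oc{\SF}_g] \concat \rho'$-bracket for the remaining budget --- and being inside that bracket entails being inside the $\rho'$-bracket, so this feeds our first disjunct --- or $g$ reaches a $\cret{}$/$\cgateret$ with $\mathit{stack} = [\oc{\SF}_g] \concat \rho'$ and takes a safe step back, which pops $\oc{\SF}_g$, restoring $\mathit{stack} = \rho'$ and $sp$; since $g$'s own $\oname{csr-restored}$ obligation restores all callee-save registers and $g$'s $\oname{writeable}$ discipline forbade writes below $\oc{\SF}_g.\mathit{base}$, our frame --- including $\rho'.\mathit{ret\mbox{-}addr\mbox{-}loc}$ --- is untouched, invariant (i)--(iv) is re-established with $\sigma$ advanced by the verifier's across-call transfer (callee-save registers preserved, others cleared, $g$'s return register marked $\vcinit$), and the inner hypothesis resumes the run with strictly fewer steps remaining. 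I expect this case to be the main obstacle, for two reasons: the step-index bookkeeping --- showing the one unit spent on the call step is exactly the $W$-to-$\later W$ gap and that this composes cleanly over chains of calls so the total never exceeds $W.n$; and relating the callee's private bracket $[\oc{\SF}_g] \concat \rho'$ to ours, i.e.\ proving ``safe inside the inner bracket'' implies ``safe inside the outer bracket'' and that the single $\ostep$ leaving the inner bracket hands control back in a state from which our invariant is recoverable.

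Once \lemref{appendix:wasm:lr:ftlr-functions} is established, \thmref{thm:wasm-in-lrel} follows immediately by unfolding $\Lrel$, instantiating $W$ as in its definition, and applying the lemma to each library function; and zero-cost security then follows by adequacy, \thmref{thm:lrel-adequacy}, together with \thmref{thm:overlay-integrity-soundness} and \thmref{thm:overlay-confidentiality-soundness}.
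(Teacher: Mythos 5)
Your proposal is correct and follows essentially the same route as the paper's proof: walk the compiled function using the soundness of the verifier's abstract interpretation to discharge the overlay monitor checks, appeal to the world-typing premise $(C, \overline{\oc{F_i}}, \overline{\oc{F_l}}) :_W$ at call sites to obtain the callee in $\Frel$ at $\later W$ (paying one step for the call), and use frame/return-slot intactness plus the CSR check at $\cret{}$/$\cgateret$ to land in the relation's second disjunct. Your version merely makes explicit the invariant and step-budget bookkeeping that the paper's sketch leaves implicit.
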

\begin{proof}
  We first unroll the assumptions of $(W, \oc{F}, c) \in \Frel$ reusing the
  variable names defined there.
  We will maintain that any steps do not step to $\oerror$ so WOLOG we will
  continually assume $n' \leq W.n$ such that $n'$ greater than the number of steps
  we have taken, otherwise the case $\oc{\Phi} \osteprhon{\rho'}{n'} \oc{\Phi'}
  \Longrightarrow \oc{\Phi'} \neq \oerror$ holds.

  By the structure of a compiled WebAssembly function and assumption we have
  that the stack and stack pointer represent $\textit{WF}.|args|$ arguments.
  The abstract interpretation ensures that if we write to $H_{\untrusted}$ then
  that value has label $\untrusted$ so the checks pass.
  We are further assured that we do not read or below the stack frame.
  The structure of a compiled WebAssembly block then lets us proceed until we
  reach one of
  \begin{enumerate*}
  \item a function call to a library function $\textit{WF}'$ such that $\textit{WF}'.\mathit{entry}.\mathit{start} \in \overline{F_l.\mathit{entry}}$,
  \item an application function $\oc{F'}$ such that $\oc{F'}.\mathit{entry} \in \overline{F_i.\mathit{entry}}$,
  \item or the end of the block.
  \end{enumerate*}

  \begin{enumerate}
  \item

    The abstract interpretation ensures that we have initialized the arguments
    $\textit{WF}'.|args| = \pi_3(W)(\textit{WF}'.\mathit{entry}.\mathit{start})$ or
    failed a dynamic type check and terminated (thus stepping to a terminal state
    that is not an $\oerror$).
    We thus set $\rho_2 = \rho'$ and see that we have constructed $\rho_2' =
    \rho_2 \concat \{ \mathit{base} \assign sp - \textit{WF}'.|args|,
    \mathit{ret\mbox{-}addr\mbox{-}loc} \assign sp, \mathit{csr\mbox{-}vals} \assign
    R(\mathbb{CSR}) \}$.
    We further set $\mathit{ret\mbox{-}addr} = pc + 1$, $A = \textit{WF}'.|args|$,
    $sp = sp$, $R = R$, $M = M$, $C = C$, $\overline{\oc{F_i}} = \overline{\oc{F_i}}$,
    and $\overline{\oc{F_l}} = \overline{\oc{F_l}}$.
    By the abstract interpreation we have that all of the remaining checks in
    $\Frel$ pass and that the instantiated $\oc{\Phi}$ is equal to our current
    state.
    We therefore instantiate $(\later \oc{F_l}, C|_{\oc{F_l}.\mathit{instrs}}) \in \Frel$.
    If this uses the remaining steps then we are done.
    Otherwise we get that we return to $pc + 1$ with all values restored and no
    new $\trusted$ values written to the library memory, and our walk through the
    block may continue.

  \item

    Identical to the case for (1).

  \item

    The end of a block is followed by a direct jump to another block $B'$, an
    indirect block $\mathit{IB}$, or we are at an exit block.
    In the case of another block $B'$ we have by the structure of compiled
    WebAssembly code that we have instantiated $B'.inputs$.
    We thus jump to the block and follow the same proof structure as detailed here.
    The same is true of an intermediate block $\mathit{IB}$ except with the
    extra steps of setting up the inputs jumping to another block $B''$.
    Lastly if we have reached the end of an exit block then we have not touched
    the pushed return address or callee-save registers and the stack pointer is in
    the expected location.
    We thus execute $\cret{}$ or $\cgateret$ and pass the overlay monitor checks.
  \end{enumerate}
\end{proof}

\begin{lemma}[FTLR for libraries] \label{thm:appendix:wasm:lr:ftlr-libraries}
  For any number of steps $n \in \nats$ and compiled WebAssembly library $L$,
  $(n, L) \in \Lrel$.
\end{lemma}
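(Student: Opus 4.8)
The plan is to obtain the theorem as an essentially immediate corollary of the per-function fundamental theorem, \lemref{appendix:wasm:lr:ftlr-functions}, since $\Lrel$ is defined simply by quantifying $\Frel$-membership over the functions of the library. Fix an arbitrary step budget $n \in \nats$ and a compiled Wasm library $L$. Unfolding the definition of $\Lrel$, the goal reduces to showing, for each $i \in \dom{L.\mathit{library}}$, that $(W, \oc{F}, L.\mathit{code}|_{\mathit{instrs}}) \in \Frel$, where $\textit{WF} = L.\mathit{library}(i)$, $W = (n, L.\mathit{interface}, \lambda j \rightarrow L.\mathit{library}(j).|args|)$, $\mathit{instrs} = \biguplus_{B \in \textit{WF}.\mathit{blocks}} [B.start, B.end]$, and $\oc{F}$ is the overlay function record built from $\textit{WF}$'s blocks, entry point, and argument count.

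First I would discharge the structural side-conditions of \lemref{appendix:wasm:lr:ftlr-functions}: that lemma asks $\textit{WF}$ to ``expect application functions in the interface with locations and types $\pi_2(W)$ and in the library with locations and types $\pi_3(W)$.'' By construction of $W$ these are exactly $L.\mathit{interface}$ and $\lambda j \rightarrow L.\mathit{library}(j).|args|$, so this is a well-formedness fact about compiled Wasm modules: every direct call in $\textit{WF}$'s native code targets the entry of some function of $L.\mathit{library}$ at its statically assigned type, every indirect-call-table entry is likewise typed, and every import/callback target lies in $L.\mathit{interface}$. This follows from the compiled-Wasm structure of \figref{fig:appendix:wasm:structure} together with the fact that the source Wasm module was well-typed, since the Lucet back end emits the module's own type annotations into the binary. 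I would also check that $\mathit{instrs}$ is a well-defined partial map, i.e.\ the blocks' instruction ranges are pairwise disjoint and cover exactly $\textit{WF}$'s body, which is again immediate from \figref{fig:appendix:wasm:structure}.

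With those conditions in hand, \lemref{appendix:wasm:lr:ftlr-functions} instantiated at $W$, $\oc{F}$, and $c = L.\mathit{code}|_{\mathit{instrs}}$ yields $(W, \oc{F}, c) \in \Frel$ directly; since $i$ was arbitrary this gives $(n, L) \in \Lrel$, and since $n$ was arbitrary the theorem follows. Note that no induction on $n$ is required at this level: the entire step-indexed argument lives inside \lemref{appendix:wasm:lr:ftlr-functions} and is ``paid for'' by the later-world hypotheses $(\later W, \oc{F_i}, C|_{\oc{F_i}.\mathit{instrs}}) \in \Frel$ occurring in the $(C, \overline{\oc{F_i}}, \overline{\oc{F_l}}) :_W$ premise of $\Frel$ — premises which $\Frel$ quantifies over rather than proof obligations we must meet here.

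The real work, which this reduction offloads onto \lemref{appendix:wasm:lr:ftlr-functions}, is matching the dataflow/abstract-interpretation invariants of \verifname{} against the overlay monitor's dynamic checks, block-by-block: that \oname{typechecks} succeeds at every call (so type-directed forward-edge CFI holds and the callee sees a statically known frame shape), that \oname{writeable} holds at every \redOstore{} (no write escapes the current frame or overwrites the saved return address), that on reaching an exit block \oname{csr-restored} and \oname{is-ret-addr} hold (callee-save registers restored, return goes to its caller), and that confidentiality labels stay $\untrusted$ wherever \verifname{} marks a value $\vcinit$, so the eventual check at \redOret{} (and the analogous \oname{args-secure} check for callbacks) succeeds. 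The delicate case there — the one I expect to be the main obstacle if \lemref{appendix:wasm:lr:ftlr-functions} were proved from scratch — is a call to another library function: one instantiates the callee's $\Frel$ membership supplied by the $:_W$ premise at $\later W$, uses it to conclude that control returns to $pc+1$ with all frame-local state and callee-save registers intact and no new $\trusted$ value written into $M_{\untrusted}$, and then resumes the block walk, taking care that the step accounting (the $n' \le W.n$ bounds and the frame-restricted reduction $\osteprhon{\rho'}{n'}$) lines up across the nested call.
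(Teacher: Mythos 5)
Your proposal is correct and matches the paper's proof, which is exactly the same one-step reduction: unroll the definition of $\Lrel$ and apply \lemref{appendix:wasm:lr:ftlr-functions} to each function of $L$ at the world $W = (n, L.\mathit{interface}, \lambda i \rightarrow L.\mathit{library}(i).|args|)$. Your additional remarks about the side conditions and about where the real work lives inside the per-function lemma are accurate commentary but not needed beyond what the paper records.
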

\begin{proof}
  By unrolling the definition of $\Lrel$ and \lemref{appendix:wasm:lr:ftlr-functions}.
\end{proof}

\begin{theorem}[Adequacy of $\Lrel$] \label{thm:appendix:lr:adequacy}
  For any number of steps $n \in \nats$, library $L$ such that $(n, L) \in
  \Lrel$, program $\oc{\Phi_0} \in \programs$ using $L$, and $n' \leq n$, if
  $\oc{\Phi_0} \ostepn{n'} \oc{\Phi'}$ then $\oc{\Phi'} \neq \oerror$.
\end{theorem}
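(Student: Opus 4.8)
The plan is to prove this by strong induction on the number of overlay steps $n'$, carrying along an invariant on reachable states rather than relying on the immediate predecessor alone: whether a single $\ostep$ can reach $\oerror$ (say, at a $\cret{}$, where the $\oname{csr-restored}$ check depends on the entire library execution so far) is not determined by the previous state. The base case $n' = 0$ is immediate since $\oc{\Phi_0} \in \programs$ and initial states are not $\oerror$. For the inductive step I would maintain an invariant $\mathsf{Inv}(\oc{\Phi}, m)$, where $m = n - (\text{steps taken})$ is the remaining budget: the overlay stack $\oc{\Phi}.\mathit{stack}$ decomposes as an alternating nest of application segments and library segments; each maximal library segment is an \emph{active} instance of $\Frel$ for the corresponding library function at a world of index $\geq m$, with $\oc{\Phi}$ reachable within that instance along $\osteprho{\rho'}$ for the instance's bottom stack $\rho'$; and when $\currentcom{\oc{\Phi}}{\_}{\trusted}$ the top segment is an application segment on which base reduction applies. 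The start state $\oc{\Phi_0}$ trivially satisfies $\mathsf{Inv}$ (one application segment, no active library calls).

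Given $\oc{\Phi_0} \ostepn{n'-1} \oc{\Phi''} \ostep \oc{\Phi'}$, the inductive hypothesis supplies $\mathsf{Inv}(\oc{\Phi''}, n - n' + 1)$, and I would case on the current instruction. (i) If $\currentcom{\oc{\Phi''}}{\_}{\trusted}$ and the step is an ordinary application step, \lemref{appendix:overlay:application-equivalent} shows the overlay mirrors base reduction, so $\oc{\Phi'} \neq \oerror$ and $\mathsf{Inv}$ is preserved. (ii) If it is a $\cgatecall{n_0}{e}$ from $\trusted$ into a library function $\oc{F}$, unfolding the hypothesis $(n, L) \in \Lrel$ gives $(W, \oc{F}, c) \in \Frel$ for the appropriate world with $W.n = n$; I would check that the post-$\cgatecall$ state emitted by the overlay rule (the frame built by $\oname{new-frame}$, the $n_0$ arguments on the stack, the return address at $sp$, $\oname{classify}_{\mathbb{C}}$ applied, and $\mathit{funcs}$ taken from $L$) is exactly an instantiation of the premises of $\Frel$ with $\rho$ the current overlay stack, so that the conclusion of $\Frel$ — already quantified over $n'' \leq W.n$ — gives $\oerror$-freedom of all subsequent $\osteprho{\rho'}$ steps and, if control returns, that it does so with $\oc{\Phi}.\mathit{stack}$ restored, re-establishing $\mathsf{Inv}$. (iii) If $\currentcom{\oc{\Phi''}}{\_}{\untrusted}$, then $\mathsf{Inv}$ places us inside an active $\Frel$ instance whose guarantee over $\osteprho{\rho'}$ directly covers the step, be it internal, a library-to-library call (handled inside $\Frel$ by the $\later$-recursion on the world), a library-to-application callback (pushing a fresh application segment; $\mathsf{Inv}$ re-established), or a $\cret{}$/$\cgateret$ popping the segment.

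The main obstacle is the mutual recursion between application-to-library calls and library-to-application callbacks, which is exactly what forces the relation to be step-indexed: every nested $\cgatecall$ in either direction must strictly decrement the step/world index so that the $\Frel$ instance it spawns lives at a strictly earlier world $\later W$, making the induction well-founded, and I would have to verify that the accounting ``remaining steps $\leq$ world index'' is preserved across every such transition as well as checking that $\osteprho{\rho'}$ composes correctly across nested segments. The residual work — confirming that $\oname{typechecks}$, $\oname{args-secure}$, $\oname{new-frame}$, and $\oname{classify}_{\mathbb{C}}$ in the $\cgatecall$/$\cgateret$ rules produce precisely the configuration $\Frel$'s premises require — is routine given \lemref{appendix:overlay:refinement}, \lemref{appendix:overlay:application-equivalent}, and the fundamental theorem (\thmref{thm:wasm-in-lrel}). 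Since $(n, L) \in \Lrel$ is assumed here rather than re-proved, adequacy adds no conceptual content beyond this unwinding and the step-index bookkeeping.
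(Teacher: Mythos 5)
Your proposal is correct and takes essentially the same route as the paper: the paper's own proof of this theorem is a one-sentence remark that adequacy follows ``by assumption for steps in the application, and by assumption about application code properly calling the library code and the unrolling of $\Lrel$ and $\Frel$,'' which is exactly the case split (application step / gated call into $L$ / step inside an active $\Frel$ instance) that your induction with the segment invariant spells out. Your added step-index bookkeeping and the explicit invariant are just a fleshed-out version of what the paper leaves implicit, not a different argument.
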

\begin{proof}
  Straightforward: by assumption for steps in the application, and by assumption about
  application code properly calling the library code and the unrolling of $\Lrel$ and
  $\Frel$.
\end{proof}

\section{Additional Image Benchmarks}
\label{appendix:img}

\begin{center}
  \includegraphics{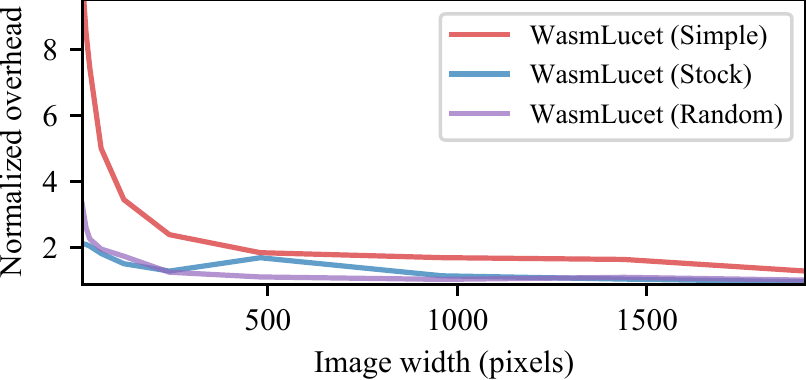}
  \captionof{figure}{
    Performance of the \trlucet heavyweight transitions included in the Lucet runtime
    on the image benchmarks in \sectionref{sec:eval}. Performance
    when rendering (a)~a simple image with one color, (b)~a stock image and
    (c)~a complex image with random pixels. The performance is the overhead
    compared to \trfast.
  }
  \label{fig:jpeg-img-lucet}
\end{center}

\begin{center}
	\includegraphics{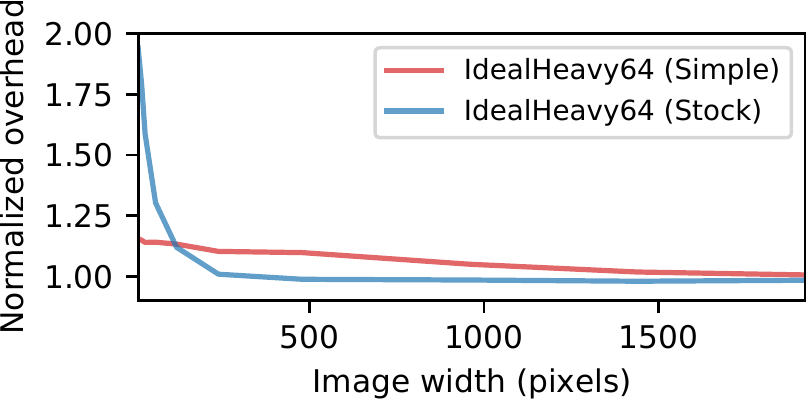}
	\captionof{figure}{
	Performance of an ideal isolation scheme (no enforcement overhead) with heavy trampolines when rendering images. Wasm compilers whose enforcement overhead is lower than this can outperform even an ideal isolation scheme that uses heavy weight transitions.
	}
	\label{fig:jpeg-img-ideal64}
\end{center}

}{}

\end{document}